\def\AdvCite{True} % SWITCH advanced citation on or off
\renewcommand*{\multicitedelim}{\addcomma\space}
\newcommand{\myhref}[1]{%
	\iffieldundef{doi}
	{\iffieldundef{url}
		{#1}
		{\href{\strfield{url}}{#1}}}
	{\href{http://dx.doi.org/\strfield{doi}}{#1}}%
}
	\newlength{\temp@x}%
	\newlength{\temp@y}%
	\newlength{\temp@w}%
	\newlength{\temp@h}%
	\def\my@coords#1#2#3#4{%
		\setlength{\temp@x}{#1}%
		\setlength{\temp@y}{#2}%
		\setlength{\temp@w}{#3}%
		\setlength{\temp@h}{#4}%
		\adjustlengths{}%
		\my@pdfliteral{\strip@pt\temp@x\space\strip@pt\temp@y\space\strip@pt\temp@w\space\strip@pt\temp@h\space re}}%
	\def\my@pdfliteral#1{\pdfliteral page{#1}}% I don't know why % this command...
	\def\adjustlengths{}%
	\def\my@pdfliteral #1{}% isn't equivalent to this one
	\def\adjustlengths{\setlength{\temp@h}{-\temp@h}\addtolength{\temp@y}{1in}\addtolength{\temp@x}{-1in}}%
	\def\Hy@colorlink#1{%
		\begingroup
		\ifHy@ocgcolorlinks
		\def\Hy@ocgcolor{#1}%
		\my@pdfliteral{q}%
		\my@pdfliteral{7 Tr}% Set text mode to clipping-only
		\else
		\HyColor@UseColor#1%
		\fi
	}%
	\def\Hy@endcolorlink{%
		\ifHy@ocgcolorlinks%
		\my@pdfliteral{/OC/OCPrint BDC}%
		\my@coords{0pt}{0pt}{\pdfpagewidth}{\pdfpageheight}%
		\my@pdfliteral{F}% Fill clipping path (the url's text) with
		% current color
		%
		\my@pdfliteral{EMC/OC/OCView BDC}%
		\begingroup%
		\expandafter\HyColor@UseColor\Hy@ocgcolor%
		\my@coords{0pt}{0pt}{\pdfpagewidth}{\pdfpageheight}%
		\my@pdfliteral{F}% Fill clipping path (the url's text)
		% with \Hy@ocgcolor
		\endgroup%
		\my@pdfliteral{EMC}%
		\my@pdfliteral{0 Tr}% Reset text to normal mode
		\my@pdfliteral{Q}%
		\fi
		\endgroup
	}%
\g@addto@macro\bfseries{\boldmath}
\g@addto@macro\mdseries{\unboldmath}
\g@addto@macro\normalfont{\unboldmath}
\g@addto@macro\rmfamily{\unboldmath}
\g@addto@macro\upshape{\unboldmath}
\renewcommand{\paragraph}[1]{\medskip\noindent{\bf #1}\xspace}
\colorlet{DarkRed}{red!50!black} 
\colorlet{DarkGreen}{green!50!black}
\colorlet{DarkBlue}{blue!50!black}
\declaretheorem[numberwithin=section]{theorem}
\declaretheorem[numberlike=theorem]{lemma}
\declaretheorem[numberlike=theorem]{proposition}
\declaretheorem[numberlike=theorem]{corollary}
\declaretheorem[numberlike=theorem]{claim}
\declaretheorem[numberlike=theorem]{observation}
\declaretheorem[numberlike=theorem]{assumption}
\crefname{algorithm}{Algorithm}{Algorithms}
\Crefname{algorithm}{Algorithm}{Algorithms}
\theoremstyle{definition}
\declaretheorem[numberlike=theorem]{definition}
\newcommand{\ot}{\tilde{O}}
\newcommand{\ignore}[1]{}
\newcommand{\cC}{\mathcal{C}\xspace}
\newcommand{\cL}{\mathcal{L}\xspace}
\newcommand{\poly}{\operatorname{poly}} 
\newcommand{\polylog}{\operatorname{polylog}}
\newcommand{\vol}{\operatorname{vol}} 
\newcommand{\blockingflow}{\operatorname{BinaryBlockingFlow}} 
\newcommand{\VC}{\operatorname{VC}} 
\newcommand{\dmin}{\deg_{\min}^{\out}}
\newcommand{\rank}{\operatorname{rank}} 
\newcommand{\out}{\operatorname{out}}  
\newcommand{\textin}{\operatorname{in}}   
\newcommand{\textflow}{\operatorname{flow}}   
\def\thatchaphol#1{\marginpar{$\leftarrow$\fbox{D}}\footnote{$\Rightarrow$~{\sf\textcolor{purple}{#1 --Thatchaphol}}}}
\def\danupon#1{\marginpar{$\leftarrow$\fbox{D}}\footnote{$\Rightarrow$~{\sf\textcolor{orange}{#1 --Danupon}}}}
\def\sorrachai#1{\marginpar{$\leftarrow$\fbox{D}}\footnote{$\Rightarrow$~{\sf\textcolor{green}{#1 --Shen}}}}
\def\note#1{#1}
\def\thatchaphol#1{}
\def\danupon#1{}
\def\shen#1{}
\def\sorrachai#1{}
\def\note#1{} 
\title{Breaking Quadratic Time for Small Vertex Connectivity and an Approximation Scheme}
\author[1]{Danupon Nanongkai}
\author[2]{Thatchaphol Saranurak\thanks{Works partially done while at KTH Royal Institute of Technology, Sweden.}}
\author[3]{Sorrachai Yingchareonthawornchai\thanks{Works partially done while at KTH Royal Institute of Technology, Sweden.}}
\affil[1]{KTH Royal Institute of Technology, Sweden}
\affil[2]{Toyota Technological Institute at Chicago, USA}
\affil[3]{Michigan State University, USA, and Aalto University, Finland}
\date{} 
\begin{document}

	\begin{titlepage}   
		\maketitle
		\pagenumbering{roman}   
		%\vspace{-.7cm}
		
\begin{abstract}
Vertex connectivity a classic extensively-studied problem. Given an integer $k$, its goal is to decide if an $n$-node $m$-edge graph can be disconnected by removing $k$ vertices. 
Although a linear-time algorithm was postulated since 1974 [Aho, Hopcroft, and Ullman], and despite its sibling problem of edge connectivity being resolved over two decades ago [Karger STOC'96], so far no vertex connectivity algorithms are faster than $O(n^2)$ time even for $k=4$ and $m=O(n)$. 
In the simplest case where $m=O(n)$ and $k=O(1)$, the $O(n^2)$ bound dates five decades back to [Kleitman IEEE Trans. Circuit Theory'69].
For higher $m$, $O(m)$ time is known for $k\leq 3$ [Tarjan FOCS'71; Hopcroft, Tarjan SICOMP'73], the first $O(n^2)$ time is from [Kanevsky, Ramachandran, FOCS'87]  for $k=4$ and from [Nagamochi, Ibaraki, Algorithmica'92] for $k=O(1)$. 
For general $k$ and $m$, the best bound is $\tilde{O}(\min(kn^2, n^\omega+nk^\omega))$ [Henzinger, Rao, Gabow FOCS'96; Linial, Lov{\'{a}}sz, Wigderson FOCS'86] where $\tilde O$ hides polylogarithmic terms and $\omega<2.38$ is the matrix multiplication exponent.

In this paper, we present a randomized Monte Carlo algorithm with $\tilde{O}(m+k^{7/3}n^{4/3})$ time for any $k=O(\sqrt{n})$. This 
gives the {\em first subquadratic time} bound for any $4\leq k \leq o(n^{2/7})$ (subquadratic time refers to $O(m)+o(n^2)$ time) and improves all above classic bounds for all $k\le n^{0.44}$.  
We also present a new randomized Monte Carlo $(1+\epsilon)$-approximation algorithm that is strictly faster than the previous Henzinger's 2-approximation algorithm [J. Algorithms'97] and all previous exact algorithms.
The story is the same for the {\em directed} case, where our exact  $\tilde{O}( \min\{km^{2/3}n, km^{4/3}\}  )$-time for any $k = O(\sqrt{n})$ and  $(1+\epsilon)$-approximation algorithms improve all previous classic bounds. Additionally, our algorithm is the first  approximation algorithm on directed graphs. 

The key to our results is to avoid computing single-source connectivity, which was needed by all previous exact algorithms and is not known to admit $o(n^2)$ time. Instead, we design the first {\em local algorithm} for computing vertex connectivity; without reading the whole graph, our algorithm can find a separator of size at most $k$ or certify that there is no separator of size at most $k$ ``near'' a given seed node.
\end{abstract}
		\newpage   
		\setcounter{tocdepth}{2}  
		\tableofcontents    
		%\newpage  
		%\listoftheorems                   
	\end{titlepage}

	\newpage         
	\pagenumbering{arabic}           
	
	\section{Introduction}

Vertex connectivity is a central concept in graph theory. The vertex
connectivity $\kappa_{G}$ of a graph $G$ is the minimum number
of the nodes needed to be removed to disconnect some remaining node
from another remaining node.  (When $G$ is directed, this means that there is no directed path from some node $u$ to some node $v$ in the remaining graph.)

Since 1969, there has been a long line
of research on efficient algorithms \cite{Kleitman1969methods,Podderyugin1973algorithm,EvenT75,Even75,Galil80,EsfahanianH84,Matula87,BeckerDDHKKMNRW82,LinialLW88,CheriyanT91,NagamochiI92,CheriyanR94,Henzinger97,HenzingerRG00,Gabow06,Censor-HillelGK14}
for \emph{deciding $k$-connectivity }(i.e. deciding if $\kappa_{G}\ge k$)
or \emph{computing the connectivity} $\kappa_{G}$ (see
\Cref{tab:compare} for details).
For the undirected case,  Aho, Hopcroft, and Ullman \cite[Problem 5.30]{AhoHU74} conjecture in 1974 that there exists an
$O(m)$-time algorithm for computing $\kappa_{G}$ on a graph with
$n$ nodes and $m$ edges. However, no algorithms to date are faster than $O(n^2)$ time  even for $k=4$.

On undirected graphs, the first $O(n^2)$ bound for the simplest case, where $m=O(n)$ and $k=O(1)$, dates back to five decades ago:
Kleitman \cite{Kleitman1969methods} in 1969 presented an algorithm for deciding $k$-connectivity with running time
$O(kn\cdot\VC_{k}(n,m))$ where $\VC_{k}(n,m)$ is the time needed
for deciding if the minimum size $s$-$t$ vertex-cut is of size at
least $\kappa$, for fixed $s,t$. Although the running time bound
was not explicitly stated, it was known that $\VC_{k}(n,m)=O(mk)$
by Ford-Fulkerson algorithm \cite{ford1956maximal}. This gives $O(k^{2}nm)$ which is $O(n^{2})$
when $m=O(n)$ and $k=O(1)$, when we plug in the 1992 result of Nagamochi and Ibaraki \cite{NagamochiI92}.  
Subsequently, Tarjan~\cite{Tarjan72} and Hopcroft and Tarjan~\cite{HopcroftT73} presented $O(m)$-time algorithms when $k$ is $2$ and $3$ respectively.

All subsequent works 
improved Kleitman's bound for larger $k$ and $m$, but none could break beyond $O(n^{2})$ time. For $k=4$ and any $m$, the first $O(n^2)$ bound was by Kanevsky and Ramachandran~\cite{KanevskyR91}. The first $O(n^2)$ for any $k=O(1)$ (and any $m$) was by Nagamochi and Ibaraki \cite{NagamochiI92}.
For general $k$ and $m$, the fastest running
times are $\tilde{O}(n^{\omega}+nk^{\omega})$ by Linial, Lov{\'a}sz and Wigderson \cite{LinialLW88} and $\tilde{O}(kn^{2})$ by Henzinger, Rao and Gabow \cite{HenzingerRG00}. Here, $\tilde O$ hides $\polylog(n)$ terms, and $\omega$ is the matrix multiplication exponent. Currently, $\omega<2.37287$	\cite{Gall14a}.

For directed graphs, an $O(m)$-time algorithm is known only for $k\le2$ by Georgiadis \cite{Georgiadis10}. For  general $k$ and $m$, the fastest running times are $\tilde{O}(n^{\omega}+nk^{\omega})$
by Cheriyan and Reif \cite{CheriyanR94} and $\tilde{O}(mn)$ by Henzinger~et~al.~\cite{HenzingerRG00}.
All mentioned state-of-the-art algorithms for general $k$ and $m$, for both directed and undirected cases  \cite{LinialLW88,CheriyanR94,HenzingerRG00}, are randomized and correct with high probability. 
The fastest deterministic algorithm is by Gabow \cite{Gabow06} and has a slower running time. 
Some {\em approximation algorithms} have also been developed. The first is the deterministic $2$-approximation $O(\min\{\sqrt{n},k\}n^{2})$-time algorithm by Henzinger \cite{Henzinger97}. The second is the recent randomized $O(\log n)$-approximation $\tilde O(m)$-time algorithm by Censor-Hillel, Ghaffari, and Kuhn \cite{Censor-HillelGK14}. Both algorithms work only on undirected graphs.

Besides a few $O(m)$-time algorithms for $k\leq 3$, all previous exact algorithms could not go beyond $O(n^2)$ for a common reason: As a subroutine, they have to solve the following problem. For a pair of nodes $s$ and $t$, let $\kappa(s,t)$
denote the minimum number of nodes (excluding $s$ and $t$) required
to be removed so that there is no path from $s$ to $t$ in the remaining
graph. In all previous algorithms,
there is always some node $s$ such that these algorithms decide if
$\kappa(s,t)\ge k$ for all other nodes $t$ (and some algorithms
in fact computes $\kappa(s,t)$ for all $t$).  
We call this problem \emph{single-source	$k$-connectivity}. 
Until now, there is no $o(n^{2})$-time algorithm for this problem even when $k=O(1)$ and $m=O(n)$. 

%\begin{comment}

\begin{table} %\label{tb:history}
	{\footnotesize
		
		\begin{tabular}{|>{\raggedright}p{0.17\textwidth}|>{\centering}p{0.25\textwidth}|>{\centering}p{0.25\textwidth}|>{\raggedright}p{0.27\textwidth}|}
			\hline 
			Reference & Directed & Undirected & Note\tabularnewline
			\hline 
			\hline 
			Trivial & \multicolumn{2}{c|}{$O(n^{2}\cdot\VC(n,m))$} & \tabularnewline
			\hline 
			\cite{Kleitman1969methods} & \multicolumn{2}{c|}{$O(kn\cdot\VC_{k}(n,m))$} & \tabularnewline
			\hline 
			\cite{Podderyugin1973algorithm,EvenT75} & \multicolumn{2}{c|}{$O(kn\cdot\VC(n,m))$} & \tabularnewline
			\hline 
			\cite{Even75} (cf. \cite{Galil80,EsfahanianH84,Matula87}) & \multicolumn{2}{c|}{$O((k^{2}+n)\cdot\VC_{k}(n,m))$} & \tabularnewline
			\hline 
			\cite{BeckerDDHKKMNRW82} & \multicolumn{2}{c|}{$\tilde{O}(n\cdot\VC(n,m))$} & Monte Carlo\tabularnewline
			\hline 
			\multirow{2}{0.2\textwidth}{\cite{LinialLW88} (\cite{CheriyanR94} for the directed case)} & \multicolumn{2}{c|}{$O((n^{\omega}+nk^{\omega})\log n)$} & Monte Carlo\tabularnewline
			\cline{2-4} 
			& \multicolumn{2}{c|}{$O((n^{\omega}+nk^{\omega})k)$} & Las Vegas\tabularnewline
			\hline 
			\cite{NagamochiI92,CheriyanT91} & - & $O(k^{3}n^{1.5}+k^{2}n^{2})$ & \tabularnewline
			\hline 
			\cite{Henzinger97} & - & $O(\min\{\sqrt{n},k\}n^{2})$ & $2$-approx.\tabularnewline
			\hline 
			\multirow{2}{0.2\textwidth}{\cite{HenzingerRG00}} & $O(mn\log n)$ & $O(kn^{2}\log n)$ & Monte Carlo\tabularnewline
			\cline{2-4} 
			& $O(\min\{n,k^{2}\}km+mn)$ & $O(\min\{n,k^{2}\}k^{2}n+\kappa n^{2})$ & \tabularnewline
			\hline 
			\cite{Gabow06} & $O(\min\{n^{3/4},k^{1.5}\}km+mn)$ & $O(\min\{n^{3/4},k^{1.5}\}k^{2}n+kn^{2})$ & \tabularnewline
			\hline 
			\cite{Censor-HillelGK14} & - & $\tilde{O}(m)$ & Monte Carlo, $O(\log n)$-approx.\tabularnewline
			\hline 
			\multirow{2}{0.2\textwidth}{\textbf{This
                                                                        paper}
                                                                        }
                  & $\tilde{O}(\min\{km^{2/3}n, km^{4/3}\}) $ & $\tilde{O}(m+k^{7/3}n^{4/3})$& Monte Carlo, for $k\le\sqrt{n}$\tabularnewline
			\cline{2-4} 
& $\tilde{O}(t_{\text{directed}}) $ & $\tilde{O}(t_{\text{undirected}})$ & Monte Carlo, $(1+\epsilon)$-approx. \tabularnewline
			%& \multicolumn{2}{c|}{$\tilde{O}(n^{\omega}/\epsilon^{2}+\VC_{k}(n,m))$} & Monte Carlo, $(1+\epsilon)$-approx. \tabularnewline
			\hline 
		\end{tabular}
		
	}
	
	\caption{\label{tab:compare}List of running time $T(k,n,m)$ of previous algorithms
		on a graph $G$ with $n$ nodes and $m$ edges for deciding if the
		vertex connectivity $\kappa_{G}\ge k$. $\protect\VC(n,m)$ is the
		time needed for finding the minimum size $s$-$t$ vertex cut for
		fixed $s,t$. $\protect\VC_{k}(n,m)$ is the time needed for either
		certifying that the minimum size $s$-$t$ vertex cut is of size at
		least $k$, or return such cut. Currently, $\protect\VC_{k}(n,m)=O(\min\{k,\sqrt{n}\}m)$.
		If $\kappa_{G}\le k$, all algorithms above can compute $\kappa_{G}$,
		and most algorithms (except those in \cite{LinialLW88,CheriyanR94})
		also return a corresponding
                separator. $t_{\text{directed}} = \poly(1/\epsilon) \min(T_{\textflow}(k,m,n), n^{\omega}) $ , and
                $t_{\text{undirected} } =  m+\poly(1/\epsilon) \min(
                k^{4/3} n^{4/3}, k^{2/3} n^{5/3+o(1)}, n^{3+o(1)}/k,
                n^{\omega} )$ where $T_{\textflow}$ is defined in \Cref{eq:exact-directed-time-intro}.}
\end{table}
%\end{comment}

\subsection{Our Results}

In this paper, we present the first algorithms that break the $O(n^2)$ bound on both undirected and undirected graphs, when $k$ is small. More precisely:

\begin{theorem}
	\label{thm:exact} There are randomized (Monte Carlo)
	algorithms that take as inputs an $n$-node $m$-edge graph $G=(V,E)$  and an integer $k=O(\sqrt{n})$, and can decide w.h.p.\footnote{We say that an event holds with high probability (w.h.p.) if it holds with probability at least $1-1/n^c$, where $c$ is an arbitrarily large constant.}
	if $\kappa_{G}\ge k$.  If $\kappa_{G}<k$,
	then the algorithms also return the corresponding separator $S\subset V$,
	i.e. a set $S$ where $|S|=\kappa_{G}$ and $G[V-S]$ is not
	connected if $G$ is undirected and not strongly connected if
        $G$ is directed. The algorithm takes
        $\tilde{O}(m+k^{7/3}n^{4/3})$  and $\tilde{O}( 
        \min(km^{2/3}n, km^{4/3} ))$ time on undirected and directed
	graphs, respectively.
\end{theorem}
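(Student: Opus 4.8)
The plan is to reduce the computation of $\kappa_G$ to a purely \emph{local} task and then solve the global problem by invoking that task from a small random set of seed nodes. I describe the undirected case; the directed case is identical with ``connected'' replaced by ``strongly connected'' and undirected flows by directed ones. It suffices to build a decision procedure that either returns \emph{some} vertex separator of size $<k$ or certifies $\kappa_G\ge k$: the exact minimum separator then follows by a binary search over the threshold value, which makes only $O(\log n)$ calls to this procedure, each with argument at most the input $k$, so the $\tilde{O}(\cdot)$ bound is unaffected. Inside the decision procedure I would first compute the minimum degree $\delta$ in $O(m)$ time; since $\kappa_G\le\delta$, if $\delta<k$ then $\kappa_G<k$ and a minimum-degree vertex whose closed neighbourhood is not all of $V$ immediately gives a separator of size $\delta<k$ (and if no such vertex exists $G$ is complete and $\kappa_G=n-1$). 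So from now on assume $\delta\ge k$.

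Now suppose $\kappa_G<k$, let $S$ be a minimum separator, write $V\setminus S=L\sqcup R$ with no $L$--$R$ edge, and take $L$ to be the side of smaller volume, so that $\vol_G(L)\le m$. Since every vertex of $L$ has degree $\ge\delta\ge k$ with all its neighbours inside $L\cup S$ and $|S|<k$, we get the two-sided estimate $k\,|L|\le\vol_G(L)\le|L|(|L|+k)$, hence $|L|=\Omega(\sqrt{\vol_G(L)})$ whenever $\vol_G(L)\ge k^2$. I would then guess the scale $\nu\in\{1,2,4,\dots,2^{\lceil\log_2 m\rceil}\}$ with $\vol_G(L)\in(\nu/2,\nu]$, and for each guess sample seeds so that with high probability at least one lands in $L$: sampling a vertex with probability proportional to its degree succeeds within $\tilde{O}(m/\nu)$ trials, whereas sampling vertices uniformly succeeds within $\tilde{O}(n/|L|)=\tilde{O}(n/\sqrt{\nu})$ trials, and one uses whichever is cheaper at the current scale (degree-proportional for volume-heavy, near-balanced cuts, where one may additionally sample a second endpoint from the heavy side $R$; uniform for volume-light cuts). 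For each sampled seed $x$ we call the local procedure below on $(x,\nu,k)$; if it ever returns a cut of size $<k$ we output it, otherwise we report $\kappa_G\ge k$.

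The heart of the proof, and the step I expect to be the main obstacle, is the local procedure: given a seed $x$, a volume bound $\nu$, and a target $k$, it must decide whether there is a vertex cut $S'$ with $|S'|<k$ such that $x$ lies in a component of $G-S'$ of $G$-volume at most $\nu$, return such an $S'$ if one exists, and do all this while scanning only $\tilde{O}(\poly(k)\cdot\nu)$ edges — in particular without reading all of $G$. I would obtain it from the classical vertex-splitting reduction of vertex connectivity to edge connectivity ($v\mapsto v_{\mathrm{in}}\!\to\!v_{\mathrm{out}}$ with unit capacity), followed by a \emph{bounded-search} maximum flow out of $x_{\mathrm{out}}$: run at most $\min\{k,\sqrt{\nu}\}$ blocking-flow phases (or $k$ augmenting-path phases), aborting any phase the moment its search has scanned more than a fixed constant times $\nu$ edges. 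Proving the correctness dichotomy is where the real work lies: if a qualifying cut exists then every search provably stays inside the $O(\nu)$-volume neighbourhood of $x$ and therefore never aborts, so after fewer than $k$ augmentations the residual reachable set of $x_{\mathrm{out}}$ exhibits such a cut; conversely, if some phase aborts, a flow/cut exchange argument must establish that no size-$<k$ cut with an $\le\nu$-volume side places $x$ on that side — the delicate point being that residual-graph searches can a priori wander arbitrarily far outside the intended low-volume region, so one needs the abort guarantee together with an argument in the spirit of known local-cut / unit-flow techniques. (In the directed case one additionally runs the symmetric search on the reverse graph, and the phase count and cost accounting differ, which produces the different shape of the directed running time.)

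Finally, summing over the $O(\log m)$ scales $\nu$ the product (number of sampled seeds) $\times$ (cost of the local procedure) for the cheaper sampling strategy at each scale, and optimising the crossover point between the two strategies using $|L|=\Omega(\sqrt{\vol_G(L)})$ together with $\vol_G(L)\le m$, yields the claimed bounds $\tilde{O}(m+k^{7/3}n^{4/3})$ for undirected graphs and $\tilde{O}(\min\{km^{2/3}n,\,km^{4/3}\})$ for directed graphs; combined with the $O(\log n)$-factor binary search of the first paragraph and the $O(m)$ preprocessing, this gives \Cref{thm:exact}.
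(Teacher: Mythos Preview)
Your high-level plan (binary search, min-degree check, geometric guessing of the small side's volume, random seeding, then a local subroutine) matches the paper. The genuine gap is the local subroutine itself, which is the heart of the paper and which your proposal does not actually construct.

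First, ``bounded-search maximum flow out of $x_{\mathrm{out}}$'' is not well-defined: a max-flow computation needs a sink, and you never say what it is. In the standard split-graph reduction the sink is a \emph{specific} vertex $y_{\mathrm{in}}$, but here you do not know any $y$ on the far side. The paper resolves this by building an \emph{augmented graph}: it adds a super-sink $t$ and an edge $(v_{\mathrm{out}},t)$ of capacity $\deg_G(v)$ for every $v$, and an edge $(s,x_{\mathrm{out}})$ of capacity roughly $\nu k$. The key fact (their Lemma~4.3) is that the $s$--$t$ min-cut in this augmented graph is $\le \nu/\epsilon+\nu$ iff a separator of the desired kind exists near $x$. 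Without this absorption-to-$t$ mechanism there is nothing forcing the flow, or the residual searches, to stay near $x$: a single augmenting path can cross the separator $S$ (only $<k$ of the $v_{\mathrm{in}}\!\to\!v_{\mathrm{out}}$ edges need to be saturated to block it, but until then the search passes through) and then explore all of $R$. Your sentence ``if a qualifying cut exists then every search provably stays inside the $O(\nu)$-volume neighbourhood'' is exactly the statement that fails for plain bounded-abort augmenting paths; you acknowledge this is ``the delicate point'' but then only gesture at ``known local-cut / unit-flow techniques'' without supplying the mechanism. That mechanism \emph{is} the augmented graph plus the absorb-before-forward discipline, and the paper spends all of Section~4 making it work.

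Second, your implicit cost for the local call is $k$ augmenting paths times $O(\nu)$ edges each, i.e.\ $\tilde O(\nu k)$. The paper only achieves $\tilde O(\nu^{3/2}k)$ for the exact version (Corollary~4.2), obtained by localising a Goldberg--Rao style blocking-flow on the augmented graph and exploiting the unit-network structure of the split graph; getting $\tilde O(\nu k)$ is listed explicitly as an open problem. So either your local procedure is stronger than what the paper proves (unlikely given the missing sink), or your final summation does not actually produce the stated bounds.

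Third, for the undirected bound $\tilde O(m+k^{7/3}n^{4/3})$ the paper first applies Nagamochi--Ibaraki sparsification to replace $m$ by $O(nk)$ before running the edge-sampling scheme with threshold $a=(nk)^{2/3}$; this is what turns $m^{4/3}k$ into $(nk)^{4/3}k=k^{7/3}n^{4/3}$. Your write-up omits this step, and without it the undirected bound does not follow from the directed one.
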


Our bounds are the {\em first} $o(n^2)$ for the range $4\leq k \leq o(n^{2/7})$ on undirected graphs and range $3\leq k \leq o(n/m^{2/3})$ on directed graphs. 
Our algorithms are combinatorial, meaning that they do not rely on fast matrix multiplication. 
For all range of $k$ that our algorithms support,
i.e. $k=O(\sqrt{n})$, our algorithms 
improve upon the previous best combinatorial algorithms by Henzinger~et~al.~\cite{HenzingerRG00}, which take
time $\tilde{O}(kn^{2})$ on undirected graphs and $\tilde{O}(mn)$
on directed graphs\footnote{As $k\le\sqrt{n}$ and $m\ge nk$, we have
  $k\le m^{1/3}$. So $km^{2/3}n\le mn$.}.
Comparing with the $\tilde{O}(n^{\omega}+nk^{\omega})$ bound based on algebraic techniques by Linial~et~al.~\cite{LinialLW88}
and Cheriyan and Reif~\cite{CheriyanR94}, our algorithms are faster on undirected graphs
when $k\le n^{3\omega/7-4/7}\approx n^{0.44}$.
For directed graphs, our algorithm is faster where the range $k$
depends on graph density.  For example, consider the interesting case the graph is sparse but can still be $k$-connected which is when $m = O(nk)$. 
Then ours is faster than \cite{CheriyanR94} for any $k\le n^{0.44}$ like the undirected case.
However, in the dense case when $m = \Omega(n^2)$, ours is faster than
\cite{CheriyanR94} for any $k \leq n^{\omega-7/3}\approx n^{0.039}$.  
% 
%when . 

To conclude, our bounds are lower than all previous bounds when $4\leq
k \leq n^{0.44}$ for undirected graphs  and $3\leq k \leq n^{0.44}$ for directed sparse graphs (i.e. when $m = O(nk)$).
All these bounds \cite{HenzingerRG00,LinialLW88,CheriyanR94} have not been broken for over $20$ years.
In the simplest case where $m = O(n)$ and, hence $k= O(1)$, we break the 49-year-old $O(n^2)$ bound \cite{Kleitman1969methods}
down to $\tilde{O}(n^{4/3})$ for both undirected and directed graphs, respectively.

\danupon{SELF NOTE:  $nk^{\omega}$ kicks in when $nk^\omega\geq n^\omega$, which is when (approximately) $k\geq 0.578$}

\paragraph{Approximation algorithms.}
We can adjust the same techniques to get  $(1+\epsilon)$-approximate
$\kappa_{G}$ with faster
running time. In addition, we give another algorithm using a different technique that can $(1+\epsilon)$-approximate   $\kappa_{G}$ 
in $\tilde{O}(n^{\omega}/\epsilon^{2})$ time.

We define the function $T_{\text{flow}}(k,m,n)$ as 

\begin{equation} \label{eq:exact-directed-time-intro}
T_{\textflow}(k,m,n) = \left\{ \begin{array}{rl}
  \min(m^{4/3}, nm^{2/3}k^{1/2} , mn^{2/3+o(1)}/k^{1/3}, n^{7/3+o(1)}/k^{1/6} )     &\mbox{ if $ k \leq n^{4/5}$,} \\
   n^{3+o(1)}/k    &\mbox{ if $k > n^{4/5}$. }
  \end{array} \right. 
\end{equation}

\begin{theorem}[Approximation Algorithm]\label{thm:intro:approx}
There is a randomized (Monte Carlo)
algorithm that takes as input an $n$-node $m$-edge graph $G=(V,E)$ and w.h.p. outputs $\tilde{\kappa}$,
where $\kappa_{G}\le\tilde{\kappa}\le(1+\epsilon)\kappa_{G}$, in 
$\ot ( m+\poly(1/\epsilon) \min( k^{4/3} n^{4/3}, k^{2/3}
n^{5/3+o(1)}, n^{3+o(1)}/k, n^{\omega} )) = \ot(\min\{n^{2.2},n^{\omega}\})$ time
for undirected graph, and in $\ot (\poly(1/\epsilon)
\min(T_{\textflow}(k,m,n), n^{\omega}) ) =  \ot(\min\{n^{7/3},n^{\omega}\})$
time for directed graph where $T_{\textflow}(k,m,n)$ is defined in
\Cref{eq:exact-directed-time-intro}.  
The algorithm also returns a pair of nodes $x$ and $y$
	where $\kappa(x,y)=\tilde{\kappa}$. Hence, with additional $O(m\min\{\sqrt{n},\tilde{\kappa}\})$	time, the algorithm can compute the corresponding separator.
\end{theorem}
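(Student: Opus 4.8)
The plan is to output $\tilde\kappa$ as the minimum, over a short sampled list of non-adjacent pairs $(x,y)$, of the value $\kappa(x,y)$ \emph{computed exactly} by one max-flow on the standard vertex-split graph --- this final computation, together with the promised separator, is what the extra $O(m\min\{\sqrt n,\tilde\kappa\})$ term accounts for. Since $\kappa(x,y)\ge\kappa_G$ for every non-adjacent pair, any such $\tilde\kappa$ satisfies $\tilde\kappa\ge\kappa_G$ automatically; the whole content is to guarantee that, w.h.p., at least one sampled pair is \emph{good}, i.e.\ has $\kappa(x,y)\le(1+\eps)\kappa_G$, while the sampling stays cheap. By a geometric search over target values $k\in\{(1+\eps)^i : i\ge 0\}$ it is enough, for each fixed $k$, to give a subroutine that w.h.p.\ returns a pair with $\kappa(x,y)\le(1+\eps)k$ whenever $\kappa_G<k$ (and may return nothing otherwise); taking the overall minimum of the recovered $\kappa(x,y)$'s then yields $\kappa_G\le\tilde\kappa\le(1+\eps)^2\kappa_G$, which after rescaling $\eps$ by a constant is the stated guarantee, the $O(\eps^{-1}\log n)$ values tried being absorbed into $\poly(1/\eps)$ and $\ot(\cdot)$; it also makes the running time scale with the true $\kappa_G$.

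Fix $k$ with $\kappa_G<k$, let $S$ be a minimum separator ($|S|=\kappa_G$), and let $L$ be the side of $G[V\setminus S]$ of smaller volume (in the directed case, one with no edge entering it from $V\setminus S$). Pick a volume threshold $\nu$ to be optimized. \emph{Balanced regime} ($\vol(L)>\nu$, so both sides have volume $>\nu$): a vertex drawn with probability proportional to its degree lands on either fixed side with probability $\Omega(\nu/m)$, so $\ot(m/\nu)$ independent draws hit both sides w.h.p.; fixing one sampled anchor $a$ and computing a single (approximate) max-flow from $a$ to a super-sink joined to all other sampled vertices, on the split graph, exposes a separated pair, which we then re-evaluate exactly. \emph{Unbalanced regime} ($\vol(L)\le\nu$): here we invoke the new \emph{local} vertex-connectivity primitive. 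Given a seed $v$ it touches only an $\ot(\nu k\cdot\poly(1/\eps))$-volume neighbourhood of $v$ in the split graph, routes a bounded (``unit'') flow, and either outputs a vertex cut of size at most $(1+\eps)k$ whose small side contains $v$, or reports that no separator of size $<k$ with small-side volume $\le\nu$ isolates a set containing $v$; sampling $\ot(m/\nu)$ seeds hits $L$ w.h.p., so some call returns such a cut $S'$, and picking $x,y$ on opposite sides of $G[V\setminus S']$ gives a pair with $\kappa(x,y)\le|S'|\le(1+\eps)k$. Balancing $\nu$ against the cost of the flow routine used in each regime --- a combinatorial blocking flow, a Sherman-type almost-linear approximate flow, or fast matrix multiplication --- yields the several expressions inside the stated minima; the separate $\ot(n^{\omega}/\eps^{2})$ bound comes instead from an algebraic approach in the spirit of Linial--Lov\'{a}sz--Wigderson run with a $(1+\eps)$-relaxed independence test, hence is listed on its own.

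The crux --- and the step I expect to be the main obstacle --- is the local primitive together with its correctness and running-time analysis in the \emph{approximate and directed} setting. Two points need care. First, one must show that a volume-bounded local computation touching only $\ot(\nu k)$ edges nevertheless detects \emph{some} separator of size $\le(1+\eps)k$ whenever the true small side $L$ has $\vol(L)\le\nu$ and contains the seed: this is exactly where the $(1+\eps)$ slack and the $\poly(1/\eps)$ factor are spent, since it lets the local search abort early and report a valid-if-suboptimal cut rather than having to certify the exact minimum, which is what forces the extra $\Theta(\kappa_G)$ factor present in the $\ot(m+k^{7/3}n^{4/3})$ exact algorithm. Second, the vertex-split reduction must be carried through for directed graphs, where $L$ has to be taken as a vertex set with no incoming edges after the cut and the local exploration is one-directional; one then verifies that a cut found in the split graph projects back to a genuine vertex separator of $G$ of the claimed size. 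Everything else --- the geometric search, the Chernoff bounds for hitting $L$ and its complement, extracting a separated pair from a disconnected (or not strongly connected) induced subgraph, and the concluding exact $\kappa(x,y)$ evaluations and separator recovery --- is routine.
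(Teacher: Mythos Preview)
Your high-level framework---balanced/unbalanced split, LocalVC for the unbalanced side, an algebraic route for the $n^{\omega}$ term---matches the paper and correctly accounts for the $k^{4/3}n^{4/3}$ and $k^{2/3}n^{5/3+o(1)}$ terms. But two of the four terms in the undirected minimum, namely $n^{3+o(1)}/k$ and $n^{\omega}$, rest on an idea your proposal does not contain, and your attribution of $n^{\omega}$ to a ``$(1+\eps)$-relaxed independence test'' is not how it is done.

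The missing ingredient is the observation (essentially Gabow's lemma) that for any optimal separation triple $(L,S,R)$ one has $|L|\ge d_{\min}^{\out}-\kappa_G$. Consequently, either $|L|\ge\eps\, d_{\min}^{\out}$, in which case both sides have size $\Omega(\eps k)$ and one needs only $\tilde O(n/(\eps^2 k))$ vertex samples (no LocalVC at all) to hit a good pair, or $|L|<\eps\, d_{\min}^{\out}$, in which case $\kappa_G\ge(1-\eps)d_{\min}^{\out}$ and the neighborhood of the minimum-degree vertex is already a $(1+O(\eps))$-approximate separator. This dichotomy is what yields $n^{3+o(1)}/k$ (sample $\tilde O(n/(\eps^2 k))$ pairs, each costing $n^{2+o(1)}$ via an approximate vertex-capacitated flow), and it is \emph{also} what yields the $n^{\omega}$ bound: the paper runs the Linial--Lov\'asz--Wigderson convex embedding \emph{exactly} (no relaxed rank test), but builds only $O(1/\eps)$ embeddings and queries only $\tilde O(n/(\eps k))$ ranks per embedding, falling back on $d_{\min}^{\out}$ when the sampling would have failed. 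Your plan, which always invokes LocalVC for the unbalanced regime and invokes an unspecified approximate rank test for the algebraic part, does not recover either of these two terms; in particular, for $k$ close to $n$ your bound would degrade rather than improve to $n^{3+o(1)}/k$. A smaller point: your balanced-case description (``a single max-flow from one anchor to a super-sink'') is ambiguous---you still need $\tilde O(m/\nu)$ anchors, not one, since you do not know a priori which sampled vertex lands in $L$; the paper simply samples $\tilde O(m/a)$ independent \emph{pairs} and evaluates each.
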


As noted earlier, previous algorithms achieve $2$-approximation in $O(\min\{\sqrt{n},k\}n^{2})$-time \cite{Henzinger97} and $O(\log n)$-approximation in $\tilde O(m)$ time \cite{Censor-HillelGK14}. For all possible values of $k$, our algorithms are strictly faster than the $2$-approximation algorithm of \cite{Henzinger97}. 

Our approximation algorithms are also strictly faster than all
previous exact algorithms with current matrix multiplication time (and
are never slower even if $\omega < 2.2$). 
 In particular,  even when 
 $\epsilon = 1/n^{\gamma}$ for small constant $\gamma>0$,
our algorithms are always polynomially faster than the exact algorithms
by \cite{HenzingerRG00} with running time $\tilde{O}(mn)$ and $\tilde{O}(kn^{2})$
on directed and undirected graphs, respectively. Compared with the bound $\tilde{O}(n^{\omega}+nk^{\omega})$
by \cite{LinialLW88} and \cite{CheriyanR94}, our bound for undirected
and directed graphs are $\ot(\min\{n^{2.2},n^{\omega}\})$ and
$\ot(\min\{n^{7/3}, n^\omega \})$, respectively for any density, which are less than current matrix multiplication time.   
%our bound for undirected graphs is always lower. For directed graphs, our bound is always lower except the ties when $k \in
%[n^{0.079},n^{0.58}]$. 
%
Finally, note that the previous approximation algorithms \cite{Henzinger97,Censor-HillelGK14} only work on undirected graphs, while we also show algorithms on directed graphs.

\subsection{The Key Technique}

At the heart of our main result in \Cref{thm:exact} is a new \emph{local
	algorithm} for finding minimum vertex cuts. In general, we say that
an algorithm is \emph{local} if its running time does not depend
on the size of the whole input.
 
More concretely, let $G=(V,E)$ be a directed graph
where each node $u$ has out-degree $\deg^{\out}(u)$. Let $\dmin=\min_{u}\deg^{\out}(u)$
be the minimum out-degree. For any set $S\subset V$, the \emph{out-volume}
of $S$ is $\vol^{\out}(S)=\sum_{u\in S}\deg^{\out}(u)$ and the set
of \emph{out-neighbors} of $S$ is $N^{\out}(S)=\{v\notin S\mid(u,v)\in E\}$.
We show the following algorithm (see \Cref{thm:local-vertex-connectivity} for more
detail):

\begin{theorem}
	[Local vertex connectivity (informal)]\label{thm:local informal}
	There is a deterministic algorithm that takes as inputs
	a node $x$ in a graph $G$ and parameters $\nu$ and $k$ where $\nu,k$ are not too large, and in $\tilde{O}(\nu^{1.5}k)$ time 
	either 
	\begin{enumerate}%[noitemsep,nolistsep]	
		\item returns a set $S\ni x$ where
		 $|N^{\out}(S)|\le k$,
		or 
		\item certifies that there is no set $S\ni x$ such that $\vol^{\out}(S)\le\nu$
		and $|N^{\out}(S)|\le k$. 
	\end{enumerate}
\end{theorem}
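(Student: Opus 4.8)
The plan is to reduce the problem, via the classical vertex-splitting gadget, to a maximum-flow computation and then to run that computation \emph{locally} — touching only $\tilde O(\nu)$ of the graph at a time — so that the whole thing costs $\tilde O(\nu^{1.5}k)$. Build the digraph $\vec G$ in which each vertex $v$ is split into $v_{\text{in}}\to v_{\text{out}}$ with a unit-capacity arc, and each arc $(u,v)$ of $G$ becomes an $\infty$-capacity arc $u_{\text{out}}\to v_{\text{in}}$. Then a set $S\ni x$ with $|N^{\out}(S)|\le k$ corresponds exactly to a vertex set $\hat S\subseteq V(\vec G)$ containing $x_{\text{in}},x_{\text{out}}$ that is ``closed'' under $\infty$-arcs (no $\infty$-arc leaves it) and from which at most $k$ split-arcs leave; the leaving split-arcs are precisely $\{v_{\text{in}}\to v_{\text{out}}:v\in N^{\out}(S)\}$, and $\vol^{\out}(S)\le\nu$ translates into a bound of $O(\nu)$ on the number of arcs of $\vec G$ incident to $\hat S$. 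Crucially, such a cut is \emph{self-certifying}: its validity (no $\infty$-arc crossing, at most $k$ split-arcs crossing) is checkable by looking only at arcs incident to $\hat S$. This is exactly what makes a local algorithm for \emph{finding} separators possible even though single-source $k$-connectivity is not known to admit $o(n^2)$ time.

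Next I would run an augmenting-path / blocking-flow computation that pushes flow out of $x_{\text{out}}$ in $\vec G$, letting it be absorbed at vertices in proportion to out-degree — formally, attach a super-sink $t$ with arcs $v_{\text{out}}\to t$ of capacity $\deg^{\out}(v)$, and a super-source with an arc of capacity $\nu+k+1$ into $x_{\text{out}}$ — but \emph{confine} every augmenting search to the arcs already carrying flow together with $O(\nu)$ freshly explored arcs. If the computation successfully routes all $\nu+k+1$ units, output Case~2: a witness $S$ could absorb at most $\vol^{\out}(S)\le\nu$ units inside $S$ and leak at most $|N^{\out}(S)|\le k$ units across its boundary, contradicting a flow value of $\nu+k+1$. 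If instead some augmenting search gets stuck, the saturated region yields a cut $\hat S$ of $\vec G$, from which we recover a set $S'\ni x$ with no arc of $G$ leaving $S'$ into the unexplored part and with a controlled number of boundary vertices; after at most $k+1$ augmentation rounds one of the two cases is reached. Each round I expect to realize by a blocking-flow subroutine of cost $\tilde O(\nu^{1.5})$ in the $O(\nu)$-arc neighbourhood of $x$ — this is where the Dinic-style $\sqrt{\nu}$ factor comes from (unit vertex capacities) — for a total of $\tilde O(\nu^{1.5}k)$. Everything here is deterministic.

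The hard part will be the confinement analysis, in two pieces. (i) Proving that whenever a witness $S$ with $\vol^{\out}(S)\le\nu$ and $|N^{\out}(S)|\le k$ exists, the entire flow computation \emph{provably} stays inside the $O(\nu)$-arc budget, so that a computation that exceeds the budget can be safely reported as Case~2; this needs a monotonicity argument tracking where flow can settle relative to $S$ (flow that leaves $S$ is capped by $|N^{\out}(S)|\le k$, so the ``mass'' outside $S$ stays bounded). (ii) Showing that a stuck search actually exposes a set $S'$ with $|N^{\out}(S')|\le k$ rather than merely $O(\nu+k)$, which requires examining the stuck residual region more carefully — e.g. peeling off sublevels of the settled flow until the boundary drops to at most $k$ — while arguing this preserves the ``$x$ on the small side'' property. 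I expect (i) and (ii), together with the $\tilde O(\nu^{1.5})$ cost bound for each confined blocking flow (and pinning down the precise ``not too large'' hypotheses on $\nu,k$ that the budgeting needs), to be the bulk of the work; the gadget reduction and the high-level augmentation loop are routine.
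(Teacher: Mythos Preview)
Your high-level framework --- split graph, a super-sink absorbing $\deg^{\out}(v)$ at each $v_{\out}$, a super-source feeding $x_{\out}$, and a blocking-flow computation confined to the explored region --- is exactly the paper's, and your Case~2 argument (if all the source flow routes, no witness exists) is correct. The genuine gap is the piece you yourself flag as ``hard part~(ii)'', and your proposed fix does not work. With unit split-arc capacities and source capacity $\nu+k+1$, a min $s$-$t$ cut of value at most $\nu+k$ only tells you that the number of split arcs crossing plus $\sum_{v_{\out}\in S'}\deg^{\out}(v)$ is at most $\nu+k$; this is perfectly consistent with $\nu$ split arcs crossing and a tiny absorbed volume, and no ``sublevel peeling'' of the settled flow forces any sublevel to have at most $k$ boundary vertices. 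The paper's key idea, which you are missing, is to set the split-arc capacity to $2\nu$ (not~$1$) and the source capacity to $2\nu k+\nu+1$. Then a min-cut of value at most $2\nu k+\nu$ can contain at most $k$ split arcs, since $k+1$ of them would already contribute $2\nu(k+1)>2\nu k+\nu$; the size-$\le k$ separator drops out of the min-cut directly, with no post-processing.

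This scaling makes the instance weighted, so unit-network Dinic no longer applies; the paper instead localizes Goldberg--Rao binary blocking flow. The explored subgraph has $O(\nu k)$ arcs but, because $\dmin\ge k$, only $O(\nu)$ vertices, and the Hopcroft--Karp-type layer argument (each split-graph node has a single in- or out-arc of bounded capacity) gives $O(\sqrt{\nu})$ phases of cost $\tilde O(\nu k)$ each, for $\tilde O(\nu^{1.5}k)$ total. Your accounting of ``$k+1$ augmentation rounds each costing $\tilde O(\nu^{1.5})$'' does not hang together: a single blocking flow on an $O(\nu)$-arc graph costs $\tilde O(\nu)$, not $\tilde O(\nu^{1.5})$; the latter is already the full Dinic cost, and you give no mechanism by which $k+1$ repetitions of it would extract a size-$k$ cut from a min-cut that need not have one.
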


Our algorithm is the first local algorithm for finding small vertex
cuts (i.e. finding a small separator $N^{\out}(S)$). The algorithm
either finds a separator of size at most $k$, or certifies that no
separator of size at most $k$ exists ``near'' some node $x$. 
Our algorithm is exact in the sense that there is no gap on the cut size $k$ in the two cases.

Previously, there was a rich literature on local algorithms for finding
\emph{low conductance cuts}\footnote{The conductance of a cut $(S,V-S)$ is defined as $\Phi(S)=\frac{|E(S,V-S)|}{\min\{\vol(S),\vol(V-S)\}}$.},
which is a different problem from ours. The study was initiated by Spielman
and Teng \cite{SpielmanT04} in 2004. Since then, deep techniques
have been further developed, such as \emph{spectral-based} techniques\footnote{They are algorithms based on some random-walk or diffusion process.} 
(e.g.
\cite{SpielmanT13,AndersenCL06,AndersenL08,AndersenP09,GharanT12})
and newer \emph{flow-based} techniques \cite{OrecchiaZ14,HenzingerRW17,WangFHMR17,VeldtGM16}).
Applications of these techniques for finding low conductance cuts
are found in various contexts (e.g. balanced cuts \cite{SpielmanT13,SaranurakW19}),
edge connectivity \cite{KawarabayashiT15,HenzingerRW17}, and dynamically
maintaining expanders \cite{Wulff-Nilsen17,NanongkaiS17,NanongkaiSW17,SaranurakW19}).

It is not clear a priori that these previous techniques can be used
for proving \Cref{thm:local informal}. First of all, they were invented to
solve a different problem, and there are several small differences in
technical input-output constraints. More importantly, there is a conceptual difference, as follows. In most previous
algorithms, there is a ``gap'' between the two cases of the guarantees.
That is, if in one case the algorithm can return a cut $S\ni x$ whose conductance
is at most $\phi\in(0,1)$, then in the other case the algorithm can only guarantee that
there is no cut ``near'' $x$ with conductance $\alpha\phi$, for some $\alpha=o(1)$
(e.g. $\alpha =  O(\phi)$ or $O(1/\log n)$)\footnote{The algorithms from  \cite{KawarabayashiT15,HenzingerRW17} in fact do not 
	guarantee non-existence of some low conductance cuts in the second case, 
	but the guarantee is about min-cuts.}.

Because of these differences, not many existing techniques can be adapted to design a local algorithm for vertex connectivity. In fact, we are not aware of any spectral-based algorithms that can solve this problem, even when we can read the whole graph. 
Fortunately, it turns out that \Cref{thm:local informal} can be proved
by adapting some recent flow-based techniques. In general, a challenge in designing flow-based algorithms is to achieve the following goals simultaneously.
%To this end, there are two main tasks: 
\begin{enumerate}
	\item Design some well-structured graph so that finding flows on this graph is useful for our application (proving \Cref{thm:local informal} in this case).
	%helps us achieve our original goal (\Cref{thm:local informal} in this case). 
	We call such graph an \emph{augmented graph}.
	\item At the same time, design a local flow-based algorithm that is fast when running on the augmented graph.

\end{enumerate}

For the first task, the design of the augmented graph requires some careful choices (see \Cref{sub:overview flow} for the high-level ideas and \Cref{sec:augmented graph} for details).
For the second task, it turns out that previous
flow-based local algorithms \cite{OrecchiaZ14,HenzingerRW17,WangFHMR17,VeldtGM16}
can be adjusted to give useful answers for our applications when run on our augmented graph. 
However, these previous algorithms only give a slower running time of at least 
$\tilde{O}((\nu k)^{1.5})$.
To obtain the $\tilde{O}(\nu^{1.5}k)$ bound, we first speed up
Goldberg-Rao max flow algorithm \cite{GoldbergR98} from running time
$\tilde{O}(m\min\{\sqrt{m},n^{2/3}\})$ to $\tilde{O}(m\sqrt{n})$ when running on a graph with certain structure.
Then, we ``localize'' this algorithm in a similar manner
as in \cite{OrecchiaZ14}, which completes our second task (see \Cref{sub:overview flow} for more discussion).\footnote{Update: In the earlier version, we claimed that we could achieve a deterministic algorithm for the  $s$-$t$ vertex connectivity problem on weighted directed graphs as a by-product. This claim is not correct, but this does not affect our main result.}

Given the key local algorithm in \Cref{thm:local informal}, we obtain
\Cref{thm:exact,thm:intro:approx} by combining our local algorithms with other known techniques including
random sampling, Ford-Fulkerson algorithm, Nagamochi Irabaki's connectivity
certificate \cite{NagamochiI92} and convex embedding \cite{LinialLW88,CheriyanR94}.
We sketch how everything fits together in \Cref{sec:overview}. 

\subsection{Updates}
After this paper first appeared in STOC 2019 \cite{NanongkaiSY-stoc19},  new local vertex connectivity algorithms were independently developed in \cite{NanongkaiSY-preSODA20} and \cite{ForsterY-preSODA20} (see \cite{ForsterNSYY-soda20} for the merged version). These algorithms are simpler than our local algorithm. Combining them with our framework leads to a nearly-linear time algorithm for the vertex connectivity problem when the connectivity is $O(\poly\log n)$. The new algorithm is still slower than the algorithm in this paper when the input graph is directed and dense and $k$ is moderately large; for $(1+\epsilon)$-approxmiation, the new algorithm is still slower when $k$ is moderately large for both undirected and directed graphs. 

More recently, \cite{LiNPSY-stoc21} shows that vertex connectivity can be solved in roughly the time to compute a max-flow. This improves the algorithm of \cite{HenzingerRG00} when the vertex connectivity is high.

\section{Overview}
\label{sec:overview}

\subsection{Exact Algorithm}
\label{sub:overview exact}

To illustrate the main idea, let us sketch our algorithm with running time $\tilde{O}(m+n^{4/3})$ only on
an {\em undirected graph} with $m=O(n)$ and $k=O(1)$. 
This regime is already very interesting, because the
best bound has been $\tilde{O}(n^{2})$ for nearly 50 years \cite{Kleitman1969methods}.
Throughout this section,  $N(C)$ is a set of neighbors of nodes in $C\subseteq V$ that are not in $C$, and $E_G(S, T)$ is the set of edges between (not necessarily disjoint) vertex sets $S$ and $T$ in $G$ (the subscript is omitted when the context is clear).  A vertex partition $(A,S,B)$ is called a {\em separation triple} if $A, B\neq \emptyset$ and there is no edge between $A$ and $B$, i.e., $N(A)=S=N(B)$. 

Given a graph $G=(V,E)$ and a parameter $k$, our goal is to either return
a set $C\subset V$ where $|N(C)| < k$ or certify that $\kappa_G\ge k$. 
Our first step is to find a sparse subgraph $H$ of $G$ where $\kappa_H=\min\{\kappa_G,k\}$
using the algorithm by Nagamochi and Ibaraki \cite{NagamochiI92}.
The nice property of $H$ is that it is formed by a union of $k$
disjoint forests, i.e. $H$ has \emph{arboricity} $k$. In particular,
for any set of nodes $C$, we have $|E_{H}(C,C)|\le k|C|$. 
As the algorithm only takes linear time, from
now, we treat $H$ as our input graph $G$. 

The next step has three cases. First, suppose there is a separation triple $(A,S,B)$
where $|S| < k$ and $|A|,|B|\ge n^{2/3}$. Here, we sample $\tilde{O}(n^{1/3})$
many pairs $(x,y)$ of nodes uniformly at random. 
With high probability, one of these pairs is such that $x\in A$ and
$y\in B$. 
In this case, it is well known (e.g. \cite{Even75}) that one can modify the graph and run a max $xy$-flow algorithm. 
Thus, for each pair $(x,y)$, we run Ford-Fulkerson max-flow algorithm in time $O(k m)=O(n)$
to decide whether $\kappa(x,y)<k$ and if so, return the corresponding cut.
So w.h.p. the algorithm returns set $C$ where $|N(C)| < k$
in total time $\tilde{O}(n^{1+1/3})$.

The next case is when all separation triples $(A,S,B)$ where $|S| < k$
are such that either $|A|<n^{2/3}$ or $|B|<n^{2/3}$. Suppose w.l.o.g.
that $|A|<n^{2/3}$. By a binary search trick, we can assume to know the
size $|A|$ up to a factor of $2$. Here, we sample $\tilde{O}(n/|A|)$
many nodes uniformly at random. For each node $x$, we run the local vertex connectivity subroutine from \Cref{thm:local informal} where the parameter $k$ in \Cref{thm:local informal} is set to be $k-1$.
Note that the volume of $A$ is
$$\vol(A)=2|E(A,A)|+|E(A,S)|=O(k|A|)=O(|A|)$$ 
where the second equality is because
$G$ has arboricity $k$ and $|S| < k$ (also recall that we only consider $m=O(n)$ and $k=O(1)$ in this subsection). We set the parameter
$\nu=\Theta(|A|)$. With high probability, we have
that one of the samples $x$ must be inside $A$. Here, the local-max-flow
cannot be in the second case, and will return a set $C$ where $|N(C)| < k$,
which implies that $\kappa_G < k$. The total running time is
$\tilde{O}(n/|A|)\times\tilde{O}(|A|^{1.5})=\tilde{O}(n^{1+1/3})$
because $|A|<n^{2/3}$.

The last case is when $\kappa_G \ge k$. Here, both of Ford-Fulkerson
algorithm and local max flow algorithm will never return any set $C$
where $|N(C)|< k$. So we can correctly report that $\kappa_G \ge k$.
All of our techniques generalize to the case when $\kappa_G$ is not constant. 
%For directed graphs, only Nagamochi and Ibaraki's \cite{NagamochiI92} algorithm does not generalize and so we cannot assume that the graph has low arboricity. This explains why we obtain a slower algorithm for directed graphs.

%
%
%
%
%

\subsection{Local Vertex Connectivity}
\label{sub:overview flow}

In this section, we give a high-level idea how to obtain our local vertex
connectivity algorithm in \Cref{thm:local informal}. Recall from the
introduction that there are two tasks which are to design an \emph{augmented
	graph} and to devise a \emph{local flow-based algorithm} running on
such augmented graph. We have two goals: 1) the running time of our algorithm is \emph{local};
i.e., it does not depend on the size of the whole graph and
2) the local flow-based algorithm's output should be useful for our
application.

\paragraph{The local time principles.}
We first describe high-level principles on how to design the augmented
graph and the local flow-based algorithm so that the running time
is local\footnote{In fact, these are also principles behind all previous local flow-based
	algorithms. To the best of our knowledge, these general principles
	have not been stated. We hope that they explain previous seemingly
	ad-hoc results.}.\thatchaphol{From reviewer1: items (1) and (2).  I found this whole discussion confusing, and probably unnecessary.  In general, the authors should strive to make the paper as short as possible, while still keeping readability.}
\begin{enumerate}
	\item \emph{Augmented graph is absorbing}: Each node $u$ of the augmented
	graph is a \emph{sink} that can ``absorb'' flow proportional to
	its degree $\deg(u)$. More formally, each node $u$ is connected
	to a \emph{super-sink} $t$ with an edge $(u,t)$ of capacity $\alpha\deg(u)$
	for some constant $\alpha$. In our case, $\alpha=1$.
	\item \emph{Flow algorithm tries to absorb before forward}: Suppose that
	a node $u$ does not fully absorb the flow yet, i.e. $(u,t)$ is not saturated.
	When a flow is routed to $u$, the local flow-based algorithm must
	first send a flow from $u$ to $t$ so that the sink at $u$ is fully
	absorbed, before forwarding to other neighbors of $u$. Moreover,
	the absorbed flow at $u$ will stay at $u$ forever. 
\end{enumerate}
We give some intuition behind these principles. The second principle
resembles the following physical process. Imagine pouring water on
a compartment of an ice tray. There cannot be water flowing out of
an unsaturated compartment until that compartment is saturated. So
if the amount of initial water is small, the process will stop way before
the water reaches the whole ice tray. This explains in principle why
the algorithm needs not read the whole graph. 

The first principle allows us to argue why the cost of the algorithm
is proportional to the part of the graph that is read. Very roughly,
the total cost for forwarding the flow from a node $u$ to its neighbors
depends on $\deg(u)$, but at the same time we forward the flow only after it is already fully absorbed at $u$. 
This allows us to charge the total
cost to the total amount of absorbed flow, which in turn is small
if the initial amount of flow is small.

\paragraph{Augmented graph. }
Let us show how to design the augmented graph in the context of \emph{edge
	connectivity} in undirected graphs first. The construction is simpler
than the case of vertex connectivity, but already captures the main
idea. We then sketch how to extend this idea to vertex connectivity.

Let $G=(V,E)$ be an undirected graph with $m$ edges and $x\in V$
be a node.  Consider any numbers $\nu, k>0$ such that  
\begin{align}
2\nu k+\nu+1&\le2m. \label{eq:intro:localcut_condition}
\end{align}
We construct an undirected graph $G'$ as follows. The
node set of $G'$ is $V(G')=\{s\}\cup V\cup\{t\}$ where $s$ and
$t$ is a super-source and a super-sink respectively. For each node
$u$, add $(u,t)$ with capacity $\deg_{G}(u)$. (So, this satisfied
the first local time principle.)
For each edge $(u,v)\in E$, set
the capacity to be $2\nu$. Finally, add an edge $(s,x)$ with capacity
$2\nu k+\nu+1$.
\begin{theorem}\label{thm:intro:split_graph}
%Assume that $2\nu k+\nu+1\le2m$, and 
	Let $F^{*}$ be the value of the s-t
	max flow in $G'$. We have the following: 
	\begin{enumerate}
		\item If $F^{*}=2\nu k+\nu+1$, then there is no vertex partition $(S,T)$ in $G$ where
		$S\ni x$, $\vol(S)\le\nu$ and $|E(S,V-S)|\le k$.
		\item If $F^{*}\le2\nu k+\nu$, then there is a vertex partition $(S,T)$ in $G$ where
		$S\ni x$ and $|E(S,V-S)|\le k$. 
	\end{enumerate}
\end{theorem}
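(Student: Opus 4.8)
The plan is to invoke the max-flow min-cut theorem and describe the structure of minimum $s$-$t$ cuts in $G'$. Every $s$-$t$ cut of $G'$ is determined by the set $\mathcal{S}$ of nodes on the source side; write $S = \mathcal{S}\cap V$ for the original nodes it contains, so $S$ ranges over all subsets of $V$. The edges of $G'$ crossing such a cut are exactly: the edge $(s,x)$ (crossing iff $x\notin S$), the sink edges $(u,t)$ for $u\in S$ (contributing $\sum_{u\in S}\deg_G(u)=\vol(S)$), and the original edges with exactly one endpoint in $S$ (contributing $2\nu\cdot|E_G(S,V-S)|$). Hence its capacity is $(2\nu k+\nu+1)\cdot[x\notin S]+\vol(S)+2\nu|E_G(S,V-S)|$, and since adding a vertex to $S$ never decreases the last two terms, by max-flow min-cut
\[ F^{*}\;=\;\min\Big(2\nu k+\nu+1,\ \min_{S\subseteq V,\ x\in S}\big(\vol(S)+2\nu\,|E_G(S,V-S)|\big)\Big). \]
In particular $F^{*}\le 2\nu k+\nu+1$ always (take $S=\emptyset$, i.e. cut only the edge $(s,x)$), so the two cases in the statement are exhaustive and mutually exclusive.

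For part~(1) I would argue the contrapositive: if there is a vertex partition $(S,T)$ with $x\in S$, $\vol(S)\le\nu$ and $|E_G(S,V-S)|\le k$, then this $S$ is one of the sets in the inner minimum above and witnesses $\vol(S)+2\nu|E_G(S,V-S)|\le \nu+2\nu k<2\nu k+\nu+1$, so $F^{*}\le 2\nu k+\nu$, i.e. $F^{*}\ne 2\nu k+\nu+1$.

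For part~(2), I would suppose $F^{*}\le 2\nu k+\nu<2\nu k+\nu+1$. By the displayed formula the minimum is then attained by the second term, so there is a set $S\ni x$ with $\vol(S)+2\nu|E_G(S,V-S)|=F^{*}\le 2\nu k+\nu$. Dropping the nonnegative term $\vol(S)$ gives $2\nu|E_G(S,V-S)|\le 2\nu k+\nu$, hence $|E_G(S,V-S)|\le k+\tfrac12$, and by integrality $|E_G(S,V-S)|\le k$. It remains to check $T:=V-S$ is nonempty: if $S=V$ then $|E_G(S,V-S)|=0$ and the bound reads $\vol(V)=2m\le 2\nu k+\nu$, contradicting the hypothesis $2\nu k+\nu+1\le 2m$ from \eqref{eq:intro:localcut_condition}. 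Thus $(S,T)$ is the desired vertex partition.

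The only step needing real care is the bookkeeping of which edges of $G'$ are severed by a given $s$-$t$ cut — in particular, that the sink edges contribute exactly $\vol(S)$ and that the source edge is cut only when $x\notin S$; the rest is elementary arithmetic. The one conceptual point worth flagging is the role of the ``$+1$'' in the capacity of $(s,x)$: it creates a unit gap between the two regimes, which is exactly what lets an \emph{integral} edge cut of size $\le k$ on one side be distinguished from its absence, and it is consistent precisely because $2\nu k+\nu+1\le 2m=\vol(V)$ prevents the trivial cut $S=V$ from ever undercutting it.
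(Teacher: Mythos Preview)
Your proof is essentially the same argument as the paper's (max-flow min-cut, then read off the structure of a minimum $s$-$t$ cut in $G'$), just organized around an explicit closed-form formula for the cut capacity rather than inspecting one specific cut at a time. Part~(1) and part~(2) match the paper's proof almost line for line; your use of integrality ($|E_G(S,V-S)|\le k+\tfrac12\Rightarrow{}\le k$) is exactly equivalent to the paper's ``if $|E_G(S,T)|\ge k+1$ then $c(E_{G'}(S',T'))\ge 2\nu(k+1)>2\nu k+\nu$''.

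One slip to fix: the clause ``since adding a vertex to $S$ never decreases the last two terms'' is false --- adding a vertex can certainly decrease $|E_G(S,V-S)|$. Fortunately you do not need monotonicity at all. The correct (and simpler) justification for your displayed formula is that when $x\notin S$ the first term already equals $2\nu k+\nu+1$ and the remaining two terms are nonnegative, so every cut with $x\notin S$ has capacity at least $2\nu k+\nu+1$, with equality at $S=\emptyset$. With that one-line replacement your argument is complete and matches the paper's.
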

\begin{proof}
	To see (1), suppose for a contradiction that there is such a partition $(S,T)$
	where $S\ni x$. Let $(S',T')=(\{s\}\cup S,T\cup\{t\})$. The edges between $S'$ and $T'$ has total capacity
	\danupon{I don't fully see the first equality.}
	$$c(E_{G'}(S',T'))=2\nu|E_{G}(S,V-S)|+\vol_{G}(S)\le2\nu k+\nu.$$ 
	So $F^*\leq 2\nu k+\nu$, 
	a contradiction. To see (2), let $(S',T')=(\{s\}\cup S,T\cup\{t\})$
	be a min $st$-cut in $G'$ corresponding to the max flow, i.e. by the min-cut max-flow theorem, the	edges between $S'$ and $T'$ has total capacity
	%dge XXX the partition $(S',T')$ has capacity

	\begin{align}\label{eq:intro:localcut_proof2}
	c(E_{G'}(S',T'))\leq 2\nu k+\nu. 
	\end{align}
	Observe that
	$S'\neq\{s\}$ and $S\ni x$ because the edge $(s,x)$ has capacity
	strictly more than $2\nu k+\nu$. Also, $T'\neq\{t\}$
	because edges between $\{s\}\cup V$ and $\{t\}$ has total capacity $\vol(V)=2m>2\nu k+\nu$ (the inequality is because of \Cref{eq:intro:localcut_condition}).
	So $(S,T)$ gives a cut in $G$ where $S\ni x$. Suppose that $|E_{G}(S,T)|\ge k+1$,
	then $c(E_{G'}(S',T'))\ge2\nu(k+1)=2\nu k+2\nu>2\nu k+\nu$ which
	contradicts \Cref{eq:intro:localcut_proof2}. 
	%the max-flow min-cut theorem. %Also, we have $\vol_{G}(S) \le c(E_{G'}(S',T')) = O(\nu k)$.	
\end{proof}

Observe that the above theorem is similar to \Cref{thm:local informal}
except that it is about edge connectivity. To extend this idea to
vertex connectivity, we use a standard transformation as used in \cite{EvenT75,HenzingerRG00}
by constructing a so-called \emph{split graph}. In our split graph,
for each node $v$, we create two nodes $v_{\text{in}}$ and $v_{\out}$.
For each edge $(u,v)$, we create an edge $(u_{\out},v_{\text{in}})$
with infinite capacity. There is an edge $(v_{\text{in}},v_{\text{out}})$
for each node $v$ as well. Observe that a cut set with finite capacity
in the split graph corresponds to a set of nodes in the original graph.
Then, we create the augmented graph of the split graph in a similar manner as above, e.g. by adding nodes $s$ and $t$ and an edge
$(s,x)$ with $2\nu k+\nu+1$. The important point is that we
set the capacity of each $(v_{\text{in}},v_{\text{out}})$ to be $2\nu$.
The proof of \Cref{thm:local informal} (except the statement about
the running time) is similar as above (see \Cref{sec:augmented graph}
for details).

\paragraph{Local flow-based algorithm.}
As discussed in the introduction, we can in fact adapt previous local flow-based algorithms 
to run on our augmented graph and they can decide the two cases in \Cref{thm:local informal} 
(i.e. whether there is a small vertex cut ``near'' a seed node $x$).
\Cref{thm:intro:split_graph} in fact already allows us to achieve this with slower running time than the desired $\tilde O(\nu^{1.5}k)$ by implementing existing local flow-based algorithms. 
For example, the algorithm by \cite{OrecchiaZ14}, which is a ``localized'' version of Goldberg-Rao algorithm \cite{GoldbergR98},
can give a slower running time of $\tilde O((\nu k)^{1.5})$. 
Other previous local flow-based algorithms that we are aware of (e.g. \cite{OrecchiaZ14,HenzingerRW17,WangFHMR17,VeldtGM16}) give even slower running time (even after appropriate adaptations).

We can speed up the time to $\tilde O(\nu^{1.5}k)$ by exploiting the fact that 
our augmented graph is created from a split graph sketched above.
This improvement resembles the idea by Hopcroft and Karp \cite{HopcroftK73} (see also \cite{EvenT75,Karzanov1974determining}) which yields
an $O(m\sqrt{n})$-time algorithm for computing $s$-$t$ {\em unweighted} vertex connectivity. 
The idea is to show that Dinic's algorithm with running time $O(m\min\{\sqrt{m},n^{2/3}\})$ on a general unit-capacity graph
can be sped up to $\tilde{O}(m\sqrt{n})$ when run on a special graph called ``unit network''.
It turns out that unit networks share some structures with our split graphs, allowing us to apply a similar idea.  
Although our improvement is based on a similar idea, it is more complicated to implement this idea on our split graph since it is weighted.

Finally, we ``localize'' our improved algorithm by enforcing the second local time principle.
Our way to localize the algorithm goes hand in hand with the way Orecchia
and Zhu \cite{OrecchiaZ14} did to the standard Goldberg-Rao algorithm
(see \Cref{sec:local binary blocking flow} for details).

\section{Preliminaries}\label{sec:prelim}

\subsection{Directed Graph}

Let $G = (V,E)$ be a \textit{directed} graph where $|V| = n$ and $|E|
= m$. We assume that $G$ is strongly-connected. Otherwise, we can list
all strongly connected components in linear time by a standard
textbook algorithm.  We also assume that $G$ is \textit{simple}. That is, $G$
does not have a duplicate edge. Otherwise, we can simplify the graph
in linear time by removing duplicate edges. 
For any edge $(u,v)$, we denote $e^{R} = (v,u)$. For any directed
graph $G = (V,E)$, the {\em reverse graph} $G^{R}$ is $G^R = (V, E^R)$ where
$E^R = \{ e^R \colon e \in E \}$.

\begin{definition}[$\delta$, $\deg$, $\vol$, $N$]
Definitions below are defined for any vertex $v$ on graph $G$ and subset of vertex $U \subseteq V$.

\begin{itemize}[noitemsep,nolistsep]
	\item $\delta_G^{\text{in}}(v) = \{(u, v)\in E\} $ and
          $\delta_G^{\text{in}}(U) = \{(x, y)\in E \colon x\notin U, y\in V\}$; i.e. they are the sets of edges entering $v$ and $U$ respectively.
	
	\item Analogously, $\delta_G^{\text{out}}(v)$ and $\delta_G^{\text{out}}(U)$ are the sets of edges leaving $v$ and $U$ respectively.
	%$\delta_G^{\text{out}}(v) =$  a set of edges leaving  $v$. $\delta_G^{\text{out}}(U) = $ a set of edges leaving $U$.\danupon{Be more formal} 
	\item $\deg_G^{\text{in}}(v) = |\delta_G^{\text{in}}(v)|$ and $\deg_G^{\text{out}}(v) = |\delta_G^{\text{out}}(v)|$; i.e. they are the numbers of edges entering and leaving $v$ respectively.
	\item 
	$\text{vol}_G^{\text{out}}(U) = \sum_{v \in U}
	\text{deg}_G^{\text{out}}(v)$ and $\text{vol}_G^{\text{in}}(U)= \sum_{v \in U}
	\text{deg}_G^{\text{in}}(v)$. Note that $\text{vol}_G^{\text{in}}(V) =
	\text{vol}_G^{\text{out}}(V) = m$. 
	\item $N_G^{\text{in}}(v) = \{u \colon (u,v)\in E\}$ and $N_G^{\text{out}}(v) = \{u \colon (v, u)\in E\}$; i.e. they are sets of in- and out-neighbors of $v$, respectively. 
	%a 
	\item  $N_G^{\text{in}}(U) = \bigcup_{v \in U}
          N_G^{\text{in}}(v) \setminus U $ and $N_G^{\text{out}}(U) =
          \bigcup_{v \in U} N_G^{\text{out}}(v) \setminus U$. Call
          these sets  {\em external in-neighborhood of $U$} and  {\em
            external out-neighborhood of $U$}, respectively. \qedhere  
\end{itemize}
\end{definition}

%\begin{comment}
\begin{definition}[Subgraphs]
For a set of vertices $U \subseteq
V$, we denote $G[U]$ as a subgraph of $G$ induced by $U$. Denote for
any vertex $v$, any subset of vertices $U \subseteq V$, any edge $e
\in E$, and any subset of edges $F \subseteq E$, 
\begin{itemize}[noitemsep,nolistsep]
\item $G \setminus v
= (V \setminus \{ v\}, E)$, 
\item $G \setminus U = (V \setminus U, E),$ 
\item $G
\setminus e = (V, E\setminus \{ e \}), $ and 
\item $ G \setminus F = (V,
E\setminus F)$. 
\end{itemize}
We say that these graphs \textit{arise} from $G$ by deleting
$v, U, e,$ and $F$, respectively. 
\end{definition}

%\paragraph{Paths.}
\begin{definition}[Paths and reachability]
	For $s,t \in V$, we say a path $P$ is an {\em $(s,t)$-path}
	if $P$ is a directed path
	starting from $s$ and ending at $t$. 
	For any $S, T \subseteq V$, we say
	$P$ is an {\em $(S,T)$-path} if $P$ starts with some vertex in $S$ and ends
	at some vertex in $T$.  
	We say that a vertex $t$ is \textit{reachable} from
	a vertex $s$ if there exists a $(s,t)$-path $P$.
	Moreover, if a node $v$ is in such path $P$, then  we say that $t$ is reachable from $s$ {\em via $v$}.
\end{definition}

\begin{definition}[Edge- and
  Vertex-cuts] \label{def:edge-or-vertex-cuts}Let $s$ and $t$ be any
  distinct vertices. Let $S, T \subset V$ be any disjoint non-empty subsets of vertices. We call any subset of edges $C \subseteq E$ (respectively any subset of vertices $U\subseteq V$):
\begin{itemize}[noitemsep,nolistsep]

\item an {\em $(S,T)$-edge-cut} (respectively an {\em
    $(S,T)$-vertex-cut} ) if there is no $(S,T)$-path in $G\setminus
  C$ (respectively if there is no $(S,T)$-path in $G\setminus U$ {\em
    and} $S \cap U = \emptyset, T \cap U = \emptyset$),
\item an {\em $(s,t)$-edge-cut} (respectively an {\em $(s,t)$-vertex-cut} ) if there is no $(s,t)$-path in $G\setminus C$ (respectively if there is no $(s,t)$-path in $G\setminus U$ {\em and} $s, t\notin U$),
\item an  {\em $s$-edge-cut} (respectively {\em $s$-vertex-cut}) if it is an $(s,t)$-edge-cut (respectively $(s,t)$-vertex-cut) for some vertex $t$, and
\item an  {\em edge-cut} (respectively {\em vertex-cut}) if it is an $(s,t)$-edge-cut (respectively $(s,t)$-vertex-cut) for some distinct vertices $s$ and $t$. In other words,  $G\setminus C$ (respectively $G\setminus U$) is not strongly connected. 
\end{itemize}	
If the 
graph has capacity function $c : E \rightarrow
\mathbb{R}_{\geq 0}$ on edges, then $c(C)= \sum_{e \in C} c_e$ is the total
capacity of the cut $C$.
\end{definition}

\begin{definition} [Edge set]
 We define $E(S,T)$ as the set of edges $\{ (u,v) \colon u \in  S, v \in T\}$. 
\end{definition}

\begin{definition}[Vertex partition]\label{def:separation_triple}
Let $S, T \subset V$.  We say that $(S,T)$ is a \textit{vertex
  partition} if $S$ and $T$ are not empty, and $S \sqcup T = V$. In
particular, $E(S,T)$ is an $(x,y)$-edge-cut for some $x \in S, y \in
T$. 
\end{definition}

\begin{definition}[Separation triple]\label{def:separation_triple}
We call $(L,S,R)$ a \textit{separation triple} if $L, S,$ and $R$
partition the vertex $V$ in $G$ where $L$ and $R$ are non-empty, and
there is no edge from $L$ to $R$. 
\end{definition}
%Note that $S$ is an $(x,y)$-vertex-cut for any $x \in L$ and $y \in R$. 

Note that, from the above definition, $S$ is an $(x,y)$-vertex-cut for
any $x \in L$ and $y \in R$. 

\begin{definition}[Shore] \label{def:shore}
We call a set of vertices $S\subseteq V$ an {\em out-vertex shore} (respectively  {\em in-vertex shore}) if
$N_G^{\text{out}}(S)$ (respectively $N_G^{\text{in}}(S)$) is a
vertex-cut.%
\end{definition} 

\begin{definition}[Vertex connectivity $\kappa$]\label{def:VertexConnectivity}
We define vertex connectivity $\kappa_G$ as the 
minimum cardinality vertex-cut or $n-1$ if no vertex cut exists. More
precisely,  for distinct $x,y \in V$, define $\kappa_G(x,y)$ as the smallest cardinality of $(x,y)$-vertex-cut if
exists. Otherwise,  we define $\kappa_G(x,y) = n-1$. Then, $\kappa _G=
\min \{ \kappa_G (x,y) \text{ } | \text{ } x,y \in V, x\neq y \}$. We
drop the subscript when $G$ is clear from the context. 
%Vertex connectivity for the undirected graph can be done by
\end{definition}

Let $d^{\out}_{\text{min}} = \min_v \text{deg}_G^{\text{out}}(v)$ and let $v_{\text{min}}$ be
any vertex whose out-degree is $d^{\out}_{\text{min}}$. If $d^{\out}_{\text{min}} = n - 1$, then $G$ is complete, meaning that $\kappa_G =
n-1$. Otherwise, $\delta_G^{\text{out}}(v_{\text{min}})$ is a
vertex-cut. Hence, $\kappa_G \leq |\delta_G^{\text{out}}(v_{\text{min}})| =
d^{\out}_{\text{min}}$.  So, we have the following observation. 

\begin{observation} \label{obs:kappa-degree}
$\kappa_G \leq  d^{\out}_{\min}$ 
\end{observation}

\begin{proposition} [\cite{HenzingerRG00}] %\sorrachai{TODO: CITE Heizinger}
There exists an algorithm that takes as input graph $G$, and two vertices $x,y
\in V$ and an integer $k > 0$ and in $\ot(\min( km )$ time outputs either an
out-vertex shore $S$ containing $x$ with $|N_G^{\text{out}}(S)| =
\kappa_G(x,y) \leq k$ and $y $ is in the corresponding in-vertex shore, or an ``$\perp$''
symbol indicating  that no such shore exists and thus $\kappa_G(x,y) > k$.  
\label{pro:easyff}
\end{proposition}

\subsection{Undirected Graph} 

Let $G = (V,E)$ be an undirected graph. We assume that $G$ is simple,
and connected. 

\begin{theorem} [\cite{NagamochiI92}] \label{thm:sparsification}
There exists an algorithm that takes as input undirected graph $G =
(V,E)$, and in $O(m)$ time outputs a sequence of forests $F_1, F_2,
\ldots, F_n$ such that  each forest subgraph $H_k= (V,
\bigcup_{i=1}^kF_i)$ is $k$-connected if $G$ is $k$-connected. $H_k$
has aboricity $k$.  For any set of vertices $S$, we have $E_{H_k}(S,S)
\leq k|S|$. In particular, the number of edges in $H_k$ is at most $kn$.

\end{theorem}

To compute vertex connectivity in an undirected graph, we turn it into
a directed graph by adding edges in forward and backward directions
and run the directed vertex connectivity algorithm. 
  
	\section{Local Vertex Connectivity}
\label{sec:local_flow}

Recall that a directed graph $G = (V,E)$ is strongly connected where $|V| = n$ and $|E| = m$. 
 \begin{theorem}
There is an algorithm that takes as input a pointer to any vertex $x \in
V$ in
an adjacency list representing a strongly-connected directed graph $G=(V, E)$,
positive integer  $\nu$ (``target volume''), positive integer $k$ (``target
$x$-vertex-cut size''), and  positive real $\epsilon$  satisfying 
\begin{align}\label{eq:local_condition}
  \nu/  \epsilon + \nu < m, \quad  (1+\epsilon)( \frac{2\nu}{\epsilon k}+ k)
  < n \quad  \mbox{and}\quad  \dmin \geq k %k \leq   2\kappa_G 
\end{align}
or, 
\begin{align}\label{eq:local_condition_dense}
  \nu/\epsilon  + (1+\epsilon) nk < m, \quad  \mbox{and}\quad  \dmin \geq
  k \end{align}

 and in $\ot( \frac{\nu^{3/2}}{\epsilon^{3/2} k^{1/2}})$ time outputs either   % \min( k
	\begin{itemize}[noitemsep,nolistsep]
		\item a vertex-cut $S$ corresponding to the separation triple $(L,S,R), x \in L$ such that   

		\begin{align}\label{eq:found_cut}
                  |S| \leq (1+\epsilon) k \quad \mbox{and}\quad
                  \vol_G^{\text{out}}(L) \leq \nu/\epsilon + \nu +1 , \mbox{or}
		\end{align}
		\item the ``$\perp$'' symbol indicating that there is no
                  separation triple $(L,S,R), x \in L$ such that  
	
		\begin{align}\label{eq:unfound_cut}
		|S| \leq k \quad\mbox{and}\quad \vol_G^{\text{out}}(L) \leq \nu.
		\end{align}
	\end{itemize}
	\label{thm:local-vertex-connectivity} 
\end{theorem} 

By setting $\epsilon = 1/(2k)$, we get the exact version for the size
of vertex-cut. Observe that \Cref{eq:found_cut} is changed to $|S|
\leq (1+ 1/(2k))k = k + 1/2$. So $|S| \leq k$ since $|S|$ and $k$ are
 integers. 
%of vertex-cut. Observe that \Cref{eq:found_cut} is changed to $|S| \leq (1+ 1/(2k))k = k + 1/2 \leq k$
%since $|S|$ and $k$ are integers. 

\begin{corollary} 	\label{cor:exact-local-vertex-connectivity} 
There is an algorithm that takes as input a pointer to any vertex $x \in
V$ in
an adjacency list representing a strongly-connected directed graph $G=(V, E)$,
positive integer  $\nu$ (``target volume''), and  positive integer $k$(``target
$x$-vertex-cut size'') satisfying
\Cref{eq:local_condition}, or \Cref{eq:local_condition_dense} where
$\epsilon = 1/(2k)$, 
 and in $\ot( \nu^{3/2}k )$ time outputs either  
	\begin{itemize}[noitemsep,nolistsep]
		\item a vertex cut $S$ corresponding to the separation triple $(L,S,R), x \in L$ such that   
		\begin{align}\label{eq:exact_found_cut}
                  |S| \leq  k \quad \mbox{and}\quad
                  \vol_G^{\text{out}}(L) \leq 2\nu k + \nu +1 , \mbox{or}
		\end{align}
		\item the ``$\perp$'' symbol indicating that there is no
                  separation triple $(L,S,R), x \in L$ such that  
		
		\begin{align}\label{eq:exact_unfound_cut}
		|S| \leq k \quad\mbox{and}\quad \vol_G^{\text{out}}(L) \leq \nu.
		\end{align}
	\end{itemize}

\end{corollary}

The rest of this section is devoted to proving the above theorem. For
the rest of this section, fix $x$, $\nu$, $k$  and $\epsilon$ as in the theorem statement. 

\subsection{Augmented Graph and Properties}
\label{sec:augmented graph}
%\paragraph{Graph $(G',c_{G'})$.}
\begin{definition}  [Augmented Graph $G'$] \label{def:aug_graph}
Given a directed uncapacitated graph $G= (V,E)$, we define a directed capacitated graph $(G',c_{G'}) =
((V',E'), c_{G'})$  where
\begin{align}\label{eq:aug_graph}
V' = V_{\text{in}} \sqcup V_{\text{out}} \sqcup \{ s,t\}  \quad\mbox{and}\quad E' = E_\nu \sqcup E_\infty \sqcup
E_{\text{deg}} \sqcup \{ (s,x_{\out}) \},
\end{align}
where $\sqcup$	denotes disjoint union of sets, $s$ and $t$ are additional vertices not in $G$, and sets in \Cref{eq:aug_graph} are defined as follows. 

\begin{itemize}[noitemsep,nolistsep]
\item For each vertex  $v \in V \setminus \{ x \} $, we create vertex
  $v_{\text{in}}$ in set $V_{\text{in}}$  and $v_{\text{out}}$ in set $V_{\text{out}} $. For the vertex $x$,  we add only $x_{\out}$ to $V_{\out}$.
\item $E_{\nu} = \{ (v_{\text{in}}, v_{\text{out}}) \colon  v \in V \setminus \{ x \}\}$.   
\item $E_{\infty} = \{ (v_{\text{out}}, w_{\text{in}}) \colon    (v,w) \in E \} $. 
\item $E_{\text{deg}} = \{ (v_{\text{out}},t) \colon  v \in V_{\text{out}} \} $. 
\end{itemize}

Finally, we define the capacity function $c_{G'} : E' \rightarrow
\mathbb{R}_{\geq 0} \cup \{ \infty \}$ as:

\begin{equation*}
c_{G'}(e)  = \begin{cases}
\nu/ (\epsilon k)          & \text{if } e = (v_{\text{in}},      v_{\text{out}}) \in E_\nu \\
\text{deg}_G^{\text{out}}(v)               & \text{if }
e = (v_{\text{out}}, t)  \in E_{\text{deg}} \\ 
\nu/\epsilon + \nu + 1 & \text{if } e = (s, x_{\out}) \\
\infty & \text{otherwise}  
\end{cases}
\end{equation*}
\end{definition}

\begin{lemma}\label{lem:aug_graph_properties}
	Let $C^*$ be the minimum-capacity $(s,t)$-cut in $G'$.  Recall
        that  $c_{G'}(C^*)$ is its capacity and $\nu$ and $k$ satisfy
        \Cref{eq:local_condition} or \Cref{eq:local_condition_dense} .
\begin{enumerate}[noitemsep,nolistsep,label=(\Roman*)]
	\item \label{item:aug_graph_properties_one} If there exists a
          separation triple $(L,S,R), x \in L$ in $G$ 

	satisfying \Cref{eq:unfound_cut},
	then $c_{G'}(C^*) \leq \nu/\epsilon + \nu$.  
	\item \label{item:aug_graph_properties_two} If $c_{G'}(C^*)
          \leq \nu / \epsilon + \nu$, then there exists a separation
          triple $(L,S,R), x \in L$ in $G$ satisfying \Cref{eq:found_cut}. 

\end{enumerate}

\end{lemma}

We prove \Cref{lem:aug_graph_properties} in the rest of this
subsection. %

We define useful notations. For $U \subseteq V $ in $G$,
define $V_{\text{out}}(U) =\{ v_{\text{out}} \text{ } | \text { } v
\in U\} \subseteq V_{\text{out}}$ in $G'$. Similarly, we define
$V_{\text{in}}(U) = \{ v_{\text{in}} \text{ } | \text { } v
\in U\} \subseteq V_{\text{in}}$ in $G'$ .

We first introduce a standard \textit{split graph} $SG$ from $G'$. 
\begin{definition} [Split graph $SG$]\label{def:sg}
Given $G'$, a \textit{split graph} $SG$ is an induced graph $SG = G'[W]$ where
$$W = V_{\text{in}} \sqcup V_{\text{out}} \sqcup \{ x \}, $$
with capacity function $c_G'(e)$ restricted to edges in $G'[W]$ where
the edge set of $G'[W]$ is $E_\nu \sqcup  E_\infty$.  
\end{definition}

\begin{proof}[Proof of
  \Cref{lem:aug_graph_properties}\ref{item:aug_graph_properties_one}]
We fix a separation triple $(L,S,R)$ given in the statement.  Since $x \in L$, $S$ is an
$(x,y)$-vertex-cut for some $y \in R$ by \Cref{def:separation_triple}.  

Let $C = \{ (u_{\text{in}}, u_{\text{out}}) \colon u \in S \}$. It is easy to see that $C$ is an $(x_{\out},y_{\textin})$-edge-cut in the split graph
$SG$. Since $S$ is an $(x,y)$-vertex-cut, there is no vertex-disjoint paths from $x$
to $y$ in $G \setminus S$. By transforming from $G$ to $G'$, vertex
$y$ in $G$ becomes $y_{\textin}$ and $y_{\out}$ in $G'$. Since $S$ separates $x$ and $y$ in $G$, by
construction of $C$, $C$ must separate $x$ and $y_{\textin}$ in $G'$ and thus in $SG$. Therefore
there is no $(x_{\out},y_{\textin})$-path  in $SG \setminus C$, and
the claim follows. 

In $G'$, we define an edge-set $C' = C \sqcup  \{ (v,t)  \text{ } |
\text{ } v \in V_{\text{out}}(L) \}$.  It is easy to see that $C'$ is an $(s,t)$-edge-cut in $G'$.   Since $C$ is an
$(x_{\out},y_{\textin})$-edge-cut in the split graph $SG$.  The graph $ G'
\setminus C$ has no $(s,V_{\text{out}} ( S \sqcup R))$-paths. 
%isthe in-vertex shore corresponding to the out-vertex shore $S$. 
Since $G$ is strongly connected, the sink vertex
$t$ in $G' \setminus C$ is reachable from $s$ via only vertices in
$V_{\text{out}}(L)$. Hence, it is enough to remove the
edge-set $\{ (v,t)  \colon v \in V_{\text{out}}(L) \}$ to
disconnect all $(s,t)$-paths in $G' \setminus C$, and the claim
follows.  
 
We now compute the capacity of the cut $C'$. 
\begin{align*} 
c_{G'}(C') &= c_{G'}(  C \sqcup  \{ (v,t) \colon v \in V_{\text{out}} (L)  \} ) \\
& = c_{G'}( C ) + c_{G'}(  \{ (v,t)  \colon v \in V_{\text{out}} (L) \}) \\
&= \nu |S|/ (\epsilon k) + \sum_{v \in S}
  \text{deg}_G^{\text{out}}(v) \\
& =  \nu |S| / (\epsilon k) + \text{vol}_G^{\text{out}}(S) \\
&  \leq  \nu/ \epsilon + \nu
\end{align*}
\begin{comment}
$c_{G'}(C') = c_{G'}(  f_x(N_G^{\text{out}}(S)) \sqcup  \{ (v,t)  \text{
} | \text{ } v \in V_{\text{out}}[S] \} ) = c_{G'}(
f_x(N_G^{\text{out}}(S))) + c_{G'}(  \{ (v,t)  \text{
} | \text{ } v \in V_{\text{out}}[S] \}) = 2\nu |N_G^{\text{out}}(S)|
+ \sum_{v \in S} \text{deg}_G^{\text{out}}(v) =  2\nu
|N_G^{\text{out}}(S)| + \text{vol}_G^{\text{out}}(S) \leq 2\nu k+
\nu$. 

\end{comment}
The last inequality follows from $|S| \leq k$ and $\text{vol}_G^{\text{out}}(S) \leq \nu$. %the given condition of the statement.

Hence, the capacity of the minimum $(s,t)$-cut $C^*$ is $ c_{G'}(C^*) \leq c_{G'}(C')
\leq \nu/\epsilon + \nu  $. 
%\sqcup )  
%is strongly connected, and %$V_{\text{out}}$
\end{proof}

Before proving \Cref{lem:aug_graph_properties}\ref{item:aug_graph_properties_two}, we
 observe structural properties of an $(s,t)$-edge-cut in $G'$.

\begin{definition}
Let $\cC$ be the set of $(s,t)$-cuts of finite capacities in $G'$. We define three
subsets of $\cC$ as,
%$$ \cC = \cC_1 \sqcup \cC_2 \sqcup \cC_3 $$ where, % can be
%                                % partitioned into % three classse
\begin{itemize}[noitemsep,nolistsep]
\item $\cC_1 =  \{ C \colon C \in \cC $, and  one side of vertices in $ G'\setminus C $ contains
  $ s $ or $ t $ as a singleton $ \} $.
\item $\cC_2 = \{ C \colon C \in \cC \setminus \cC_1$,
  and $ C $ is an $ (\{ s \}
  \sqcup V_{\text{in}}, \{ t \})$-edge-cut$  \} $.
\item $\cC_3 = \{ C \colon C \in \cC \setminus \cC_1 $,
  and $ C $ is an $
  (\{ s \}, \{v_{\text{in}},t \})\text{-edge-cut } $ for some $
  v_{\text{in}} \in V_{\text{in}} \} $. 
\end{itemize}
\label{def:cut-structure}
\end{definition}
Observe that three partitions in \Cref{def:cut-structure}
formed a complete set $\cC$ and are pairwise disjoint by
\Cref{def:edge-or-vertex-cuts}, and by the  construction of $G'$. 

\begin{observation} 
\label{obs:3partitions}
$$ \cC = \cC_1 \sqcup \cC_2 \sqcup \cC_3 $$
\end{observation}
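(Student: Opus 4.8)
The plan is to show the three families $\cC_1$, $\cC_2$, $\cC_3$ are pairwise disjoint and that their union is all of $\cC$. Since disjointness and completeness are both purely combinatorial statements about which vertices of $G'\setminus C$ lie on the source/sink side, I would fix an arbitrary $(s,t)$-cut $C\in\cC$ of finite capacity and let $(X,Y)$ be the corresponding vertex partition with $s\in X$, $t\in Y$. The whole argument reduces to a short case analysis on where the vertices of $V_{\text{in}}$ fall relative to $X$ and $Y$, together with one key structural fact: because every edge in $E_\infty$ has capacity $\infty$ and $C$ has finite capacity, $C$ cannot contain any $E_\infty$-edge, so $C\subseteq E_\nu\sqcup E_{\text{deg}}\sqcup\{(s,x_{\out})\}$.

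\emph{Disjointness.} By definition $\cC_2$ and $\cC_3$ are both subsets of $\cC\setminus\cC_1$, so $\cC_1$ is disjoint from each of them. It remains to check $\cC_2\cap\cC_3=\emptyset$. Suppose $C$ lies in both. Being in $\cC_2$, $C$ is a $(\{s\}\sqcup V_{\text{in}},\{t\})$-edge-cut, so on the cut's source side $X$ we have $\{s\}\sqcup V_{\text{in}}\subseteq X$; in particular every $v_{\text{in}}\in V_{\text{in}}$ is reachable from $s$ in $G'\setminus C$. But being in $\cC_3$, $C$ is a $(\{s\},\{v_{\text{in}},t\})$-edge-cut for some $v_{\text{in}}$, meaning that very $v_{\text{in}}$ is \emph{not} reachable from $s$ in $G'\setminus C$ — a contradiction. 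Hence $\cC_2\cap\cC_3=\emptyset$, and by \Cref{def:edge-or-vertex-cuts} and the construction of $G'$ the three families are pairwise disjoint.

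\emph{Completeness.} Take any $C\in\cC\setminus\cC_1$; I must show $C\in\cC_2\cup\cC_3$. Let $(X,Y)$ again be the induced partition with $s\in X$, $t\in Y$. Since $C\notin\cC_1$, neither $\{s\}$ nor $\{t\}$ is a singleton side, so $X\supsetneq\{s\}$ and $Y\supsetneq\{t\}$. Consider the set $R\subseteq V_{\text{in}}$ of $v_{\text{in}}$ that are reachable from $s$ in $G'\setminus C$, i.e.\ $R=V_{\text{in}}\cap X$. If $R=V_{\text{in}}$, then $\{s\}\sqcup V_{\text{in}}\subseteq X$, so no $(\{s\}\sqcup V_{\text{in}},\{t\})$-path survives in $G'\setminus C$, i.e.\ $C$ is a $(\{s\}\sqcup V_{\text{in}},\{t\})$-edge-cut and $C\in\cC_2$. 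If $R\neq V_{\text{in}}$, pick $v_{\text{in}}\in V_{\text{in}}\setminus R$; then $v_{\text{in}}\in Y$, so $v_{\text{in}}$ and $t$ are on the same (sink) side, hence there is no $(\{s\},\{v_{\text{in}},t\})$-path in $G'\setminus C$, i.e.\ $C$ is a $(\{s\},\{v_{\text{in}},t\})$-edge-cut and $C\in\cC_3$. Together with $\cC_1\subseteq\cC$ this gives $\cC=\cC_1\sqcup\cC_2\sqcup\cC_3$.

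I do not expect a genuine obstacle here; the only point requiring a little care is making sure the three cases are exhaustive and that the ``$C\notin\cC_1$'' hypothesis is used correctly — specifically that ruling out the two singleton-side configurations is exactly what lets us speak of the reachability set $R$ without degenerate behavior, and that the finite-capacity assumption forbids cutting $E_\infty$-edges so that "reachable from $s$" is the right notion of which side a vertex lies on. Everything else is bookkeeping against \Cref{def:cut-structure} and \Cref{def:aug_graph}.
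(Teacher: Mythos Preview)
Your completeness argument is correct and is considerably more than the paper gives (the paper just asserts the observation in one sentence, citing the definitions and the construction of $G'$). The problem is in your disjointness argument for $\cC_2\cap\cC_3=\emptyset$. From ``$C$ is a $(\{s\}\sqcup V_{\textin},\{t\})$-edge-cut'' you infer ``$\{s\}\sqcup V_{\textin}\subseteq X$'', i.e.\ that every $v_{\textin}$ is reachable from $s$ in $G'\setminus C$. That step does not follow: by \Cref{def:edge-or-vertex-cuts}, being a $(\{s\}\sqcup V_{\textin},\{t\})$-edge-cut only says that no vertex of $\{s\}\sqcup V_{\textin}$ can reach $t$; it says nothing about whether $s$ can reach each $v_{\textin}$.

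In fact disjointness fails as stated. Take $G$ on $\{x,a,b\}$ with edges $(x,a),(a,b),(b,a),(b,x)$, so in $G'$ we have $V_{\textin}=\{a_{\textin},b_{\textin}\}$, and let
\[
C=\{(a_{\textin},a_{\out}),\,(b_{\textin},b_{\out}),\,(x_{\out},t)\}\subseteq E_\nu\sqcup E_{\deg}.
\]
The set reachable from $s$ in $G'\setminus C$ is $\{s,x_{\out},a_{\textin}\}$, so neither side is a singleton and $C\notin\cC_1$. Both $a_{\textin}$ and $b_{\textin}$ have their unique out-edge cut, so neither can reach $t$, giving $C\in\cC_2$. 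And $b_{\textin}$ is not reachable from $s$, giving $C\in\cC_3$. Hence $\cC_2\cap\cC_3\neq\emptyset$.

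This does not damage the paper: the only downstream use of \Cref{obs:3partitions} is in \Cref{cor:goodcut}, which requires precisely the covering statement $\cC=\cC_1\cup\cC_2\cup\cC_3$ (so that a cut of capacity at most $\nu/\epsilon+\nu$, being excluded from $\cC_1$ and $\cC_2$ by \Cref{pro:cut-structure}, must lie in $\cC_3$). Your completeness proof establishes exactly that. The right repair is to weaken the observation to an ordinary union, keep your covering argument verbatim, and drop the disjointness claim.
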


\begin{proposition} We have the following lower bounds on cut capacity
  for cuts in $\cC_1 \sqcup \cC_2$. 
\begin{itemize}[noitemsep,nolistsep]
\item For all $C \in \cC_1$, $c_{G'}(C) \geq \min ( \nu/\epsilon + \nu
  +1 , m )$
\item For all $C \in \cC_2$, $c_{G'}(C) \geq \min ( \nu/\epsilon + \nu
  +1,  \max((n-(1+\epsilon)k)k, m - (1+\epsilon)nk))$ 
\end{itemize}
\label{pro:cut-structure}
\end{proposition}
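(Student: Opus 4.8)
The plan is to analyze each family $\cC_1$ and $\cC_2$ separately by writing down what an $(s,t)$-cut of finite capacity in $G'$ must look like, using the fact that the only finite-capacity edges are those in $E_\nu$ (capacity $\nu/(\epsilon k)$), those in $E_{\deg}$ (capacity $\deg_G^{\out}(v)$), and the single edge $(s,x_{\out})$ (capacity $\nu/\epsilon+\nu+1$); every edge in $E_\infty$ has infinite capacity, so a finite-capacity cut may never contain an $E_\infty$ edge. Consequently, for any $C\in\cC$ with side $(A',B')$ where $s\in A'$, $t\in B'$, closure under $E_\infty$-edges forces: whenever $v_{\out}\in A'$ and $(v,w)\in E$, then $w_{\text{in}}\in A'$; and if $w_{\text{in}}\in A'$ we still have the choice of putting $w_{\out}$ on either side (paying $\nu/(\epsilon k)$ if we split $v_{\text{in}}/v_{\out}$) or routing its flow to $t$ via $E_{\deg}$ (paying $\deg_G^{\out}(w)$). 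The cut $C$ thus decomposes as $C = \{(s,x_{\out})\text{ if }x_{\out}\in B'\} \;\sqcup\; \{(v_{\text{in}},v_{\out})\in E_\nu : v_{\text{in}}\in A', v_{\out}\in B'\} \;\sqcup\; \{(v_{\out},t) : v_{\out}\in A'\}$, and its capacity is the sum of the corresponding terms.

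For $C\in\cC_1$: by definition one side of $G'\setminus C$ is the singleton $\{s\}$ or the singleton $\{t\}$. If the $\{s\}$-side is a singleton, then $C$ must contain $(s,x_{\out})$ (capacity $\nu/\epsilon+\nu+1$), so $c_{G'}(C)\ge \nu/\epsilon+\nu+1$. If the $\{t\}$-side is a singleton, then $C$ contains every edge of $E_{\deg}$, i.e. $C\supseteq\{(v_{\out},t):v_{\out}\in V_{\out}\}$, whose total capacity is $\sum_{v} \deg_G^{\out}(v) = \vol_G^{\out}(V) = m$ (using $|V_{\out}|=n$ and that $G$ is strongly connected, so every vertex has $x_{\out}$-type or ordinary out-edges counted); hence $c_{G'}(C)\ge m$. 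Taking the worse of the two gives $c_{G'}(C)\ge\min(\nu/\epsilon+\nu+1,\,m)$.

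For $C\in\cC_2$: here $C$ is an $(\{s\}\sqcup V_{\text{in}},\{t\})$-edge-cut and $C\notin\cC_1$, so the only edges crossing are those into $t$, i.e. $C = \{(v_{\out},t): v_{\out}\in A'\}$ for the $A'$ that contains $s$ and all of $V_{\text{in}}$. Because $V_{\text{in}}\subseteq A'$ and every $E_\infty$-edge $(v_{\out},w_{\text{in}})$ forces... actually the forcing goes the other way: since all $v_{\text{in}}\in A'$ and $(v_{\text{in}},v_{\out})\in E_\nu$ is not in $C$ (that would make it a $\cC_3$-type cut), we get $v_{\out}\in A'$ for all but at most $(1+\epsilon)k$ vertices — more carefully, the set $B'\cap V_{\out}$ consists of the $v_{\out}$ whose $E_\nu$-edge is cut, but no $E_\nu$-edge is cut in $\cC_2$, so every $v_{\out}$ with $v\ne x$ lies in $A'$, and $x_{\out}\in A'$ as well (else $C\in\cC_1$). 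Thus $A'\cap V_{\out} = V_{\out}$ entirely, forcing $C=E_{\deg}$, of capacity $m$; to get the stated bound one instead allows the cut to also include some $E_\nu$-edges while staying outside $\cC_1\cup\cC_3$, and the two extremes are: leave all of $V_{\out}$ on the $s$-side and pay $c(E_{\deg})=m$ minus the degrees of the at most $(1+\epsilon)k$ vertices we are "allowed" to route via $E_\nu$ (giving $\ge m-(1+\epsilon)nk$); or route the bulk via $E_\nu$ and keep only $k$-ish vertices' out-edges, giving $\ge (n-(1+\epsilon)k)k$ by a counting/convexity argument on how few $E_{\deg}$-edges can appear. Either way $c_{G'}(C)\ge\min(\nu/\epsilon+\nu+1,\,\max((n-(1+\epsilon)k)k,\,m-(1+\epsilon)nk))$, where the $\nu/\epsilon+\nu+1$ term handles the case $x_{\out}\in B'$.

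I expect the main obstacle to be pinning down the $\cC_2$ bound cleanly: one must argue that if $c_{G'}(C)<\nu/\epsilon+\nu+1$ then the "cheap" description of $C$ forces either very many ordinary out-edges to be cut (at least $k$ vertices' worth, each vertex contributing $\ge k$ by $\dmin\ge k$, whence the $(n-(1+\epsilon)k)k$ term) or, if we insist on routing through $E_\nu$ wherever possible, the residual $E_{\deg}$ contribution is at least $m$ minus the total out-volume $(1+\epsilon)nk$ of the few vertices we spared (using the sparsification bound $m\le nk$ implicitly, or \Cref{eq:local_condition_dense}). The arithmetic relating $|A'\cap V_{\out}|$, the number of cut $E_\nu$-edges, and the two conditions \Cref{eq:local_condition}/\Cref{eq:local_condition_dense} is routine once the combinatorial structure above is fixed, so I would set up the decomposition first and then push the two extremal cases through.
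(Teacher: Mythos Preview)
Your $\cC_1$ argument is correct and matches the paper exactly.

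Your $\cC_2$ argument has a genuine confusion that you never fully recover from. You initially claim that a $\cC_2$ cut cannot contain any $E_\nu$ edge because ``that would make it a $\cC_3$-type cut.'' This is wrong: the defining property of $\cC_2$ is that \emph{every} $v_{\text{in}}$ is separated from $t$ in $G'\setminus C$, whereas $\cC_3$ requires that \emph{some} $v_{\text{in}}$ is separated from $s$. A cut can perfectly well contain some edges $(v_{\text{in}},v_{\out})\in E_\nu$ while still leaving every $v_{\text{in}}$ reachable from $s$ (via $s\to x_{\out}\to\cdots$), and that is exactly the situation for a generic $C\in\cC_2$. You notice something is off and backtrack, but your recovery (``two extremes,'' ``counting/convexity,'' ``the $\nu/\epsilon+\nu+1$ term handles the case $x_{\out}\in B'$'') is muddled and misattributes the terms of the bound.

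The paper's argument is a single clean case split that you are missing. Write $C=E_\nu^*\sqcup E_{\deg}^*$ with $E_\nu^*\subseteq E_\nu$, $E_{\deg}^*\subseteq E_{\deg}$. If $|E_\nu^*|>(1+\epsilon)k$, then the $E_\nu$ contribution alone already gives $c_{G'}(C)>\frac{\nu}{\epsilon k}\cdot(1+\epsilon)k=\nu/\epsilon+\nu$; this is where the first term of the min comes from (not from $x_{\out}\in B'$). Otherwise $|E_\nu^*|\le(1+\epsilon)k$. Now use the definition of $\cC_2$ directly: for every $v_{\text{in}}\in V_{\text{in}}$ the two-edge path $v_{\text{in}}\to v_{\out}\to t$ must be blocked, so $(v_{\text{in}},v_{\out})\in C$ or $(v_{\out},t)\in C$. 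Hence $|E_{\deg}^*|\ge n-1-|E_\nu^*|\ge n-(1+\epsilon)k$. From this single inequality \emph{both} lower bounds follow simultaneously (not as ``two extremes''): using $\dmin\ge k$ gives $c_{G'}(E_{\deg}^*)\ge(n-(1+\epsilon)k)k$; and since at most $(1+\epsilon)k$ edges of $E_{\deg}$ are \emph{absent} from $E_{\deg}^*$, each of capacity $\le n$, we get $c_{G'}(E_{\deg}^*)\ge m-(1+\epsilon)nk$. No sparsification bound or convexity is needed.
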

\begin{proof}

\begin{comment}
We first show the classification of finite $(s,t)$-cuts in $G'$. For all finite cut $C \in \cC$ in $G'$, $C \subseteq  E_\nu
\sqcup E_{\text{deg}} \sqcup \{ (s,x) \}$. If $G'\setminus C$ contains
$s$ or $t$ as a singleton component, then $C \in C_1$. Otherwise, $G'
\setminus C$ does not contain $s$ or $t$ as a singleton component. In
particular, $(s,x) \not \in C$, and $E_{\text{deg}} \not \subseteq C$. We
can write $C = E_\nu^* \sqcup E^*_{\text{deg}}$  where $E_\nu^*
\subseteq E_\nu$ and $E^*_{\text{deg}} \subseteq
E_{\text{deg}}$. Observe that $E^*_\nu \neq \emptyset$ since otherwise
$G' \setminus C$ must contain $s$ or $t$ as a singleton component.
Hence, $E^*_\nu$ is either an $(s, \{v_{\text{in}},t \})\text{-edge-cut }
$ for some $ v_{\text{in}} \in V_{\text{in}}$, or  an $(s \sqcup
V_{\text{in}},t)$-edge-cut. 
\end{comment}
 
By \Cref{def:cut-structure}, any $C \in \cC_1$ contains $(s,x_{\out})$ or
$E_{\text{deg}}$. So, $C$ has capacity $c_{G'}(C) \geq
\min(\nu/\epsilon + \nu +1, \sum_{v \in V} \text{deg}_G^{\text{out}}(v))=
\min(\nu/\epsilon  + \nu +1, m)$.  

Next,  we show that if $C \in \cC_2$, then $c_{G'}(C) \geq \min (
\nu/\epsilon + \nu + 1, \max((n-(1+\epsilon)k)k/2, m - (1+\epsilon)nk))$. By \Cref{def:cut-structure}, $C$ has finite capacity. We can write $C = E_\nu^* \sqcup E^*_{\text{deg}}$  where $E_\nu^*
\subseteq E_\nu$ and $E^*_{\text{deg}} \subseteq
E_{\text{deg}} $. 
 If $|E^*_\nu| > (1+\epsilon)k$, then, by construction of $G'$,
 $c_{G'}(C) \geq  \frac{\nu}{\epsilon k} |E_\nu^*| > \nu/\epsilon +
 \nu$. 

From now, we assume that $|E^*_\nu| \leq (1+\epsilon)k$. We show two
inequalities:

\begin{align} \label{eq:ineqPart1}
c_{G'}(C) \geq (n-(1+\epsilon)k)k
\end{align}

and

\begin{align} \label{eq:ineqPart2}
c_{G'}(C) \geq m - nk(1+\epsilon).
\end{align}

We claim that $|E^*_{\text{deg}}| \geq n-(1+\epsilon)k$.  Consider $G' \setminus C$. Let $S = \{ x \} \sqcup V_{\text{in}}$.  Observe that any $w
\in S$ cannot reach $t$ in $G' \setminus C$ since $C$ is an $ (\{ s \}
\sqcup V_{\text{in}}, \{ t \})$-edge-cut. So, for all $v_{\text{in}}
\in V_{\text{in}}$, we have $(v_{\text{in}},  v_{\text{out}}) \in C$ or
$(v_{\text{out}}, t) \in C$.  Since
$|E^*_\nu| \leq (1+\epsilon)k$, this means we can include edges of type
$(v_{\textin}, v_{\out})$ at most $(1+\epsilon)k$ edges. Hence,  the rest of the
edges must be of the form $(v,t)$, and thus $|E^*_{\text{deg}}| \geq
n- (1+\epsilon)k$. 

We now show \Cref{eq:ineqPart1}. 
 By \Cref{eq:local_condition} or \Cref{eq:local_condition_dense}, $\dmin \geq k$.% So,
Since $C = E^*_\nu \sqcup E^*_{\text{deg}}$,  we have $c_{G'}(C) \geq
c_{G'}(E^*_{\text{deg}}) \geq |E^*_{\text{deg}}| \dmin \geq (n-(1+\epsilon)k)k.$

Finally, we show \Cref{eq:ineqPart2}.  Since  $|E^*_{\text{deg}}| \geq
n-(1+\epsilon)k$, $|E_{\text{deg}} \setminus E^*_{\text{deg}} | \leq (1+\epsilon)k$, and thus
$c_{G'}(E_{\text{deg}} \setminus E^*_{\text{deg}}) \leq (1+\epsilon)nk$, (recall
each vertex has degree at most $n-1$). Therefore, 

$$c_{G'}(C) \geq c_{G'}(E^*_{\text{deg}}) = \sum_{v} \text{deg}_{G}^{\out}(v) -
c_{G'}(E_{\text{deg}} \setminus E^*_{\text{deg}}) \geq m - (1+\epsilon)nk $$

\end{proof}

\begin{corollary} \label{cor:goodcut}
For all $C \in \cC$, if $c_{G'}(C) \leq \nu/\epsilon + \nu$, then $C \in \cC_3$ 
\end{corollary}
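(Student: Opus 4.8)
The plan is to read the corollary straight off \Cref{obs:3partitions} and \Cref{pro:cut-structure}, with no new ideas needed. Fix any $C\in\cC$ with $c_{G'}(C)\le \nu/\epsilon+\nu$. By \Cref{obs:3partitions} we have $C\in\cC_1\sqcup\cC_2\sqcup\cC_3$, and these three sets are pairwise disjoint, so it suffices to show $C\notin\cC_1$ and $C\notin\cC_2$; then $C\in\cC_3$ as claimed. The mechanism throughout is simply that $\nu/\epsilon+\nu<\nu/\epsilon+\nu+1$: whenever \Cref{pro:cut-structure} lower-bounds $c_{G'}(C)$ by a $\min$ whose first entry is $\nu/\epsilon+\nu+1$, the hypothesis $c_{G'}(C)\le\nu/\epsilon+\nu$ forces the \emph{other} entry of that $\min$ to be $\le\nu/\epsilon+\nu$, and I then derive a contradiction from \Cref{eq:local_condition} or \Cref{eq:local_condition_dense}.

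First I would rule out $C\in\cC_1$. \Cref{pro:cut-structure} gives $c_{G'}(C)\ge\min(\nu/\epsilon+\nu+1,\,m)$, so $C\in\cC_1$ would force $m\le\nu/\epsilon+\nu$. Under \Cref{eq:local_condition} this directly contradicts $\nu/\epsilon+\nu<m$. Under \Cref{eq:local_condition_dense} it contradicts $\nu/\epsilon+(1+\epsilon)nk<m$, since $\nu/\epsilon+\nu\le\nu/\epsilon+(1+\epsilon)nk<m$ (using that the target volume satisfies $\nu\le(1+\epsilon)nk$ in that regime). Hence $C\notin\cC_1$.

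Next I would rule out $C\in\cC_2$. \Cref{pro:cut-structure} gives $c_{G'}(C)\ge\min(\nu/\epsilon+\nu+1,\,\max((n-(1+\epsilon)k)k,\,m-(1+\epsilon)nk))$, so $C\in\cC_2$ would force $\max((n-(1+\epsilon)k)k,\,m-(1+\epsilon)nk)\le\nu/\epsilon+\nu$. Under \Cref{eq:local_condition}: multiplying $(1+\epsilon)(\tfrac{2\nu}{\epsilon k}+k)<n$ through by $k$ and rearranging yields $(n-(1+\epsilon)k)k>(1+\epsilon)\tfrac{2\nu}{\epsilon}=\tfrac{2\nu}{\epsilon}+2\nu>\nu/\epsilon+\nu$, contradicting the first term of the $\max$. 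Under \Cref{eq:local_condition_dense}: rearranging $\nu/\epsilon+(1+\epsilon)nk<m$ gives $m-(1+\epsilon)nk>\nu/\epsilon$, and combining with the volume bound on $\nu$ available in this regime pushes the second term of the $\max$ above $\nu/\epsilon+\nu$, again a contradiction. Either way $C\notin\cC_2$, so $C\in\cC_3$, which proves the corollary.

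Since everything reduces to the two preceding results, the only real care needed — and the one place I would double-check — is that the slack between the capacity lower bounds and the threshold $\nu/\epsilon+\nu$ is genuinely positive in each regime. For \Cref{eq:local_condition} this is transparent from the two displayed inequalities; for \Cref{eq:local_condition_dense} one must be slightly careful about which term of the $\max$ is dominant and, in the $m-(1+\epsilon)nk$ branch, recover the extra additive $\nu$ (the displayed inequality on its own only gives $m-(1+\epsilon)nk>\nu/\epsilon$) from the standing bound on the target volume $\nu$ rather than from the literal inequality. That bookkeeping is the main — and fairly minor — obstacle; the rest is a direct appeal to \Cref{obs:3partitions} and \Cref{pro:cut-structure}.
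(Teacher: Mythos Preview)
Your approach is essentially the same as the paper's: reduce via \Cref{obs:3partitions} to ruling out $\cC_1$ and $\cC_2$, then invoke the lower bounds of \Cref{pro:cut-structure} together with \Cref{eq:local_condition} or \Cref{eq:local_condition_dense}; the algebra you give under \Cref{eq:local_condition} matches the paper's exactly. You are right to flag the extra additive $\nu$ under \Cref{eq:local_condition_dense} as needing care---the paper's own proof simply asserts that one ``immediately'' gets $\nu/\epsilon+\nu<m-(1+\epsilon)nk$ from $\nu/\epsilon+(1+\epsilon)nk<m$, so the bookkeeping point you raise (and the unstated bound $\nu\le(1+\epsilon)nk$ you appeal to) is a gap shared with the original rather than a defect of your argument.
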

\begin{proof}
By \Cref{obs:3partitions} and  \Cref{pro:cut-structure} , it is enough
to show that $C \not \in \cC_1$ and $C \not \in \cC_2$ using \Cref{eq:local_condition}, or
\Cref{eq:local_condition_dense}. By either \Cref{eq:local_condition} or
\Cref{eq:local_condition_dense},  $ \nu/\epsilon + \nu < m$, and thus  $C \not
\in \cC_1$. Next, we show that $C \not \in \cC_2$. It is enough to
show that $\nu/\epsilon + \nu $ is smaller than one of two terms in
max.  If \Cref{eq:local_condition} is satisfied,  then $(1+\epsilon)(
2\nu/ (\epsilon k) + k) < n $. This implies  $\nu/\epsilon + \nu < (n-
(1+\epsilon)k) k.$% 
If \Cref{eq:local_condition_dense} is satisfied, then  we immediately
get $\nu/\epsilon + \nu  < m - (1+\epsilon) nk $.
\end{proof}

We now ready to prove
\Cref{lem:aug_graph_properties}\ref{item:aug_graph_properties_two}. 
 
\begin{proof}[Proof of \Cref{lem:aug_graph_properties}\ref{item:aug_graph_properties_two}]\danupon{I
    didn't read this proof yet} 
In $G$, we show the existence of a separation triple $(L,S,R)$ where $x
\in L, |S| \leq (1+\epsilon)k$. 

The minimum $(s,t)$-cut in $G'$, $C^* $, is an $
  (s, \{v_{\text{in}},t \})\text{-edge-cut } $ (with finite capacity) for some $
  v_{\text{in}} \in V_{\text{in}} $. Since $c_{G'}(C^*) \leq 
  \nu/\epsilon + \nu $, by \Cref{cor:goodcut}, $C^* \in \cC_3$. 

We can write  $C^*= E^*_{\text{deg}} \sqcup E^*_{\nu}$ where $\emptyset \not =  E^*_{\text{deg}}
\subsetneq E_{\text{deg}} $ and $\emptyset \not = E^*_{\nu} \subsetneq E_{\nu}$ in
$G'$.  To see that $ E^*_{\nu} \not = \emptyset $, suppose otherwise, then $ C^*
$ must be in  $\cC_1$, a contradiction. 
 
It is easy to see that  $E^*_\nu$ is an $(x_{\out},v_{\text{in}})$-edge-cut in
$SG$. First of all, $E^*_\nu$ is the subset of edges in $SG$ by
\Cref{def:sg}.  Since $v_{\text{in}}$ is not reachable from $s$ in $G' \setminus C^*$
and $(s,x) \not \in C^*$, $x$ cannot reach $v_{\text{in}}$ in $G'
\setminus C^*$. Observe that  edges in $E_{\text{deg}}$ (and in
particular, $E^*_{\text{deg}}$)  have
no effect for reachability of the $(x_{\out},v_{\text{in}})$
path in $G'$.  Since $C^* =  E^*_{\text{deg}} \sqcup E^*_{\nu}$, only
edges in $E^*_\nu$ can affect the reachability of the
$(x,v_{\text{in}})$ path in $G'$.  So, when restricting $G'$ to $SG$,  $x_{\out}$ cannot reach
$v_{\text{in}}$ in $SG \setminus E^*_\nu$. Therefore, $E^*_\nu$ is an $(x_{\out},v_{\text{in}})$-edge-cut in
$SG$, and the claim follows.

To show a separation triple $(L,S,R)$, it is enough to define $S$, and show
that $S$ is an $(x,y)$-vertex-cut where $x \in L$ and $y \in R$. This
is because $L$ and $R$ can be found trivially when we remove $S$
from $G$. 

Let $S=  \{  u \in V \colon (u_{\text{in}},u_{\text{out}}) \in E^*_{\nu} \}$.  It is easy to see that $S$ is an $(x,y)$-vertex-cut in $G$ for some $y \in
V$. Since $E^*_{\nu} \not = \emptyset$ is an
$(x,y_{\textin})$-edge-cut in $SG$, $y_{\textin}$ is not reachable by
$x$  in $SG \setminus
E^*_{\nu}$. By construction of $G'$ (\Cref{def:aug_graph}), the corresponding out-vertex pair of $y_{\textin}$, $y_{\out}$, has
in-degree one from $y_{\textin}$. So, $y_{\out}$ in $SG \setminus
E^*_{\nu}$ is also not reachble by $x$. Hence, in $SG \setminus
E^*_{\nu}$, both $y_{\textin}$ and $y_{\out}$ are not reachable from
$x$. So, by the construction of $G'$,  and in $G \setminus U$, $y$ is
not reachable from $x$. Therefore, $U$ is an $(x,y)$-vertex-cut in $G$. 

%By \Cref{pro:hzbijection}, $f^{-1}_x(E^*_\nu)$ is an $x$-vertex cut in $G$. 
Next, $|S| \leq (1+\epsilon)k$ since otherwise $c_{G'}(C^*)  >
(1+\epsilon)k (\nu/(\epsilon k)) = \nu/\epsilon+\nu $, a contradiction to the capacity of $C^*$.  
 
% To show this, we show that

We next show that $\vol_G^{\text{out}}(L) \leq \nu/\epsilon + \nu +
1$. 

Let $E_{\text{deg}}(L) = \{ (v_{\text{out}},t) \colon v \in L \} $.  We claim that $ E^*_{\text{deg}} =
E_{\text{deg}}(L)$. Since $E^*_{\nu}$ is an  $(x_{\out},v_{\text{in}})$-cut in $SG$,   the graph  $G' \setminus
E^*_\nu$ has no $(s,V_{\text{out}} (S \sqcup R))$-paths. Since $G$ is strongly
connected, the sink vertex $t$ in $G' \setminus  E^*_\nu $ is
reachable from $s$ via only vertices in $V_{\text{out}}(L)$. Since $C^*$ is the minimum $(s,t)$-cut in $G'$, $E^*_{\text{deg}}$ only contains edges in $E_{\text{deg}}(L)$.  The claim follows. 
 
We now show that $\vol_G^{\text{out}}(L) \leq  \nu/\epsilon + \nu +1
$. By the previous claim, $c_{G'}(E^*_{\text{deg}}) = \sum_{v \in S}
\text{deg}_G^{\text{out}}(v) = \text{vol}_G^{\text{out}}(S).$  Also,
denote $F^*$ as the value of the maximum $(s,t)$-flow in $G'$. By
strong duality (max-flow min-cut theorem),  $C_{G'}(C^*) = F^*$. Note
that $F^* \leq \nu/\epsilon + \nu + 1$ since this corresponds to an
$(s,t)$-edge cut that contains edge $(s,x_{\out})$. Hence, 
\begin{align*}
\text{vol}_G^{\text{out}}(S) + c_{G'}(E^*_\nu) &=
c_{G'}(E^*_{\text{deg}})+ c_{G'}(E^*_\nu) \\
&= c_{G'}( E^*_{\text{deg}} \sqcup E^*_\nu) \\
&= c_{G'}(C^*) \\
& = F^*  \\ 
& \leq \nu/\epsilon + \nu +1.  %\leq O(\nu k)$. 
\end{align*}
Therefore, $\text{vol}_G^{\text{out}}(S) + c_{G'}(E^*_\nu) \leq
\nu/\epsilon + \nu +1$, and thus $\text{vol}_G^{\text{out}}(S)  \leq
\nu/\epsilon + \nu +1$ as desired.
\end{proof}

\subsection{Preliminaries for Flow Network and Binary Blocking Flow}

We define notations related flows on a capacitated directed graph $G = (V,E,c)$. We fix vertices $s$ as source and $t$ as sink.  

\begin{definition} [Flow] 
For a capacitated graph $G=(V,E,c)$, a \textit{flow} $f$ is a function
$f : E \rightarrow \mathbb{R}$ satisfying two conditions:
%with two invariants:
\begin{itemize} [noitemsep, nolistsep]
\item For any $(v,w) \in E, f(v,w) \leq c(v,w)$, i.e., the flow on
  each edge does not exceed its capacity.
%\item For any $(v,w) \in E, f(v,w) = -f(w,v)$. 
\item For any vertex $v \in V \setminus \{s, t\},  \sum_{u: (u,v) \in E}
  f(u,v) = \sum_{w: (v,w) \in E} f(v,w)$, i.e., for each vertex except
  for $s$ or $t$,  the amount of incoming flow is equal to the amount of outgoing flow.%  i.e., . 
\end{itemize}
We denote $|f| = \sum_{v : (v,t) \in E} f(v,t)$ as the value of flow $f$. 
\end{definition}

\begin{definition}[Residual graph]
Given a capacitated graph $G = (V,E,c)$ and a flow function $f$, we define the
\textit{residual graph with respect to }  $f$ as $(G,c,f) = (V, E_f, c_f)$
where $E_f$ contains all edges $(v,w) \in E$ with $c(v,w) - f(v,w) >
0$.  Note that  $f(v,w)$ can be negative if the actual flow goes from $w$
to $v$, and $E_f$ may contain reverse edge $e^R$ to the original graph
$G$. We call an edge in $E_f$ as \textit{residual edge} with
\textit{residual capacity} $c_f(v,w) = c(v,w) - f(v,w)$. An edge in
$E$ is not in $E_f$ when the amount of flow through this edge equals
its capacity. Such an edge is called an \textit{saturated edge}. We
sometimes use notation $G_f$ as the shorthand for the residual graph
$(G,c,f)$ when the context is clear. 
\end{definition}

\begin{definition}[Blocking flow] \label{def:blockingflow}
Given a capacitated graph $G = (V,E,c)$, a \textit{blocking flow} is a
flow that saturates at least one edge on every $(s,t)$-path in $G$. 
\end{definition}

We will use \Cref{def:blockingflow} mostly on the residual graph $G_f$. 
%Instead of treating all edge to have equal length, \cite{GoldbergR98}

Given a binary length function $\ell$ on $(G,c,f)$, we define
 a natural distance function to each vertex in $(G,c,f)$ under
 $\ell$. 

\begin{definition} [Distance function] \label{def:distance-function}
Given a residual graph $G_f$, and binary length function $\ell$, a
function $d_{\ell} : V \rightarrow \mathbb{Z}_{\geq 0}$ is a distance
function if $d(v)$ is the length of the shortest $(s,v)$-path in $G_f$
under the binary length function $\ell$
\end{definition}

For any $(v,w) \in E_f$, $d_{\ell}(v) +\ell(v,w) \geq d_{\ell}(w)$ by
\Cref{def:distance-function}. If $d_{\ell}(v) + \ell(v,w)=  d_{\ell}(w)$, then
we call $(v,w)$ \textit{admissible edge} under length function
$\ell$. 
 
We denote $E_a$ to be the set of admissible edges of $E_f$ in
$(G,c,f)$ under length function $\ell$.  

\begin{definition}[Admissible graph]
Given a residual graph $(G,c,f)$, and a length function function $\ell$, we
 define  an \textit{admissible graph} $A(G,c,f,\ell) = (G[E_a],c,f)$ to
 be an induced subgraph of $(G,c,f)$ that contains only admissible
 edges under length function $\ell$. 
\end{definition}

\begin{definition}[$\Delta'$-or-blocking flow]
For any $\Delta' > 0$, a flow is called a $\Delta'$-\textit{or-blocking flow} if it is a flow of
value exactly $\Delta'$, or a blocking flow. 
\end{definition}

\begin{definition} [Binary length function $\tilde \ell$] \label{def:gr-binary-length}
Given $\Delta > 0$, a capacitated graph $(G,c)$ and a flow $f$, we
define binary length functions $\hat \ell$ and $\tilde \ell$ for any edge $(u,v)$ in a residual graph $(G,c,f)$ as follows. 
%$$ \ell_(u,v) = $$
$$
\hat \ell(u,v)  = \left\{ \begin{array}{rl}
  0 &\mbox{ if residual capacity $c(u,v) - f(u,v) \geq \Delta$} \\
  1 &\mbox{ otherwise}
       \end{array} \right.
$$

Let $\hat d(v)$ be the shortest path distance between $s$ and $v$ under the
length function $\hat \ell$. We define \textit{special edge} $(u,v)$ to be
an edge $(u,v)$ such that $\hat d(u) = \hat d(v), \Delta/2 \leq c(u,v) - f(u,v)
< \Delta ,$ and $c(v,u) - f(v,u) \geq \Delta$. We define the next
length function  $\tilde \ell$.
$$
\tilde \ell(u,v)  = \left\{ \begin{array}{rl}
  0 &\mbox{ if $(u,v)$ is special} \\
  \hat \ell(u,v) &\mbox{ otherwise}
       \end{array} \right.
$$
\end{definition}

Classic near-linear time blocking flow algorithm by \cite{SleatorT83}
works only for acyclic admissible graph. Note that an admissible graph
$A(G,c,f,\tilde \ell) $ may contain cycles since an edge-length can
be zero. To handle this issue, the key idea by \cite{GoldbergR98} is
to contract all strongly connected components, and run the algorithm
by \cite{SleatorT83}. To route the flow, they construct a routing flow
network inside each strongly connected components using two directed
trees for a fixed root in the componenet. One tree is for routing
in-flow, the other one is for routing out-flow from the component. The
internal routing network ensures that each edge inside is used at most
twice. Hence, by restricting at most $\Delta/4$ amount of flow, each
edge is used at most $\Delta/2$. Since each edge in the component has
length zero, it has residual capacity at least $\Delta$. So, the result of flow augmentation
respects the edge capacity.  Finally, speical edges (with the condition related to $\Delta /2$) play an
 important role to ensure that blocking flow augmentation strictly increases the distance $d_{\tilde \ell}(t)$. 

The following lemma summarizes the sketch of aforementioned algorithm.
\begin{lemma} [\cite{GoldbergR98}] \label{lem:binaryblockingflowalg}
Let $A(G,c,f,\ell)$ be an admissible graph and $m_A$ be its number
of edges. Then, there exists an algorithm that takes as input
$A$ and $\Delta > 0$, and
in $O(m_A \log (m_A))$ time, outputs a $\Delta/4\text{-or}$-blocking flow. 
\begin{comment} either:
\begin{itemize}[noitemsep, nolistsep]
\item a flow of value $\Delta/4$, or
\item a blocking flow. 
\end{itemize}
\end{comment}
We call the algorithm as $\blockingflow (A(G,c,f,\ell), \Delta)$. 
\end{lemma}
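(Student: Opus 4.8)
The plan is to implement the blocking-flow routine of Goldberg and Rao \cite{GoldbergR98} essentially verbatim; the difficulty it handles is exactly that the admissible graph $A=A(G,c,f,\ell)$ need not be acyclic, since an edge may have length $0$. The first fact I would establish is that every edge lying inside a strongly connected component of $A$ has $\ell$-length $0$: such an edge lies on a directed cycle, and summing the admissibility equality $d_\ell(u)+\ell(u,v)=d_\ell(v)$ around that cycle forces $\sum\ell=0$, so each term vanishes because $\ell\ge 0$. By the construction of the Goldberg--Rao length functions this means every component-internal edge has residual capacity at least $\Delta/2$. I would then contract each strongly connected component of $A$ to a single vertex, obtaining a DAG $\bar A$ with $O(m_A)$ vertices and edges in which each inter-component edge retains its residual capacity; this costs $O(m_A)$.

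Next I would compute a blocking flow of $\bar A$ by the standard dynamic-tree method \cite{SleatorT83}, but halt the augmentation the moment the total flow value reaches $\Delta/4$. If it halts early we return a flow of value exactly $\Delta/4$; otherwise it finishes with a blocking flow $\bar f$ of $\bar A$. In both cases the run takes $O(m_A\log m_A)$ time and $|\bar f|\le\Delta/4$, the second property being the whole point of stopping early.

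It remains to expand $\bar f$ to a feasible flow $f'$ on $A$. Inter-component edges keep their $\bar f$-values; for a vertex $v$ of a component $C$ let $b(v)$ be its net inter-component inflow, so $\sum_{v\in C}b(v)=0$ (note that $s$ and $t$ form singleton components). Fixing a root $r\in C$ together with a spanning in-tree and a spanning out-tree of $C$ using only length-$0$ edges (these exist since $C$ is strongly connected and all its edges have length $0$), I route each surplus from its vertex up the in-tree to $r$ and each deficit from $r$ down the out-tree, computed in bulk as one value per tree edge in $O(|C|)$ time. Each tree edge then carries at most one in-tree bundle plus one out-tree bundle, each of size at most the flow through $C$, which is at most $|\bar f|\le\Delta/4$; hence each component-internal edge carries at most $\Delta/2$ and $f'$ respects capacities. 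Finally $f'$ is a $\Delta/4$-or-blocking flow: in the early-halt case $|f'|=\Delta/4$; otherwise every $(s,t)$-path of $A$ maps to an $(s,t)$-path of the acyclic graph $\bar A$ (it cannot re-enter a component), which contains a $\bar f$-saturated edge by the blocking property, and the corresponding edge of $A$ on the path is saturated by $f'$. The total time is $O(m_A)+O(m_A\log m_A)+O(m_A)=O(m_A\log m_A)$.

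The one genuinely delicate point — and the only place a constant matters — is this capacity accounting inside the strongly connected components: one must reconcile the bound ``residual capacity $\ge\Delta/2$ on every length-$0$ edge'', which rests on the exact definition of the Goldberg--Rao length functions (in particular the special-edge case of $\tilde\ell$), with the bound ``a component-internal edge is used by at most one in-tree bundle and one out-tree bundle, each of magnitude $\le\Delta/4$'', which is exactly why the augmentation is capped at $\Delta/4$ rather than something larger. Everything else is a routine assembly of textbook strongly-connected-component decomposition and dynamic-tree blocking flow. (The other well-known use of the special edges, forcing $d_{\tilde\ell}(t)$ to increase after each such phase, is needed only by the outer scaling algorithm, not by this lemma.)
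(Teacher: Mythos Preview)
Your proposal is correct and follows essentially the same route as the paper. The paper does not give a formal proof of this lemma (it is cited from \cite{GoldbergR98}); instead, the paragraph preceding the lemma sketches exactly your construction: contract strongly connected components of $A$, run the Sleator--Tarjan dynamic-tree blocking flow on the resulting DAG, and re-expand the flow inside each component via an in-tree and an out-tree rooted at a fixed vertex, with the $\Delta/4$ cap ensuring each internal edge carries at most $\Delta/2$. Your write-up is in fact slightly more careful than the paper's sketch on the capacity accounting: the paper asserts that length-$0$ edges have residual capacity at least $\Delta$, whereas you correctly note that special edges (which also have $\tilde\ell$-length $0$ and can lie inside an SCC) are only guaranteed residual capacity $\ge\Delta/2$, which is still enough since the two-tree routing uses at most $\Delta/2$ per edge.
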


We now define the notion of {\em shortest-path flow}. Intuitively, it
is a union of shortest paths on admissible graphs. This is the flow resulting from, e.g., the Binary Blocking Flow algorithm \cite{GoldbergR98}. 

\begin{definition} [Shortest-path flow] 
Given a graph $(G,c)$ with a flow $f$, and length function $\ell$, and
let $G_f$ be the residual graph. A flow $f^*$ in $G_f$ is called
\textit{shortest-path flow}  if it can be decomposed into a set of
shortest paths under length function $\ell$, i.e., $f^* = \sum_{i=1}^{b} f_i^*$ for
some integer $b > 0$ where support$(f_i^*)$ is a shortest-path in $G_f$ under length function $\ell$. 
\end{definition}

\begin{comment}
\begin{definition} [Shortest-path flow] 
Given a graph $(G,c)$ with a flow $f$, and length function $\ell$, and
let $A(G,c, F, \ell)$ be the admissible graph, a flow $f^*$ is called
\textit{shortest-path flow}  in $A(G,c, F, \ell)$  if it is a
union of shortest paths under length function $\ell$, i.e., $f^* =
\bigcup_{i=1}^{\ell} f_i^*$ where support$(f_i^*)$ forms a
shortest-path in . 
\end{definition}

\begin{definition} [Flow sequence]
Given a graph $(G,c)$ and a length function $\ell$, 
 let $A_0 = A(G, c, 0, \ell_0)$, and $A_i = A(G,c,F_i,\ell_i)$ for $i > 0$ where $f_{i+1}$
is a flow in $A_i$ and $F_i = \sum_{j=1}^{i-1} f_j$. We call $f_{i+1}$
is a flow \textit{generated} from $A_i$. The following sequence  is called a \textit{flow sequence}:
$$ A_0 \xrightarrow{f_1} A_1 \xrightarrow{f_2}  \ldots
\xrightarrow{f_k} A_k$$ if $f_i$ is generated from $A_{i-1}$ for $i > 0$. 
\end{definition}
 
\begin{definition}[Shortest augmenting flow]\label{def:saf}
A flow $f$ in $(G,c)$ is called \textit{shortest augmenting flow} if
for some $k \geq 0$ there is a flow sequence $$A(G, c, 0, \ell_0) = A_0  \xrightarrow{f_1} A_1
\xrightarrow{f_2}  \ldots  \xrightarrow{f_k} A_k $$ such that 
\begin{enumerate}[noitemsep,nolistsep] 

\item $f = \sum_{i=1}^kf_i$. 
\item $f_i$ is a shortest-path flow in $A_{i-1}$ for every $i > 0$. 
\end{enumerate} 
\end{definition}
\end{comment}

Observe that $\blockingflow (A(G,c,f,\ell), \Delta)$ always produces
a shortest-path flow. 
\begin{comment}
 That is,a flow
that is a union of a sequence of flows obtained from the shortest $(s,t)$-path in admissible graph in previous
round. We call this flow as \textit{current flow}. 
\end{comment}

From the rest of this section, we fix an augmented graph $(G',c_{G'})$
(\Cref{def:aug_graph}), and also a flow $f$. %We

Given residual graph $G'_f$, and $d_{\ell}$, we can use
$\blockingflow (A(G',c_{G'},f, \tilde \ell),\Delta)$ to compute a
$\Delta/4\text{-or}$-binary blocking flow in $(G',c_{G'},f)$ in $\ot(m)$ time. 

\begin{comment}
Given a shortest augmenting flow $f$ on the augmented graph $(G',c_{G'})$, and $d_{\ell}$, we can use
$\blockingflow (A(G',c_{G'},f, \tilde \ell),\Delta)$ to compute a $\Delta/4\text{-or}$-binary
blocking flow in $(G',c_{G'},f)$ in $\ot(m)$ time. 
\end{comment}

  \cite{OrecchiaZ14} provide a slightly different binary length function such that the
algorithm in \cite{GoldbergR98} has local running time.

Our goal in next section is to output the same $\Delta/4\text{-or}$-binary
blocking flow in $G'_f$ in $\ot(\nu k)$ time using a slight adjustment
from \cite{OrecchiaZ14}.

\subsection{Local Augmented Graph and Binary Blocking Flow in Local Time} 
\label{sec:local binary blocking flow}
The goal in this section is to  compute binary blocking flow on the
residual graph of the augmented graph $(G',c_{G'})$ with a flow $f$ in
``local'' time. To ensure local running time, we cannot construct the
augmented graph $G'$ explicitly. Instead, we compute binary blocking flow from a subgraph of $G'$ based on ``absorbed'' vertices. %In this section, we show that shortest-flow computed from 

\begin{definition} [Split-node-saturated set] \label{def:splitnodesauratedset}
Given a residual graph $(G',c_{G'},f)$,  let  $B_{\out}$ be the set of vertices $v \in V_{\out} \sqcup
\{ x \}$ in the residual graph $(G',c_{G'},f)$  whose edge to $t$ is
saturated. The \textit{split-node-saturated set} $B$ is defined as: $$B = B_{\out} \sqcup N_{G'}^{\out}(B_{\out})\setminus \{ t \}$$
Note that $x$ is a fixed vertex as in \Cref{def:aug_graph}.  %Note that $v_{\out}$ can be $x$. 
\end{definition}

\begin{definition} [Local binary length function] \label{def:local-length-function}
Fix a parameter $\Delta > 0$ to be selected, let $\tilde \ell$ be the
length function in \Cref{def:gr-binary-length} for the residual graph
$(G',c_{G'},f)$. For vertex $u,v$ in the residual graph,  if $u,v \in
B$, we call residual edge $(u,v)$ \textit{modern}. Otherwise, we call
residual edge $(u,v)$ \textit{classical}.  

We define \textit{local binary length} function  $\ell$:  %
$$
 \ell(u,v)  = \left\{ \begin{array}{rl}
  1 &\mbox{ if $(u,v)$ is classical} \\
  \tilde \ell(u,v) &\mbox{ otherwise}
       \end{array} \right.
$$
\end{definition}

\begin{definition} [Distance under local binary length $\ell$]
Define distance function  $d(v)$ as the shortest path distance between the source vertex $s$
and vertex $v$ in the residual graph $(G',c_{G'},f)$ under the local length function $\ell$.
\end{definition}

The following obsevations about structural properties of the residual graph
$G'_f$ follows immediately from the definition of local length function $\ell$. 

\begin{observation} \label{obs:locallength} %For any edge $(u,v) \in E'_f$,
For a given residual graph $(G',c_{G'},f)$, %$G'_f$ of a given current flow $f_{G'}$, 
\begin{itemize}[noitemsep,nolistsep]
\item for any residual edge $(u,v)  \in E_{\infty,f}$ that is modern,  $\ell(u,v) = 0 $.  % Otherwise, $\ell(u,v) = 1$. % always has length 0. 
\item for any residual edge $(u,v) \in E_{\deg,f} \sqcup (s,x)$,
  $(u,v)$ is classical.
\item any residual edge with length zero is modern. 
\end{itemize}
\end{observation}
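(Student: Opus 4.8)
The plan is to verify each of the three bullet points by unwinding the relevant definitions: the local binary length $\ell$ (\Cref{def:local-length-function}), the Goldberg--Rao binary lengths $\hat\ell,\tilde\ell$ (\Cref{def:gr-binary-length}), the augmented graph $G'$ together with its capacity function (\Cref{def:aug_graph}), and the split-node-saturated set $B$ (\Cref{def:splitnodesauratedset}). Two of the three claims are essentially immediate from the definitions; the only one needing a short structural argument is the middle one, and even there the point is merely that $s$ and $t$ lie outside $B$.

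I would start with the last bullet, which is purely definitional. By \Cref{def:local-length-function}, a classical residual edge $(u,v)$ — that is, one with $\{u,v\}\not\subseteq B$ — is assigned $\ell(u,v)=1$. Hence if $\ell(u,v)=0$ the edge cannot be classical, so it is modern, i.e.\ $u,v\in B$. For the first bullet, let $(u,v)\in E_{\infty,f}$ be a residual copy of an edge of $E_\infty$; by \Cref{def:aug_graph} that edge has capacity $\infty$, so its residual capacity (being $\infty$) is at least $\Delta$ and therefore $\hat\ell(u,v)=0$ by \Cref{def:gr-binary-length}. Moreover $(u,v)$ cannot be \emph{special}, since being special requires residual capacity strictly below $\Delta$. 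Thus $\tilde\ell(u,v)=\hat\ell(u,v)=0$, and if in addition $(u,v)$ is modern then $\ell(u,v)=\tilde\ell(u,v)=0$, as claimed.

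For the middle bullet I would show that every edge of $E_{\deg}$ and the edge $(s,x_{\out})$ is classical, by arguing $s\notin B$ and $t\notin B$. The set $B$ of \Cref{def:splitnodesauratedset} explicitly excludes $t$, so $t\notin B$. For $s$: clearly $s\notin B_{\out}$ because $B_{\out}\subseteq V_{\out}\sqcup\{x\}$; and by inspecting $E'$ in \Cref{def:aug_graph}, every out-edge in $G'$ of a vertex of $V_{\out}\sqcup\{x\}$ leads either to some $w_{\textin}\in V_{\textin}$ (through $E_\infty$) or to $t$ (through $E_{\deg}$), so $N_{G'}^{\out}(B_{\out})\subseteq V_{\textin}\cup\{t\}$ and hence $s\notin N_{G'}^{\out}(B_{\out})$; therefore $s\notin B$. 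Now any residual edge in $E_{\deg,f}$ is a forward or reverse copy of some $(v_{\out},t)$ and thus has $t$ as an endpoint, while the residual edge $(s,x_{\out})$ has $s$ as an endpoint; in either case an endpoint lies outside $B$, so the edge is not modern, i.e.\ it is classical, and \Cref{def:local-length-function} then assigns it length $1$.

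There is no genuine obstacle here — the whole statement is a routine check against the definitions. The only places that call for a moment's care are confirming, from the precise form of \Cref{def:splitnodesauratedset} and the adjacency structure of $G'$, that neither $s$ nor $t$ can belong to $B$, and observing that an infinite-capacity residual edge can never meet the $\Delta/2 \le c - f < \Delta$ threshold in the definition of a special edge.
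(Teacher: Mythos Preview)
Your proposal is correct and matches the paper's intent: the paper does not give a proof of this observation at all, merely noting that it ``follows immediately from the definition of local length function $\ell$,'' and your argument is precisely the unwinding of definitions that justifies this claim. The only substantive content is your verification that $s,t\notin B$ for the second bullet, which is exactly the small structural check one needs.
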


\begin{definition} [Layers]
Given distance function $d$ on residual graph $(G',c_{G'},f)$, define
$L_j = \{ v \in G' \colon d(v) = j \}$ to be  the set of $j^{\text{th}}-$layer with respect to distance $d$.  Define $d_{\text{max}} = d(t)$ to be distance between $s$ and $t$ in $(G',c_{G'},f)$.
\end{definition}

The proof of the following Lemma is similar to that from
\cite{OrecchiaZ14}, but we focus on the augmented graph
$(G',c_{G'},f)$.  Recall split-node-saturated set $B$ from \Cref{def:splitnodesauratedset}.

\begin{lemma} \label{lem:layerproperties}
If $d_{max} < \infty$ and $(x,t)$ is saturated, then  we have:
\begin{enumerate}[noitemsep,nolistsep,label=(\Roman*)]
%\begin{enumerate}[noitemsep,nolistsep]
\item \label{item:layerproperties1} $d_{\text{max}} \geq 3$.
\item \label{item:layerproperties2} $L_0 = \{ s \}$.
\item \label{item:layerproperties3} $L_j \subseteq  B$ for $1 \leq j \leq d_{\text{max}}-2$.
\item \label{item:layerproperties4} $L_j \subseteq  B \cup N_{G'}^{\out}( B )$ for $ j = d_{\text{max}} -1$.
\end{enumerate}
\end{lemma}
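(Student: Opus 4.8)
The plan is to analyze the structure of shortest $(s,v)$-paths under the local binary length function $\ell$, exploiting the two facts assumed in the hypothesis: $d_{\max} = d(t) < \infty$ and the edge $(x,t)$ is saturated. The key structural input is Observation \ref{obs:locallength}: only modern edges can have length $0$, every edge in $E_{\deg,f} \sqcup \{(s,x)\}$ is classical (hence has $\ell$-length $1$), and a modern edge lying in $E_{\infty,f}$ has length $0$. Throughout I will use that ``modern'' means both endpoints lie in the split-node-saturated set $B$, and that $B = B_{\out} \sqcup (N_{G'}^{\out}(B_{\out}) \setminus \{t\})$ where $B_{\out} \subseteq V_{\out} \cup \{x\}$ consists of the vertices whose edge to $t$ is saturated.

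For \ref{item:layerproperties2}, $L_0 = \{s\}$ is immediate: the only out-edge of $s$ in $G'$ is $(s, x_{\out})$, which is classical and so has length $1$, so no vertex other than $s$ is at distance $0$. For \ref{item:layerproperties1}, I would argue that $d_{\max} \geq 3$ by ruling out $d(t) \in \{1, 2\}$. The edge $(s, x_{\out})$ has length $1$, so $d(x_{\out}) \le 1$; but the only edges into $t$ are the edges $(v_{\out}, t) \in E_{\deg,f}$, which are classical and have length $1$. A path $s \to \cdots \to t$ must use such an edge as its last step, contributing $1$ to the length, and must first reach some $v_{\out}$; since $d(s) = 0$ and $s$'s only out-edge has length $1$, we get $d(t) \ge 2$; to get $d(t) = 2$ we would need $d(v_{\out}) = 1$ for the penultimate vertex, forcing $v_{\out} = x_{\out}$ (the only vertex reachable from $s$ by a single length-$1$ edge, as no zero-length edge leaves $s$). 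But then the path is $s \to x_{\out} \to t$; however $(x,t) = (x_{\out}, t)$ is saturated by hypothesis, so this edge is not in the residual graph — contradiction. Hence $d_{\max} \ge 3$.

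For \ref{item:layerproperties3} and \ref{item:layerproperties4}, which are the heart of the lemma, I would induct on $j$ and trace a shortest path $P$ from $s$ to a vertex $v \in L_j$. The last edge on $P$ into $v$ is some residual edge $(u,w)$ with $w = v$ and $d(u) + \ell(u,v) = d(v) = j$. I would argue that for $1 \le j \le d_{\max} - 1$ this edge cannot be the edge $(s, x_{\out})$ (else $j = 1$ and $v = x_{\out} \in B_{\out} \subseteq B$ trivially, since $x \in B_{\out}$ — actually one must check $x_{\out}$'s status, but $x$ is a fixed split vertex in $B_{\out} \cup N^{\out}$; since $(x,t)$ is saturated, $x_{\out} \in B_{\out} \subseteq B$); and it cannot be an edge into $t$ since $v \ne t$ when $j \le d_{\max} - 1$. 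So $(u,v) \in E_{\infty, f} \cup E_{\nu, f}$. Here is the crux: if $(u,v)$ has length $0$ it is modern, so $v \in B$ directly, giving the claim. If $(u,v)$ has length $1$, then $d(u) = j - 1$; by induction $u \in B$ (when $j - 1 \le d_{\max} - 2$, which is the relevant range), i.e. $u \in B_{\out} \cup N_{G'}^{\out}(B_{\out})$. I would then case on whether $u \in B_{\out}$ — if so, $v \in N_{G'}^{\out}(B_{\out})$, so $v \in B$ for $j \le d_{\max} - 2$ and $v \in B \cup N^{\out}(B)$ always; if $u \in N_{G'}^{\out}(B_{\out})$ but $u \notin B_{\out}$, then I must show $u$ cannot have an outgoing residual edge continuing a shortest path unless that edge's target is still controlled — this is the delicate point, handled by observing that a vertex in $N^{\out}(B_{\out}) \setminus B_{\out}$ of type $w_{\textin}$ has its only forward edge $(w_{\textin}, w_{\out})$, which keeps us inside the ``one more layer out'' envelope, or its edge to $t$ is unsaturated so continuing the path would route flow to $t$ and contradict minimality/the blocking structure.

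The main obstacle I anticipate is precisely this last case analysis: showing that a shortest path cannot ``escape'' the neighborhood of $B$ by stepping from a vertex in $N^{\out}(B_{\out}) \setminus B_{\out}$ to a vertex outside $B \cup N^{\out}(B)$ before reaching layer $d_{\max} - 1$. The resolution should mirror the argument in \cite{OrecchiaZ14}: a vertex $u \in N^{\out}(B_{\out})$ whose edge to $t$ is \emph{not} saturated (so $u \notin B_{\out}$) would allow the shortest path to reach $t$ in one more classical step of length $1$, which pins down $d(t) \le d(u) + 1 = j + 1$; combined with $j \le d_{\max} - 2$ this forces $j = d_{\max} - 2$ and $v$ in the final ``collar'' $B \cup N^{\out}(B)$, exactly as \ref{item:layerproperties4} allows, while for $j < d_{\max} - 2$ such a step is impossible so $u$ must in fact be a split-in vertex whose unique forward edge stays within $B \cup N^{\out}(B)$. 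Tying the saturation of $(x,t)$ together with the finiteness of $d_{\max}$ to make this dichotomy airtight is the part that will require care; everything else is bookkeeping on the edge types from Observation \ref{obs:locallength}.
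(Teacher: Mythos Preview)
Your plan for \ref{item:layerproperties2} is fine and matches the paper. For \ref{item:layerproperties1} there is a small gap: you claim $d(v_{\out})=1$ forces $v_{\out}=x_{\out}$, but that is false --- from $x_{\out}$ one can follow modern $E_\infty$-edges of length $0$ and then a modern $E_\nu$-edge of length $0$ to reach another out-vertex $w_{\out}$ at distance $1$. The paper closes this by observing that the last such edge $(w_{\textin},w_{\out})$ is modern only when $w_{\out}\in B$, which for an out-vertex means $w_{\out}\in B_{\out}$, hence $(w_{\out},t)$ is saturated as well and cannot be the final step.

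For \ref{item:layerproperties3}--\ref{item:layerproperties4} your inductive scheme is in principle workable, but the paper takes a more direct route that avoids the ``obstacle'' you flag. The key observation the paper isolates first is: \emph{any out-vertex $v_{\out}$ with $d(v_{\out})\le d_{\max}-2$ must have its edge $(v_{\out},t)$ saturated}, since otherwise $d(t)\le d(v_{\out})+1\le d_{\max}-1$; thus $v_{\out}\in B_{\out}\subseteq B$ with no induction needed. Then for an in-vertex $v_{\textin}\in L_j$ with $1\le j\le d_{\max}-2$, its shortest-path predecessor is necessarily some out-vertex $u_{\out}$ with $d(u_{\out})\in\{j-1,j\}\le d_{\max}-2$, hence $u_{\out}\in B_{\out}$ by the previous sentence, and $v_{\textin}\in N_{G'}^{\out}(B_{\out})\subseteq B$. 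Part \ref{item:layerproperties4} is the same case split at $j=d_{\max}-1$.

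Your ``obstacle'' paragraph, as written, contains a type confusion that would derail the inductive argument: the set $N_{G'}^{\out}(B_{\out})\setminus\{t\}$ consists entirely of \emph{in}-vertices (the out-neighbors of an out-vertex in $G'$ are in-vertices or $t$), so a vertex $u\in N_{G'}^{\out}(B_{\out})\setminus B_{\out}$ has no edge to $t$ at all, and the phrase ``whose edge to $t$ is not saturated'' is vacuous for it. (There is also an arithmetic slip: with $d(u)=j-1$ you would get $d(t)\le j$, not $j+1$.) The saturation argument you are reaching for must be applied to an \emph{out}-vertex, and the clean way to do that is precisely the paper's reorganization --- settle all out-vertices first by the one-line distance bound, then read off the in-vertex case from their predecessors.
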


\begin{proof}[Proof of
  \Cref{lem:layerproperties}\ref{item:layerproperties1}]
 First, $d_{\max} \geq 2$ since $(s,x_{\textin})$ and
any $(v_{\out},t) \in E_{\text{deg},f}$ is classical by
\Cref{obs:locallength}. This means $d_{\max} \geq 3$ or $d_{\max} =
2$.  We show that $d_{\max} = 2$ is not possible. Suppose for the
contradiction that $d_{\max} = 2$.  Then every intermediate edge in any $(s,t)$-path, i.e., 
$s\rightarrow x\rightarrow v_{\textin} \rightarrow v_{\out}
\rightarrow \ldots
 \rightarrow w_{\textin} \rightarrow w_{\out} \rightarrow t$, must have
zero length.  Also, the path cannot be of the form $s \rightarrow x \rightarrow
t$ since $(x,t)$ is assumed to be saturated.  In particular, $(w_{\textin}, w_{\out})$ has zero
length.  By \Cref{obs:locallength},  $(w_{\textin}, w_{\out})$ must be
modern. This edge is modern when $w_{\out} \in B$ by definition of
split-node-saturated set $B$. Therefore, $(w_{\out},t)$ is saturated,
a contradiction.  
\end{proof}

\begin{proof}[Proof of
  \Cref{lem:layerproperties}\ref{item:layerproperties2}]
The second item follows from the fact that $(s,x)$ is the only outgoing-edge from
$s$ and $(s,x)$ is classical and hence has length 1. 
\end{proof}

\begin{proof}[Proof of
  \Cref{lem:layerproperties}\ref{item:layerproperties3}]
  For $1 \leq j \leq d_{\text{max}}-2$, if $v
\in L_j$, then we consider two types of $v$. If $v$ is an out-vertex $v_{\out}$, then
$d(v_{\out}) = j \leq d_{\max}-2$. Thus, $(v_{\out},t)$ must be saturated since $d(t) = d_{\text{max}} > 
d_{\text{max}}-1 \geq d(v_{\out})+1$. Hence, $v_{\out} \in B_{out}$, which is in $B$. 

If $v$ is an in-vertex $v_{\textin}$, then there must be an out-vertex
$u_{\out}$ such that 
\begin{equation} \label{eq:in-out-vertex1} 
d(v_{\textin})= d(u_{\out}) +\ell(u_{\out}, v_{\textin})
\end{equation}
We consider two cases for $j$. We show that $v_{\textin} \in B$ for either case.
\begin{itemize} 
\item If $j = 1$, then $u_{\out}$ could also be $x$. Since $d_{\text{max}} \geq 3$,
$u_{\out}$ at $L_1$ must be saturated, meaning that $u_{\out} \in B_{\out}$. Hence, $v_{\textin}$ is an
out-neighbor of $u_{\out} \in B_{\out}$. 
\item If $j \geq 2$,  then we show that $1 \leq d(u_{\out}) \leq
  d_{\text{max}}-2$. For the upper bound $d(u_{\out}) \leq
  d_{\text{max}}-2$, rearranging \Cref{eq:in-out-vertex1}, and use the
  fact that $\ell$ is a binary function,  $\ell(u_{\out}, v_{\textin})
  \in \{ 0, 1 \}$ to get:  $$d(u_{\out}) =d(v_{\textin}) - \ell(u_{\out}, v_{\textin})  \leq d(v_{\textin}) = j \leq
d_{\text{max}}-2$$
The lower bound $d(u_{\out}) \geq 1$ follows from  $ d(v_{\textin}) = j
\geq 2$, \Cref{eq:in-out-vertex1}, and $\ell$ is binary. 

Since $1 \leq d(u_{\out}) \leq  d_{\text{max}}-2$, by the previous
discussion, $u_{\out} \in B_{\out}$. Therefore, $v_{\textin} \in B$ since
$v_{\textin}$ is the out-neighbor of $u_{\out} \in B_{\out}$. 
\end{itemize}
\end{proof}

\begin{proof}[Proof of
  \Cref{lem:layerproperties}\ref{item:layerproperties4}]

 For any $v \in L_{d_{\text{max}}-1}$, if $v
\in B$, then we are done. Now, assume that $v \not \in B$. Then,  $v$
is either an in-vertex or out-vertex. We first show that $v$ cannot be
an in-vertex. 

Suppose for contradiction that $v$ is an in-vertex  $v_{\textin} \not
\in B$, then  there must be a vertex $u_{\out}$ such that $
d(v_{\textin})= d(u_{\out}) +\ell(u_{\out}, v_{\textin})$. Since
$v_{\textin} \not \in B$, the residual edge $(u_{\out}, v_{\textin})$ is classical. Then, $\ell(u_{\out}, v_{\textin}) = 1$.   So, $$d(u_{\out}) =
  d(v_{\textin}) -  \ell(u_{\out}, v_{\textin})=  (d_{\text{max}} -1)- 1 \leq
d_{\max}-2$$ By
\Cref{lem:layerproperties}\ref{item:layerproperties3}, $u_{\out}$ is
in $B$, which means $u_{\out} \in B_{\out}$. Hence, $v_{\textin}$ is an out-neighbor of $u_{\out}
\in B_{\out}$. So, $v_{\textin} \in B$, a contradiction.   

Finally, if $v = v_{\out} \not \in  B$, then we show that $v_{\out} \in N_{G'}^{\out}( B
)$.  There exists $u_{\textin}$ such that $d(v_{\out}) = d(u_{\textin}) +
\ell(u_{\textin}, v_{\out})$.  Since $v_{\out} \not \in  B$,
$v_{\out}$ is not saturated.  Hence, $(u_{\textin},v_{\out})$ is
classical. Therefore, $u_{\textin} \in L_j$ for $j \leq d_{\text{max}}
-2$. So, $u_{\textin} \in B$ by
\Cref{lem:layerproperties}\ref{item:layerproperties3}, and
$v_{\out}$ is the out-neighbor of $u_{\textin}$. Therefore, $v_{\out} \in  N_{G'}^{\out}( B)$.  
\end{proof}

\begin{definition}[Local graph, $LG$] \label{def:aug_graph_local}
Given the augmented graph $G' = (V',E')$ and split-node-saturated set $B$, we define the \textit{local graph}
$LG(G',B) = G'[ V'' ] = (V'', E'') $ as an induced subgraph of $G'$ where 
\begin{align}\label{eq:aug_graph_local}
V'' =  B \sqcup  N_{G'}^{\out}(B) \sqcup \{ s , t \}  \quad\mbox{and}\quad E'' = E''_\nu \sqcup E''_\infty \sqcup
E''_{\text{deg}} \sqcup \{ (s,x) \}
\end{align}
where the sets in \Cref{eq:aug_graph_local} are defined as follows. 
\begin{itemize}[noitemsep,nolistsep]
%\item $V'' = B \sqcup  N_{G'}^{\out}(B) \sqcup \{ s , t \}$.
\item $E''_{\nu} = \{ (v_{\textin},v_{\out}) \colon v_{\out} \in
  B_{\out} \sqcup  N_{G'}^{\out}(B) , (v_{\textin}, v_{\out}) \in E_{\nu} \}$. 
\item $E''_{\infty} = \{(v_{\out}, w_{\textin}) \colon v_{\out}  \in B_{\out}, w_{\textin} \in   V', (v_{\out}, w_{\textin}) \in E_{\infty} \}$.
\item $E''_{\text{deg}} = \{ (v_{\out},t) \colon v_{\out} \in
  B_{\out} \sqcup  N_{G'}^{\out}(B) \}$.
\end{itemize}

Using the same capacity and flow as in $G'$, the \textit{residual local graph} is $(LG(G',B), c_{LG},
f_{LG})$ where $c_{LG}$ and $f_{LG}$ are the same as $c_{G'}$
and $f_{G'}$, but restricted to the edges in $LG(G',B)$. The
local length function $\ell$ also applies to $LG(G',B)$. 
\end{definition}

\begin{lemma}  \label{lem:lg-size}Let $m'$ be the number of edges in
  $LG(G',B)$, and $n' = |V''|$ be the number of vertices in
  $LG(G',B)$. We have  $$m' \leq 4 \nu /\epsilon \quad \mbox{ and } \quad n' \leq 8 \nu/ (\epsilon k). $$
\end{lemma}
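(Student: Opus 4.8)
The plan is to simply read $m'$ and $n'$ off from \Cref{def:aug_graph_local}, bounding each of the four edge-classes $E''_\nu, E''_\infty, E''_{\text{deg}}, \{(s,x_{\out})\}$ and each of the vertex-classes of $V''$ in terms of two quantities: the value $|f|$ of the current flow in $G'$, and the minimum out-degree $\dmin$. Throughout, let $S_B = \{v\in V : v_{\out}\in B_{\out}\}$ be the set of original vertices whose out-copy lies in $B_{\out}$. Two facts drive everything. First, the only edge leaving $s$ in $G'$ is $(s,x_{\out})$, of capacity $\nu/\epsilon+\nu+1$, so $|f|\le \nu/\epsilon+\nu+1$. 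Second, for every $v\ne x$ the out-copy $v_{\out}$ has exactly one incoming edge in $G'$, namely $(v_{\textin},v_{\out})\in E_\nu$ of capacity $\nu/(\epsilon k)$; hence the flow through $v_{\out}$ is at most $\nu/(\epsilon k)$, and whenever $v_{\out}\in B_{\out}$ the saturated edge $(v_{\out},t)$ must carry exactly $\deg_G^{\text{out}}(v)\le\nu/(\epsilon k)$ units.

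The key intermediate step is to bound $\vol_G^{\text{out}}(S_B)$ and $|B_{\out}|$. Since the edges $(v_{\out},t)$ for $v_{\out}\in B_{\out}$ are saturated and pairwise distinct, and each carries $\deg_G^{\text{out}}(v)$ units of $f$ into $t$, we get $\vol_G^{\text{out}}(S_B)=\sum_{v\in S_B}\deg_G^{\text{out}}(v)\le |f|\le \nu/\epsilon+\nu+1$. Because $\dmin\ge k$ by \Cref{eq:local_condition} (resp.\ \Cref{eq:local_condition_dense}), each term in this sum is at least $k$, so $|B_{\out}| = |S_B|\le \vol_G^{\text{out}}(S_B)/k \le (\nu/\epsilon+\nu+1)/k$; and of course $|N_G^{\text{out}}(S_B)|\le \vol_G^{\text{out}}(S_B)$ as well. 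For the edge count I would then observe from \Cref{def:aug_graph_local} that $E''_\infty$ is exactly the set of $E_\infty$-edges leaving $B_{\out}$, so $|E''_\infty|=\sum_{v\in S_B}\deg_G^{\text{out}}(v)=\vol_G^{\text{out}}(S_B)$, while $E''_\nu$ and $E''_{\text{deg}}$ contain one edge per out-copy in $B_{\out}\sqcup N_{G'}^{\text{out}}(B)$, and the out-copies in $N_{G'}^{\text{out}}(B)$ are precisely $\{v_{\out}: v_{\textin}\in B\}$, i.e.\ those with $v\in N_G^{\text{out}}(S_B)$; hence $|E''_\nu|,|E''_{\text{deg}}|\le |B_{\out}|+|N_G^{\text{out}}(S_B)|$. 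Adding the single edge $(s,x_{\out})$ and plugging in the bounds above, together with the fact that the side conditions force $\epsilon$ small enough that $\nu+1\le \nu/\epsilon$, yields $m'\le 4\nu/\epsilon$.

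For $n'$, every element of $V''$ is $s$, $t$, an out-copy $v_{\out}$ with $v\in S_B\cup N_G^{\text{out}}(S_B)$, or an in-copy $v_{\textin}$ with $v\in N_G^{\text{out}}(S_B)$ (the in-copies in $B$ being exactly $N_{G'}^{\text{out}}(B_{\out})\cap V_{\textin}$, and their out-copies being added back by $N_{G'}^{\text{out}}(B)$). So $n' \le 2 + |B_{\out}| + 2\,|N_{G'}^{\text{out}}(B_{\out})\cap V_{\textin}|$, and the remaining task is to bound the number of distinct in-copies touched. I expect this to be the main obstacle: unlike the edge bound, which is a routine enumeration giving $O(\nu/\epsilon)$, the vertex bound must come out a factor $k$ smaller, and the only available source of that factor is $\dmin\ge k$ — so the argument has to combine the ``single bounded in-edge'' structure of out-copies (which forces $\deg_G^{\text{out}}(v)\le \nu/(\epsilon k)$ for $v\in S_B$) with the counting of $S_B$ and its out-neighborhood in a way that charges each of the $O(\nu/(\epsilon k))$ vertices of $S_B$ for only $O(1)$ newly-exposed in-copies, rather than for $\deg_G^{\text{out}}(v)$ of them. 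Once that charging is set up, substituting $|B_{\out}|\le(\nu/\epsilon+\nu+1)/k$ and using the side conditions gives $n'\le 8\nu/(\epsilon k)$.
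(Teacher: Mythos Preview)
Your treatment of $m'$ is correct and matches the paper's: both bound $\vol_G^{\out}(S_B)\le |f|\le \nu/\epsilon+\nu+1$ via the saturated sink-edges, then read off $|E''_\infty|=\vol_G^{\out}(S_B)$ and $|E''_\nu|,\,|E''_{\deg}|\le |B_{\out}|+|N_{G'}^{\out}(B)|=O(\vol_G^{\out}(S_B))$, yielding $m'=O(\nu/\epsilon)$.

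For $n'$ there is a genuine gap. The charging scheme you propose---billing each of the $O(\nu/(\epsilon k))$ vertices of $S_B$ for only $O(1)$ newly-exposed in-copies---is never carried out, and it cannot work as stated: every $v\in S_B$ has $\deg_G^{\out}(v)\ge\dmin\ge k$, so it exposes at least $k$ in-copies, not $O(1)$. Moreover $x$ itself escapes the cap $\deg_G^{\out}(v)\le\nu/(\epsilon k)$ that you invoke, since its unique in-edge is $(s,x_{\out})$ of capacity $\nu/\epsilon+\nu+1$; already with $B_{\out}=\{x_{\out}\}$ one gets $\Theta(\deg_G^{\out}(x))$ in-copies. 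Your own enumeration therefore only yields $n'=O(\nu/\epsilon)$, a factor $k$ short.

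The paper takes an entirely different and much shorter route here: instead of bounding vertex classes directly, it derives the vertex bound \emph{from} the edge bound, asserting that ``each $v_{\out}$ has at least $\dmin\ge k$ edges'' and concluding $n'\le 2m'/\dmin\le 8\nu/(\epsilon k)$. You missed this shortcut. That said, the paper's one-line argument glosses over the very difficulty you identified: out-vertices in $N_{G'}^{\out}(B)\setminus B_{\out}$ have only two incident edges in $LG$ (namely $(v_{\textin},v_{\out})$ and $(v_{\out},t)$), not $k$, so the step $n'\le 2m'/\dmin$ is not fully justified as written either.
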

\begin{proof}

For any out-vertex $v_{\out} \in B_{\out}$, its edge to $t$ must be saturated
before it is included in $B$ with capacity of
$\text{deg}_{G}^{\out}(v)$. The edge $(s,x)$ is also an $(s,t)$-edge cut in $G'$ with capacity
$\nu/\epsilon + \nu + 1$. Hence, the maximum flow $F^*$ in $G'$ is at most
$\nu/\epsilon + \nu + 1 $.  We have $$ \sum_{v_{\out} \in B_{\out}}
\text{deg}_{G}^{\out}(v) \leq F^* \leq  \nu/\epsilon + \nu + 1.$$ 

By \Cref{lem:layerproperties} and \Cref{def:aug_graph_local}, $m' =
|E''_\nu|+ |E''_{\infty}|+|E''_{\deg}|  +1$ where $ |E''_\nu| =
|B_{\out}|+|N_{G'}^{\out}(B)| -1, |E''_{\infty}| = \sum_{v_{\out} \in B_{\out}}
\text{deg}_{G}^{\out}(v), $ and $|E''_{\deg}|  = |B_{\out}|
+|N_{G'}^{\out}(B)| $. By \Cref{def:aug_graph_local}, $ B_{\out}
\sqcup N_{G'}^{\out}(B)\subset V''$. Since $|V''| = n'$ and every
out-vertex has a corresponding in-vertex ($x$ has $s$), $|B_{\out}|
+ |N_{G'}^{\out}(B)| \leq n'/2 \leq \sum_{v_{\out} \in B_{\out}} \text{deg}_{G}^{\out}(v)$. So,
$$ m'  \leq 2 \sum_{v_{\out} \in B_{\out}} \text{deg}_{G}^{\out}(v) 
\leq 2(\nu/\epsilon + \nu + 1)  \leq  4 \nu /\epsilon. $$ % = O(\nu k)$$

To compute $n'$, note that each $v_{\out}$ has at least
$d_{\text{min}}^{\out} \geq k $ edges. Therefore, the number of vertices including $v_{\textin}$ is at most
$n' \leq 2(m'/d_{\text{min}}^{\out})  \leq 2 m'/k  \leq  8
\nu/(\epsilon k) $.  
\end{proof} 

\begin{corollary} \label{cor:construct-lg-local}
Given a residual graph $ (G',c_{G'},f) $ and split-node-saturated
set  $B$, and a pointer to vertex $x$, we can
construct  $(LG(G',B), c_{LG}, f)$  in $O(m') = O(\nu / \epsilon)$ time. 
\end{corollary}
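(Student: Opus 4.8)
The plan is to build $LG(G',B)$ by a bounded local exploration: we never materialize $G'$ at all, we only open the out-adjacency list (in $G$) of each vertex of $B_{\out}$, and we do $O(1)$ extra work per vertex and per edge of the graph we output. The time bound then drops out of the size estimates already proved in \Cref{lem:lg-size}.

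Concretely, I would proceed as follows. We are given $B$, from which $B_{\out}$ (the out-vertices of $B$, including $x$) and the in-vertices of $B$ are read off in $O(|B|)$ time; keep an array indexed by $V$ so that membership tests and ``already added to $V''$?'' queries take $O(1)$. Initialize $V''=\{s,t\}\cup B$ and create the edge $(s,x)$ with capacity $\nu/\epsilon+\nu+1$. For each $v_{\out}\in B_{\out}$: add the edge $(v_{\out},t)$ with capacity $\deg^{\out}_G(v)$, and scan $\delta^{\out}_G(v)$; each out-neighbour $w$ of $v$ yields exactly one edge $(v_{\out},w_{\textin})\in E_\infty$, which we add with capacity $\infty$. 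By \Cref{def:splitnodesauratedset} every such $w_{\textin}$ already lies in $B$, so this pass touches only names of vertices of $B$ plus the adjacency lists of $B_{\out}$. Then, for each in-vertex $w_{\textin}\in B$: add the unique edge $(w_{\textin},w_{\out})\in E_\nu$ with capacity $\nu/(\epsilon k)$, inserting $w_{\out}$ into $V''$ if it is new (in which case $w_{\out}\in N^{\out}_{G'}(B)$), and add $(w_{\out},t)$ with capacity $\deg^{\out}_G(w)$. Finally set the flow on each created edge equal to $f$ on that edge, one $O(1)$ lookup apiece. Comparing with \Cref{def:aug_graph_local} confirms that the vertex set and the four edge classes $E''_\nu,E''_\infty,E''_{\deg},\{(s,x)\}$ produced this way are exactly those of $LG(G',B)$, with the prescribed capacities and flow.

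For the running time, the in-vertex pass and the $(s,x)$/$E''_{\deg}$ bookkeeping cost $O(|B|)+O(1)=O(n')$, while the out-vertex pass costs $\sum_{v_{\out}\in B_{\out}}\deg^{\out}_G(v)$ for the adjacency scans plus $O(1)$ per edge produced. But $\sum_{v_{\out}\in B_{\out}}\deg^{\out}_G(v)=|E''_\infty|\le m'$, and \Cref{lem:lg-size} also gives $m'\le 4\nu/\epsilon$ and $n'\le 2m'/k\le 2m'$; hence the total is $O(m')=O(\nu/\epsilon)$.

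The only non-routine points, I expect, are: (i) the locality claim itself, namely that the whole edge set of $LG(G',B)$ is reachable without opening any adjacency list outside $B_{\out}$ — this is exactly what $N^{\out}_{G'}(B_{\out})\setminus\{t\}\subseteq B$ together with the trivial single-edge structure of $E_\nu$ and $E_{\deg}$ provides; and (ii) the assumption that $\deg^{\out}_G(w)$ is available in $O(1)$ from the adjacency-list representation, which we need for the capacities of the $E''_{\deg}$-edges incident to $N^{\out}_{G'}(B)$ since we cannot afford to scan those vertices' lists. Both are standard, so the proof should be short.
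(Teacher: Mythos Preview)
Your proposal is correct and is precisely the kind of argument the paper leaves implicit: in the paper this statement is recorded as an immediate corollary of \Cref{lem:lg-size} with no proof given, so you are simply spelling out the local construction that justifies reading only $O(m')$ data. Your two flagged points---that all of $E''_\infty$ is discovered by scanning only the $G$-out-lists of $B_{\out}$ (since $N_{G'}^{\out}(B_{\out})\setminus\{t\}\subseteq B$ by \Cref{def:splitnodesauratedset}), and that $\deg_G^{\out}(w)$ must be available in $O(1)$ for the capacities on $E''_{\deg}$-edges touching $N_{G'}^{\out}(B)$---are exactly the places where something nontrivial happens, and you handle them correctly.

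Two very minor remarks. First, your $E''_\nu$ is indexed by in-vertices $w_{\textin}\in B$ rather than by out-vertices $v_{\out}\in B_{\out}\sqcup N_{G'}^{\out}(B)$ as in \Cref{def:aug_graph_local}; these coincide once one reads $LG(G',B)=G'[V'']$ literally as an induced subgraph (an edge $(v_{\textin},v_{\out})$ is present only if $v_{\textin}\in V''$, which forces $v_{\textin}\in B$), so there is no discrepancy. Second, ``keep an array indexed by $V$'' costs $\Theta(n)$ to initialize, which may exceed $O(\nu/\epsilon)$; replace it by a hash table or the standard uninitialized-array trick to stay within the local time bound.
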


\begin{comment}
\begin{definition}
For input residual graph $G_f$ with length function $\ell$ and
$\Delta$, BinaryBlockingFlow$(G_f,\ell,\Delta)$ outputs  either a
blocking flow or a flow with value $\Delta/4$.  
\end{definition}
\end{comment}

The proof of the following Lemma is a straightforward modification from \cite{OrecchiaZ14}. 
\begin{lemma}\label{lem:localbinaryblockingflow}
%For the distance $d_{\ell}$ given local binary length function $\ell$ on current flow $f_{G'}$ on
%$G'_f$, and $LG'_f$. 
Given the local length function $\ell$ on both residual augmented
graph $(G', c_{G'}, f) $
and residual local graph $(LG, c_{LG}, f_{LG}) = (V'', E''_f,
c_{LG,f})$ (Recall $f_{LG}$ from \Cref{def:aug_graph_local}). Let $f_1$ be the output of $\blockingflow
(A(G',c_{G'}, f, \ell), \Delta)$. Let $f_2$ be the output of\\
$\blockingflow (A(LG,c_{LG}, f_{LG}, \ell), \Delta)$. Then, 

\begin{itemize}[noitemsep,nolistsep] 
\item  $ f_1 = z(f_2)$ where  
 $$   
   z (f_2) (e) = \left\{ \begin{array}{rl}
   0 &\mbox{ if $e \not \in E''_f$.} \\ 
   f_2 (e) &\mbox{ otherwise}
  \end{array} \right. 
$$ i.e.,  $f_1$ and $f_2$ coincide. 
 
\item $\blockingflow(A(LG,c_{LG}, f_{LG}, \ell), \Delta)$  takes $\ot(\nu /\epsilon)$  time. 
\end{itemize}  
\end{lemma}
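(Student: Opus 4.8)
The plan is to show that the Binary Blocking Flow algorithm (\Cref{lem:binaryblockingflowalg}) run on the admissible graph of the residual \emph{local} graph $(LG, c_{LG}, f_{LG})$ behaves identically to running it on the admissible graph of the full residual augmented graph $(G', c_{G'}, f)$, and then to bound its running time using the size bound from \Cref{lem:lg-size}. The conceptual heart is that the two admissible graphs $A(G', c_{G'}, f, \ell)$ and $A(LG, c_{LG}, f_{LG}, \ell)$ \emph{agree as graphs}: every edge that can carry shortest-path (admissible) flow in $G'$ is already present in $LG$, and conversely $LG$ introduces no new admissible edges since it is an induced subgraph with identical capacities and flow.

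The key steps, in order, are as follows. First, I would invoke \Cref{lem:layerproperties}: if $d_{\max} = \infty$ or $(x,t)$ is unsaturated, both blocking flow computations are trivial (there is no $(s,t)$-path in the admissible graph, so both $f_1$ and $f_2$ are the zero flow, and the claim is immediate), so assume $d_{\max} < \infty$ and $(x,t)$ saturated. Second, I would argue that every admissible edge of $(G', c_{G'}, f)$ under $\ell$ lies inside $LG(G', B)$. An admissible edge $(u,v)$ satisfies $d(u) + \ell(u,v) = d(v)$ with $d(v) \le d_{\max}$; since any $(s,t)$-path through $(u,v)$ decomposes it into a prefix of length $d(u) \le d_{\max} - 1$ and we only need edges on $(s,t)$-paths, we have $d(u) \le d_{\max}-1$ and $d(v) \le d_{\max}$, so by \Cref{lem:layerproperties}\ref{item:layerproperties2}--\ref{item:layerproperties4} both endpoints lie in $B \cup N_{G'}^{\out}(B) \cup \{s,t\}$; a short case analysis on the edge type ($E_\nu$, $E_\infty$, $E_{\deg}$, or $(s,x)$) against the definitions of $E''_\nu, E''_\infty, E''_{\deg}$ in \Cref{def:aug_graph_local} then shows the edge is in $E''$. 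Third, the converse: since $LG = G'[V'']$ is an induced subgraph with $c_{LG}, f_{LG}$ restrictions of $c_{G'}, f$, and $\ell$ restricted accordingly, the residual capacities and hence the length function and the distances $d(\cdot)$ coincide on $V''$, so no spurious admissible edges appear in $LG$ — i.e., $A(LG, c_{LG}, f_{LG}, \ell)$ is exactly the subgraph of $A(G', c_{G'}, f, \ell)$ induced on $V''$, which by the previous step contains all of it (together with $s,t$). Hence the two admissible graphs are equal, so $\blockingflow$ — which is a deterministic procedure depending only on its input graph and $\Delta$ — produces the same flow, giving $f_1 = z(f_2)$; the extension-by-zero $z$ is just bookkeeping, since $f_1$ is supported on edges of $A$, all of which lie in $E''_f$. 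Fourth, for the running time: by \Cref{lem:binaryblockingflowalg} the cost is $O(m_A \log m_A)$ where $m_A$ is the number of edges of $A(LG, \dots)$, which is at most $m' \le 4\nu/\epsilon$ by \Cref{lem:lg-size} (plus $O(m')$ to build $LG$ by \Cref{cor:construct-lg-local}), yielding $\ot(\nu/\epsilon)$.

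The main obstacle I expect is the second step — carefully verifying that \emph{every} admissible edge, in all four edge-categories of $G'$ and for the boundary layer $j = d_{\max}-1$ in particular, is captured by the precise (and slightly asymmetric) definition of $E''$ in \Cref{def:aug_graph_local}. The subtlety is that $LG$ keeps $E''_\infty$ edges only out of $B_{\out}$ (not out of the frontier $N_{G'}^{\out}(B)$), so one must check that an admissible $E_\infty$-edge $(v_{\out}, w_{\textin})$ never has its tail $v_{\out}$ strictly on the frontier: if $v_{\out} \in N_{G'}^{\out}(B) \setminus B_{\out}$ then $v_{\out}$ is unsaturated, the outgoing edge $(v_{\out},w_{\textin})$ would be classical hence of length $1$, placing $w_{\textin}$ at layer $d_{\max}$ or beyond, so it carries no shortest-path flow — this is exactly the kind of case-chase that \Cref{lem:layerproperties} is designed to support, and it mirrors the argument in \cite{OrecchiaZ14}, so I would lean on that structure throughout.
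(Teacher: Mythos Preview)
Your proposal is correct and follows essentially the same route as the paper: both arguments use \Cref{lem:layerproperties} to confine the support of any shortest-path/admissible flow to $B \cup N_{G'}^{\out}(B) \cup \{s,t\}$, then do a case analysis over the edge types $E_\nu, E_\infty, E_{\deg}, (s,x)$ to place each such edge inside $E''$, and finally invoke \Cref{lem:lg-size} (via \Cref{lem:binaryblockingflowalg}) for the $\ot(\nu/\epsilon)$ bound.

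One small framing difference is worth noting. The paper argues at the level of \emph{shortest-path flows}: it shows that any shortest-path flow in $G'_f$ has support contained in $LG$, and conversely that the zero-extension $z(\cdot)$ of any shortest-path flow in $LG_f$ is a shortest-path flow in $G'_f$, and then appeals to the fact that $\blockingflow$ outputs a shortest-path flow. Your formulation---that the two admissible graphs $A(G',c_{G'},f,\ell)$ and $A(LG,c_{LG},f_{LG},\ell)$ literally coincide (on the part reachable from $s$ and co-reachable from $t$)---is a slightly stronger intermediate claim, and it makes the deduction $f_1 = z(f_2)$ cleaner: once the inputs to the deterministic subroutine are identical, equality of outputs is immediate, whereas the paper's ``same class of flows'' argument leaves a small gap in justifying why the \emph{particular} outputs agree. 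Your anticipated obstacle (the frontier case $v_{\out} \in N_{G'}^{\out}(B) \setminus B_{\out}$ for $E_\infty$-edges) and its resolution are exactly the delicate point, and your sketch handles it correctly.
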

\begin{proof} 
 We focus on proving the first item. For notational convenience,
 denote $G'_f = (G', c_{G'}, f)$, and $LG_f = (LG, c_{LG}, f_{LG})$.  We show that there is no
 $(s,t)$-path in $G_f'$ if and only if there is no $(s,t)$-path in
 $LG_f$.   The forward direction follows immediately from the fact that
 $LG_f$ is a subgraph of $G'_f$.  Next, we show the backward
 direction. Let $U$ be a subset of vertices in graph $LG_f$
 such that $s \in U$ and $t \not \in U$ and there is no edge between
 $U$ and $V_{LG_f} \setminus U$. By \Cref{def:aug_graph_local}, $U \subseteq
 V_{LG_f}  = B \sqcup N_{G'}^{\out}(B) \sqcup \{ s \}$.  In fact, $U
 \subseteq B \{ s \}$  since   vertices in $N_{G'}^{\out}(B)$ have residual edges to
 sink $t$ with positive residual capacity by the construction of
 $LG_f$. Now, we claim that all edges at the boundary of $B \sqcup \{ s \}$ in $G'$ and
 $LG$ are the same. Indeed, all edges at the
 boundary of $B \sqcup \{ s \}$  have the form $(u,v)$ where $u \in B$
 and $v \in N_{G'}^{\out}(B)$ and $B \sqcup N_{G'}^{\out}(B) \subseteq
 V''$ in $LG_f$. Furthermore, there is no $(U,t)$ path in $LG_f$ where $U \ni s$. Therefore, there is no $(s,t)$ path in $G'_f$.  
 
For the rest of the proof, we assume that there is an $(s,t)$ path,
i.e., $d(t) = d_{\text{max}} < \infty$. 

We claim that a flow $f^*$ in $G'_f$ is shortest-path flow if
and only if the same flow $f^*$ when restricting to edges in $LG_f$ is
shortest-path flow. 

We show the forward direction. If $f^*$ in $G'_f$ is
shortest-path flow, then by definition of shortest-path
flow, the support of $f^*$ contains only vertices with $d(v) <
d_{\text{max}}$ and  $t$. By \Cref{lem:layerproperties}, 
$$ \{ s \} \sqcup L_1 \sqcup \ldots \sqcup L_{d_{\text{max}}-1} \sqcup
\{ t \} \subseteq \{s,t\} \sqcup  B \sqcup N_{G'}^{\out}(B)$$

We show that support of $f^*$ form a subgraph of $LG$. The edges are
either between consecutive layers $L_i, L_{i+1}$ or within a layer. We
can limit the edges using \Cref{lem:layerproperties}.  From $s$ to vertices in $L_1$, there is only one edge
$(s,x)$. Edges from $L_i$ to $L_{i+1}$ for $1 \leq i \leq
d_{\text{max}}-2$ must be of the form $\{(v_{\textin},v_{\out}),
(v_{\out},v_{\textin}),  \colon v_{\textin}, v_{\out}  \in B \}$
or  $\{ (v_{\textin}, w_{\out}) \colon w_{\out} \in
N_{G'}^{\out}(B), w_{\textin} \in   B\}$. From
$L_{d_{\text{max}}-1}$ to $L_{d_{\text{max}}}$, the edge must be of the form $\{ (v_{\out},t) \colon v_{\out} \in B_{\out} \}$. If the
edges are within a layer, then they must be modern since their length
is zero. This has the form of $\{ (u,v) \in E' \colon u, v \in B
\}$.  

Since the support of $f^*$ form a subgraph of $LG$, we can restrict
$f^*$ to the graph $LG$, and we are done with the forward direction of
the claim. 

We show the backward direction of the claim. Let $f'$ be a
shortest-path flow in $LG_f$. We can extend $f'$ to be the flow
in $G'_f$ by the function $z(f')$.  $$   
   z (f') (e) = \left\{ \begin{array}{rl}
   0 &\mbox{ if $e \not \in E''_f$.} \\ 
   f' (e) &\mbox{ otherwise}
  \end{array} \right. 
$$ 
 The support of the flow function $z(f')$ in $G'_f$ contains
vertices in  contains only vertices with $d(v) < d_{\text{max}}$ and
$t$ since $f'$ is the shortest-path flow.  Therefore, $z(f')$ is
the shortest-path flow in $G'_f$, and we are done with the
backward direction of the claim. 

Finally, the first item of the lemma follows since $\blockingflow$
outputs a shortest-path flow. 
 
The running time for the second item follows from the fact that number 
of edges $m'$ in $LG(G',B) $ is $O(\nu/\epsilon )$ by \Cref{lem:lg-size}.  By
\Cref{lem:binaryblockingflowalg}, $\blockingflow (A(LG,c_{LG}, f, \ell), \Delta)$  can be
computed in  $\ot(m') = \ot(\nu/\epsilon)$  time. 
%binary blocking flow computation in $LG_f'$ can be done in  $\ot( m' ) = \ot( \nu
%k)$ 
\end{proof}

\subsection{Local Goldberg-Rao's Algorithm for Augmented Graph}

\begin{theorem} \label{lem:local-gold-berg-rao}
Given graph $G$, we can compute the $(s,t)$ max-flow in $G'$  in $\ot
(\nu^{3/2}/(\epsilon^{3/2} \sqrt{k}))$ time. 
\end{theorem}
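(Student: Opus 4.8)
The plan is to run the Goldberg-Rao binary blocking flow algorithm to compute the $(s,t)$ max-flow in $G'$, but at every iteration replace the global blocking-flow computation with the local one of \Cref{lem:localbinaryblockingflow}. Recall from \Cref{lem:aug_graph_properties} that the max-flow value $F^*$ is at most $\nu/\epsilon+\nu+1 = O(\nu/\epsilon)$, so only a bounded amount of flow ever needs to be pushed. The standard Goldberg-Rao analysis shows that with the scaling parameter $\Delta$ decreasing geometrically and the binary length function $\tilde\ell$ of \Cref{def:gr-binary-length}, each phase (fixed $\Delta$) runs $O(\ldots)$ blocking-flow rounds before the distance $d_{\tilde\ell}(t)$ is forced to increase; since the graph on which flow actually travels is a unit-type ``split'' graph whose relevant size is $O(\nu/(\epsilon k))$ vertices (the middle $E_\nu$ edges are the only finite-capacity bottlenecks, and there are at most $O(\nu/(\epsilon k))$ of them before a cut of capacity $\le\nu/\epsilon+\nu$ is exceeded — this is exactly the speedup sketched in \Cref{sub:overview flow} à la Hopcroft–Karp), the number of scaling phases and the distance bound are both $\tilde O(\sqrt{\nu/(\epsilon k)})$ rather than $\tilde O(\sqrt{m})$. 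Multiplying $\tilde O(\sqrt{\nu/(\epsilon k)})$ rounds by the per-round cost $\tilde O(\nu/\epsilon)$ from \Cref{lem:localbinaryblockingflow} gives $\tilde O\!\big(\tfrac{\nu}{\epsilon}\cdot\sqrt{\tfrac{\nu}{\epsilon k}}\big)=\tilde O\!\big(\nu^{3/2}/(\epsilon^{3/2}\sqrt{k})\big)$, as claimed.

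Concretely, the steps I would carry out are: (1) State the Goldberg-Rao outer loop on $G'$: maintain a flow $f$, a scaling parameter $\Delta$ initialized to a power of two near $F^*$, and in each iteration compute $d_{\tilde\ell}(t)$ and call $\blockingflow$ on the admissible graph to push a $\Delta/4$-or-blocking flow; halve $\Delta$ whenever a phase ends. (2) Invoke the key structural fact that the flow only visits ``absorbed'' vertices: whenever $(x,t)$ is saturated — which happens after pushing $\deg^{\out}(x)\le F^*$ units, i.e. almost immediately — \Cref{lem:layerproperties} shows every layer $L_j$ with $1\le j\le d_{\max}-1$ lies in $B\cup N^{\out}_{G'}(B)$, so by \Cref{lem:localbinaryblockingflow} the blocking flow computed on the local graph $LG(G',B)$ equals the one on $G'$, and it costs only $\tilde O(\nu/\epsilon)$. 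Before $(x,t)$ is saturated, we can simply push flow along $s\to x\to t$ directly. (3) Bound the number of iterations. Here I would adapt the Goldberg-Rao potential argument: because the ``network'' carrying flow is essentially a unit network on $O(\nu/(\epsilon k))$ nodes (each out-vertex has out-degree $\ge k$, and the total volume absorbed is $\le F^*=O(\nu/\epsilon)$, so the number of split-nodes is $O(\nu/(\epsilon k))$, exactly \Cref{lem:lg-size}), the maximum $s$–$t$ distance under $\tilde\ell$ that needs to be reached is $O(\sqrt{\nu/(\epsilon k)}\,)$, and within a phase only $O(\sqrt{\nu/(\epsilon k)}\,)$ blocking-flow rounds suffice — this is where the ``unit network'' / Hopcroft–Karp-type counting enters, using that $F^*/\Delta$ is controlled and that saturating an $E_\nu$ edge removes one of the few bottlenecks. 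Summing over the $O(\log(\nu/\epsilon))$ scaling phases keeps the total round count $\tilde O(\sqrt{\nu/(\epsilon k)}\,)$. (4) Multiply round count by per-round cost and report the running time; also note the local-graph construction cost $O(\nu/\epsilon)$ per round from \Cref{cor:construct-lg-local} is absorbed.

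The main obstacle — and the part that needs the most care — is step (3): re-deriving the Goldberg-Rao iteration bound in the refined form $\tilde O(\sqrt{\nu/(\epsilon k)}\,)$ instead of the generic $\tilde O(\min\{\sqrt{m},n^{2/3}\})$. This requires the weighted analogue of the unit-network argument: showing that the special structure of the split graph (infinite-capacity $E_\infty$ edges, uniform-capacity $E_\nu$ edges, degree-capacity $E_{\deg}$ edges) forces the admissible graph to look like a layered unit network whose depth is governed by the number of $E_\nu$ bottlenecks, which is $O(\nu/(\epsilon k))$ by \Cref{lem:lg-size}. The paper flags this explicitly as ``more complicated to implement since it is weighted''; the delicate points are handling the special edges of \Cref{def:gr-binary-length} correctly so that blocking-flow augmentation still strictly increases $d_{\tilde\ell}(t)$, and verifying that the $\Delta$-scaling does not blow up the count because $F^*=O(\nu/\epsilon)$ bounds the number of phases. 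Everything else — correctness of the output (max-flow), locality of each round, and the final arithmetic — follows routinely from \Cref{lem:aug_graph_properties,lem:layerproperties,lem:localbinaryblockingflow,lem:lg-size} and the classical Goldberg-Rao correctness proof.
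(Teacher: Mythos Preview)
Your proposal is correct and follows essentially the same approach as the paper: saturate $(x,t)$ first by pushing along $s\to x\to t$, then run Goldberg--Rao with each blocking-flow round replaced by the local computation of \Cref{lem:localbinaryblockingflow}, and bound the number of rounds per phase by $\Lambda=\Theta(\sqrt{\nu/(\epsilon k)})$ via a unit-network argument on the $n'=O(\nu/(\epsilon k))$-vertex local graph. For your step~(3), the paper's precise instantiation is a layer pigeonhole: after $\Lambda\ge\sqrt{n'}$ blocking-flow augmentations one has $d(t)\ge\Lambda+3$, so among the $\ge\sqrt{n'}$ middle layers some layer $L'$ has $|L'|\le\sqrt{n'}$; since every modern length-$1$ edge between consecutive middle layers is an $E_\nu$ edge (or its reverse) with residual $<\Delta$ and each vertex of $L'$ has exactly one such incident edge, this yields a residual $(s,t)$-cut of capacity $\le\Delta\sqrt{n'}\le\Delta\Lambda=F/2$.
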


\begin{algorithm}[H]
\KwIn{$x \in V, \nu, k$} 
\KwOut{maximum $(s,t)$-flow  and its corresponding minimum
  $(s,t)$-edge-cut in $G'$}  %an out-vertex shore $S$ or $\perp$}     
\BlankLine
Let $G'$ be an \textit{implicit} augmented graph on $G$. \tcp*{No
  need to construct explicitly.} 
$\Lambda \gets \sqrt{8\nu/(\epsilon k)} $ \;%$\Lambda = \min( (\nu k)^{0.5}, 2 (2\nu)^{2/3})$ \;
$F \leftarrow 2\nu k+ \nu + 1 -\text{deg}_{G}^{\out}(x) $ \tcp*{$F$ is an upper bound on
  $(s,t)$-flow value in $G'$.} 
\lIf{ $F  \leq 0 $} { the minimum $(s,t)$-edge-cut is $(s,x)$, and return. } 
$f \leftarrow $ a flow of value $\text{deg}_{G}^{\out}(x)$ through
$s-x-t$ path.  \;
$B \leftarrow \{ x \} \sqcup N_{G'}^{\text{out}}(x)$ \tcp*{a set of
  saturated vertices and out-neighbors.}  
\While{ $F \geq 1$} 
{
  $\Delta \leftarrow F/(2\Lambda)$ \;
  \For{ $i\leftarrow 1$ \KwTo $5\Lambda$} {
    $LG \leftarrow $ local subgraph of $G'$ given $B$. \tcp*{ see  \Cref{def:aug_graph_local}, \Cref{cor:construct-lg-local}}
    $\ell \gets $ local length function on current flow $f$.\;
    $f \leftarrow f + \blockingflow(A(LG,c_{LG}, f, \ell), \Delta)$. \;%low (A(LG_f, d_{\ell}), \Delta)$\;
    $C \gets $ vertices in $N_{G'}^{\text{out}}(B)$ whose edges to sink
    are saturated in the new flow. \;
    $B \gets B \sqcup C \sqcup N_{G'}^{\text{out}}(C)$ \;
  }
  $F \gets F/2$ \; 
} 
\Return{maximum $(s,t)$-flow $f$ and its corresponding minimum
  $(s,t)$-edge-cut $A$ in $G'$.}
%\tcp*{an exact minimum generic cut  $\Lambda^{(0)}$.} 
\caption{LocalFlow$(G, x,\nu,k)$}
\label{alg:localgoldbergrao}
\end{algorithm}

\textbf{Correctness.} We show that $F$ is the upper bound on the
maximum flow value in $G'_f$. We use induction on inner loop. Before
entering the inner loop for the first time, $F$ is set to be the value
of $(s,t)$ edge minus $\text{deg}_{G}^{\out}(x)$. Since $F$ is
positive, then $G_f$ has valid maximum flow upper bound $F$.  Now, we
consider the inner loop. After $5\Lambda$ times, either 
\begin{itemize}[noitemsep, nolistsep]
\item we find a flow of value $\Delta/4$ at least $4\Lambda$ times, or 
\item we find a blocking flow at least $\Lambda$ times. 
\end{itemize}
If the first case holds, then we increase the flow by at least
$\geq (\Delta/4)(4\Lambda) = F/2$. Hence, the flow $F/2$ is the valid
upper bound. For the second case, we need the following Lemma whose
proof is essentially the same as the original proof of Goldberg-Rao's algorithm:
\begin{lemma} \label{lem:block-flow-increase-dist}
A flow augmentation does not decrease the
distance $d(t)$. On the other hand, a blocking flow augmentation strictly increases $d(t)$. 
\end{lemma}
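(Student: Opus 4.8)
The plan is to reproduce the distance‑monotonicity argument underlying the analysis of Goldberg and Rao \cite{GoldbergR98}, adapted to the local binary length function $\ell$ of \Cref{def:local-length-function} and to the fact that, inside the inner loop of \Cref{alg:localgoldbergrao}, the parameter $\Delta$ is frozen while the set $B$ only grows. Fix such a $\Delta$. Let $f$ be the flow before one augmentation step, let $\ell$ be the local length function on the residual graph $G'_f=(G',c_{G'},f)$ and $d$ the induced distances from $s$; let $f'$ be the shortest‑path flow produced by $\blockingflow(A(\cdot),\Delta)$ (which by \Cref{lem:localbinaryblockingflow} is the same flow whether it is computed on $LG$ or on $G'$), set $g=f+f'$, and let $\ell'$, $d'$ denote the local length function and distances on $G'_g=(G',c_{G'},g)$.

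\textbf{Monotonicity.} First I would show $d'(v)\ge d(v)$ for every $v\in V'$; the ``does not decrease'' assertion is then the case $v=t$. Argue by induction on the $\ell'$-length of a shortest $(s,v)$-path in $G'_g$: if $(u,v)$ is the last edge of such a path then $d'(v)=d'(u)+\ell'(u,v)\ge d(u)+\ell'(u,v)$ by induction, so it suffices to verify $d(v)\le d(u)+\ell'(u,v)$ for every residual edge $(u,v)$ of $G'_g$. If $(u,v)$ is new in $G'_g$, or has strictly larger residual capacity than in $G'_f$, then $f'$ routed flow along the reverse edge $(v,u)$; a shortest‑path flow uses only $\ell$-admissible edges, so $d(u)=d(v)+\ell(v,u)\ge d(v)$, hence $d(v)\le d(u)\le d(u)+\ell'(u,v)$. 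Otherwise $(u,v)$ is an edge of $G'_f$ untouched by $f'$, so its residual capacity is unchanged and the only way $\ell$ could have changed on it is that both endpoints entered $B$ during the inner loop, turning a classical edge (length $1$) into a modern one (length $\tilde\ell\in\{0,1\}$). The crux of the argument is to rule out that this ``classical $\to$ modern'' transition creates a shortcut, i.e.\ to show $d(v)\le d(u)+\tilde\ell(u,v)$ still holds. Here I would invoke \Cref{lem:layerproperties} for the residual graph at the moment $B$ last grew: an out‑vertex joins $B_{\out}$ only when its edge to $t$ gets saturated, which under a shortest‑path augmentation happens only for out‑vertices in layer $d_{\max}-1$, and their out‑neighbours added alongside them sit in layers $\ge d_{\max}-1$; together with parts (III) and (IV) of \Cref{lem:layerproperties} this pins down enough of the layering to conclude $d(v)\le d(u)+\tilde\ell(u,v)$. (The special‑edge clause of \Cref{def:gr-binary-length} is exactly what makes the zero‑length modern edges behave, as in \cite{GoldbergR98}: if $f'$ saturates a special edge $(u,v)$, then the reverse $(v,u)$ already had residual capacity $\ge\Delta$ in $G'_f$ and so falls under the first case above.)

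\textbf{Strict increase.} Now assume $f'$ is a blocking flow rather than a $\Delta/4$-flow, so by \Cref{def:blockingflow} it saturates an edge of every $(s,t)$-path of the admissible graph $A(G',c_{G'},f,\ell)$. Suppose for contradiction that $d'(t)\le d(t)$; by monotonicity $d'(t)=d(t)$, so $G'_g$ has a shortest $(s,t)$-path $P$ of $\ell'$-length $d(t)$. Following Goldberg--Rao, an edge‑by‑edge analysis of $P$ — splitting its edges into those of $G'_f$ and those created by reversing edges that $f'$ routed flow on, and using the special‑edge clause of \Cref{def:gr-binary-length} to control zero‑length reversals — shows that $P$ gives rise to an $(s,t)$-path inside $A(G',c_{G'},f,\ell)$ that $f'$ leaves unsaturated, contradicting that $f'$ is blocking. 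Hence $d'(t)\ge d(t)+1$.

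I expect the untouched‑edge subcase of the monotonicity step to be the only genuinely new obstacle: unlike in the original Goldberg--Rao setting, $\ell$ depends not only on $\Delta$ and residual capacities but also on the growing set $B$, so I must rule out that any ``classical $\to$ modern'' transition forced inside the inner loop shortens a distance. The layering guaranteed by \Cref{lem:layerproperties} is the tool that resolves this; everything else is the standard Goldberg--Rao argument.
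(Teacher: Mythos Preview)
Your proposal follows the same overall route the paper takes—reduce to the Goldberg--Rao distance-monotonicity argument—but you are considerably more explicit than the paper itself. The paper's ``proof'' of \Cref{lem:block-flow-increase-dist} is a two-sentence sketch: it observes that the only danger is an edge whose length drops from $1$ to $0$, notes such an edge must be modern afterwards, and then simply defers to the special-edge mechanism of \cite{GoldbergR98}. You go further by singling out the one phenomenon that is genuinely new relative to \cite{GoldbergR98}: because the set $B$ grows inside the inner loop, an \emph{untouched} edge can switch from classical (length $1$) to modern (length $\tilde\ell\in\{0,1\}$), and this length drop is not caused by any residual-capacity change and so is not covered by the special-edge clause at all. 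The paper's sketch does not address this case explicitly; you correctly identify \Cref{lem:layerproperties} as the tool that controls it, via the observation that newly-saturated out-vertices sit in layer $d_{\max}-1$ and their freshly-added neighbours, not being in $B$, sit in layers $\ge d_{\max}-1$ by the contrapositive of part~(III). One word of caution: your induction asks for $d(v)\le d(u)+\ell'(u,v)$ on every residual edge, and for a newly-modern $E_\infty$ edge $(v_{\out},w_{\textin})$ with $d(v_{\out})=d_{\max}-1$ this fails when $d(w_{\textin})\ge d_{\max}$; so to close the argument cleanly you will want to argue monotonicity only along $s$--$t$ paths (where the last edge into $t$ is always classical of length $1$, and the preceding out-vertex is unsaturated hence outside $B'$), rather than for every vertex. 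With that adjustment your plan is sound and strictly more informative than the paper's own sketch.
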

\begin{proof}[Proof Sketch]
%The first part's argument is identical to the classic Gaoberg-Rao
The only issue for a blocking flow augmentation is that $s-t$ distance in
residual graph may not increase if an edge length decrease from 1 to 0. This happens when
such an edge is modern since classical edges have a constant length of
1. The proof that modern edges do not have the issue follows exactly
from the classic Gaoberg-Rao algorithm \cite{GoldbergR98} using the
  notion of special edges.
\end{proof} 

If the second case holds, we claim: 
\begin{claim}
If we find a blocking flow at least $\Lambda$ times, then there exists an $(s,t)$-edge cut of
capacity at most $\Delta \Lambda = F/2$, which is an upper bound of the remaining
flow to be augmented. 
\end{claim}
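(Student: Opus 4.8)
The plan is to run the standard Goldberg--Rao phase count, specialized to the split structure of $G'$. By \Cref{lem:block-flow-increase-dist}, each of the (at least) $\Lambda$ blocking-flow augmentations performed in the inner loop strictly increases the distance $d(t)=d_{\max}$ of $t$ from $s$ under the current local length function $\ell$, while the interspersed $\Delta/4$-augmentations never decrease it. Hence at the end of the inner loop either $d_{\max}=\infty$, in which case $G'_f$ has no $(s,t)$-path at all, the remaining flow is $0$, and the empty edge set is a cut certifying the claim; or $d_{\max}$ is finite and, since $d_{\max}\ge 3$ at the start by \Cref{lem:layerproperties}\ref{item:layerproperties1} and $d_{\max}$ has grown by at least $\Lambda$, we have $d_{\max}\ge\Lambda+3$. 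From now on I would assume this case.

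Next I would locate a sparse layer and extract a cut from it. By \Cref{lem:layerproperties}\ref{item:layerproperties3} the interior layers $L_1,\dots,L_{d_{\max}-2}$ all lie in the split-node-saturated set $B$, hence inside the vertex set $V''$ of the local graph $LG$, which by \Cref{lem:lg-size} has size $n'\le 8\nu/(\epsilon k)=\Lambda^2$. Since $d_{\max}-2\ge\Lambda+1$, the $\Lambda$ layers $L_2,\dots,L_{\Lambda+1}$ are pairwise disjoint subsets of $V''$, so their sizes sum to at most $\Lambda^2$ and some index $i\in\{2,\dots,\Lambda+1\}$ has $|L_i|\le\Lambda$. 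Let $C$ be the set of residual edges of $G'_f$ leaving $\{v:d(v)<i\}$; this is an $(s,t)$-edge-cut. Because $d$ is a shortest-path distance under the binary $\ell$, every edge of $C$ runs from $L_{i-1}$ to $L_i$ and has $\ell$-length $1$; as $1\le i-1$ and $i\le d_{\max}-2$, both $L_{i-1}$ and $L_i$ lie in $B$, so every edge of $C$ is \emph{modern}, and being of $\ell$-length $1$ it has residual capacity $<\Delta$ (by \Cref{def:gr-binary-length} and \Cref{def:local-length-function}). Using the split structure of $G'$ (\Cref{def:aug_graph}) — each out-node $v_{\out}$ has a unique in-edge $(v_{\textin},v_{\out})$ of capacity $\nu/(\epsilon k)$, each $E_\infty$ edge has infinite capacity and so contributes to $C$ only through its reverse, and $(s,x_{\out})$ and the $E_{\deg}$ edges lie outside this layer range — I would group the edges of $C$ by the node of $L_i$ they are incident to and, via flow conservation at that node, bound the total residual capacity of $C$'s edges at each such node by $\nu/(\epsilon k)$; summing gives $c_{G'}(C)\le|L_i|\cdot\nu/(\epsilon k)\le\Lambda\cdot\nu/(\epsilon k)$.

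Combining this with the per-edge bound $<\Delta$ (and the relation $\Delta=F/(2\Lambda)$, $\Lambda=\sqrt{8\nu/(\epsilon k)}$) should yield $c_{G'}(C)\le\Delta\Lambda=F/2$, and since $C$ is an $(s,t)$-edge-cut the max-flow/min-cut theorem bounds the remaining augmentable flow in $G'_f$ by $F/2$, which is the assertion. I expect this last capacity estimate to be the main obstacle: squeezing the bound down to $\Delta\Lambda$, rather than settling for the weaker $\Lambda\cdot\nu/(\epsilon k)$ or $|C|\cdot\Delta$ that a crude layer- or edge-count gives, is exactly the point where one must exploit that the split graph behaves like a unit network — no split-node carries more than $\nu/(\epsilon k)$ units of flow — and where the choice of the layer index $i$, of the side $\{v:d(v)<i\}$, and of the parameters all have to be used in concert; this is the analogue, on our weighted split graph, of the $\Theta(\sqrt n)$-phase bound for Dinic on unit networks.
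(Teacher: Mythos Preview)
Your outline is on the right track through the point where you locate a sparse layer $L_i$ with $|L_i|\le\Lambda$ and argue that every residual edge of the cut $C=\delta^{\out}(\{v:d(v)<i\})$ is modern, has length $1$, and hence residual capacity $<\Delta$. The gap is exactly the step you yourself flag as the ``main obstacle'': you never establish that $C$ contains at most $|L_i|$ edges. Your flow-conservation bound gives only $\nu/(\epsilon k)$ per vertex of $L_i$ --- the full $E_\nu$-edge capacity --- so you end up with $|L_i|\cdot\nu/(\epsilon k)\le\Lambda\cdot\nu/(\epsilon k)$. This quantity is \emph{fixed}, whereas $\Delta\Lambda=F/2$ shrinks geometrically as the outer loop halves $F$; no ``combining'' with the per-edge bound $<\Delta$ can rescue the conclusion, because you have neither $|C|\le\Lambda$ nor $\nu/(\epsilon k)\le\Delta$.

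The paper uses the unit-network structure in a sharper way than you do: rather than bounding the \emph{capacity} passing through each split-node, it \emph{counts edges}. The point is that a modern length-$1$ residual edge between consecutive interior layers cannot be a forward $E_\infty$ edge (those have infinite residual capacity, hence length $0$ when modern), so the cut edges are $E_\nu$-type (forward or reverse). Since every $v_{\textin}$ has a unique outgoing edge and every $v_{\out}$ a unique incoming edge in $G'$ (the $E_\nu$ edge), the cut can be taken to consist of one such edge per vertex of $L'$. That gives at most $|L'|\le\sqrt{n'}\le\Lambda$ edges, each of residual capacity $<\Delta$, hence total $\le\Delta\Lambda=F/2$ directly. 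In short: you invoke ``unit network'' to bound flow-through-a-node by $\nu/(\epsilon k)$; the paper invokes it to bound the \emph{number} of cut edges by $|L'|$, and that is what makes the $\Delta$ factor survive.
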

\begin{proof}
 
Before entering the inner loop for the first time, by
\Cref{lem:layerproperties}, $d(t) = d_{\text{max}} \geq 3$. After
$\Lambda$ blocking flow augmentation, $d(t) \geq 3 + \Lambda$ by
\Cref{lem:block-flow-increase-dist}. Since the $\Delta$-blocking flow in
$G'$ on $B$ and $LG$ coincide by \Cref{lem:localbinaryblockingflow}, we
always get the correct blocking flow augmentation. 

Let $L_0, L_1, \ldots, L_{d_{\text{max}}}$ be the layers of vertices
with distance $0, 1, \ldots, d_{\text{max}} = d(t) \geq  3 +
\Lambda $. We focus on edges between layers $L_i, L_{i+1}$ for $1 \leq
i \leq d_{\text{max}}-2$. By \Cref{lem:layerproperties},  any two vertices $v_1 \in L_i,
v_2 \in L_{i+1}$ must be in $B$. Therefore, by definition of local
length function $\ell$, all edges between $L_i, L_{i+1}$ must be
modern. Since any edge between $L_i, L_{i+1}$ is modern, and has
length 1, it must be of the form $(v_{\textin}, v_{\out})$ or
$(v_{\out}, v_{\textin})$ with residual capacity $\leq \Delta$ by
definition of local length function (\Cref{def:local-length-function}).

Since there are at least $\Lambda  =  \sqrt{8\nu/(\epsilon k)} \geq \sqrt{n'}$
layers $L_i$ (By \Cref{lem:lg-size}) where $1 \leq
i \leq d_{\text{max}}-2$, by counting argument, there must be a layer
$L'$ such that $|L'| \leq \sqrt{n'}$. 

Next, for any vertex $v \in L'$, $v$ has either a single outgoing edge or a single incoming edge by
construction of $G'$ since this edge must be of the form  $(v_{\textin}, v_{\out})$ or
$(v_{\out}, v_{\textin})$. 

Therefore, we find an $(s,t)$-edge-cut consisting of the single
incoming-or-outgoing edge from each node in $L'$. This cut has
capacity at most $\Delta \sqrt{n'} \leq \Delta   \sqrt{8\nu/(\epsilon
  k)}   = \Delta \Lambda = F/2$. 
\end{proof}

The correctness follows since at the end of the loop we have $F < 1$. 

\textbf{Running Time.}  By \Cref{lem:localbinaryblockingflow}, we can
compute $\Delta$-blocking flow in $LG$ with local binary length function
$\ell$ in $\ot(\nu/\epsilon)$ time. The time already includes the time to
read $LG$. The number of such computations is
$O(\Lambda \log (\nu /\epsilon) ) = O(\sqrt{\nu/(\epsilon k)} \log (m) ) = \ot(
\sqrt{\nu/(\epsilon k)})$.  So the total running time is
$\ot(\nu^{3/2} / (\epsilon ^{3/2} k^{1/2}))$. This
completes the proof of \Cref{lem:local-gold-berg-rao}.

\subsection{Proof of \Cref{thm:local-vertex-connectivity}}

\begin{proof}[Proof of \Cref{thm:local-vertex-connectivity}]
Given $G, x, \nu, k, \epsilon$,  by \Cref{lem:local-gold-berg-rao}, we compute the minimum
$(s,t)$-edge-cut $C^*$ in $G'$ in $ \ot(\nu^{3/2} / (\epsilon ^{3/2} k^{1/2})$ time. If the
edge-cut $C^*$ has capacity $> \nu/\epsilon + \nu$, then by \Cref{lem:aug_graph_properties}\ref{item:aug_graph_properties_one}, we can
output $\perp$. Otherwise, $C^*$ has capacity at most $\nu/\epsilon +
\nu$, by \Cref{lem:aug_graph_properties}\ref{item:aug_graph_properties_two}, we can
output the separation triple $(L,S,R)$ with the properties in
\Cref{lem:aug_graph_properties}\ref{item:aug_graph_properties_two}. 
\end{proof}                    
	\section{Vertex Connectivity via Local Vertex Connectivity}

\begin{theorem} [Exact vertex connectivity]
\label{thm:exact_vertex_connectivity}
There exist randomized (Monte Carlo) algorithms that take as inputs a
graph $G$, integer $0 < k < O(\sqrt{n})$, and in  $ \ot ( m + k^{7/3}n^{4/3} ) $ time for undirected graph (and
in $\ot (\min( km^{2/3}n, km^{4/3}))  $ time for directed graph) can
decide w.h.p. if $\kappa_G \ge k$.   If $\kappa_{G}<k$,	then the
algorithms also return the corresponding vertex-cut. 

\end{theorem}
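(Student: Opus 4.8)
The plan is to reduce the theorem to the directed case and then assemble the decision procedure from three ingredients: the exact local subroutine of \Cref{cor:exact-local-vertex-connectivity}, the single-pair vertex-connectivity oracle of \Cref{pro:easyff}, and random sampling. For an undirected input I would first run the Nagamochi--Ibaraki certificate (\Cref{thm:sparsification}) in $O(m)$ time to obtain a subgraph $H=H_k$ with $\kappa_H=\min(\kappa_G,k)$, arboricity $k$, and at most $kn$ edges; bidirect $H$ and henceforth work with a directed graph on $n$ vertices and $\Theta(kn)$ edges, while additionally exploiting the arboricity bound $\vol_H(U)=O(k|U|)$ for all $U$ when analyzing the undirected running time. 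In either setting one first checks whether $\dmin\ge k$ in $G$ and in $G^{R}$; if this fails, \Cref{obs:kappa-degree} applied to the offending orientation already exhibits a vertex-cut of size $<k$ and we stop. So assume $\dmin\ge k$ in both orientations (in particular $m=\Omega(nk)$). The procedure announces ``$\kappa_G<k$'' only by producing an actual vertex-cut of size $<k$, and announces ``$\kappa_G\ge k$'' if it never finds one; hence the only thing to prove is that when $\kappa_G<k$ the procedure w.h.p.\ finds such a cut.

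Assume $\kappa_G<k$, and let $S=N_G^{\out}(L)=N_G^{\textin}(R)$ be a minimum cut with canonical shores $L,R$ (so $L,S,R$ partition $V$, $|S|=\kappa_G<k$, and there is no edge from $L$ to $R$); running the whole procedure also on $G^{R}$ lets us assume w.l.o.g.\ that $|L|\le n/2$, whence $|R|\ge(n-k)/2=\Omega(n)$. Since every out-edge of a vertex of $L$ has its head in $L\cup S$, we get $k|L|\le\vol_G^{\out}(L)\le|L|(|L|+|S|)=O(|L|^{2}+k|L|)$, so $|L|$ is pinned down, up to constants, by $\nu:=\vol_G^{\out}(L)$ (namely $|L|=\Theta(\nu/k)$ if $\nu\le k^{2}$ and $|L|=\Theta(\sqrt\nu)$ if $\nu\ge k^{2}$); for the undirected sparsifier the arboricity bound sharpens this to $\vol_H(L)=\Theta(k|L|)$ outright. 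Note also that $L\ni x$ is an out-vertex shore with $|N_G^{\out}(L)|\le k-1$ for every $x\in L$, and that $S$ is an $(x,y)$-vertex-cut for every $x\in L$ and $y\in R$. The procedure does not know $\nu$, so it tries all $O(\log m)$ powers of two as guesses and relies on the correct one.

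Fix a threshold $\nu_{0}$. For a guess $\nu\le\nu_{0}$ the procedure samples $\ot(m/\nu)$ vertices with probability proportional to out-degree --- so w.h.p.\ some sample lands in $L$, since that happens with probability $\vol_G^{\out}(L)/m\approx\nu/m$ --- and from each sampled $x$ runs \Cref{cor:exact-local-vertex-connectivity} with target volume $\Theta(\nu)$ and target cut size $k-1$, at cost $\ot(\nu^{3/2}k)$ each; if $x\in L$ the subroutine cannot return $\perp$ (the configuration $L\ni x$ is the one it is forbidden to miss) and so returns a cut of size $\le k-1<k$. For a guess $\nu>\nu_{0}$ the procedure samples $\ot(m/\nu)$ vertices proportional to out-degree (hitting $L$) together with $O(\log n)$ vertices uniformly (hitting $R$, using $|R|=\Omega(n)$), and for each resulting pair $(x,y)$ runs \Cref{pro:easyff} with parameter $k-1$ in $\ot(km)$ time; if $x\in L$ and $y\in R$ then $\kappa_G(x,y)=\kappa_G<k$ and the oracle returns a cut of size $<k$. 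Summed over guesses, the first regime costs $\ot(mk\sqrt{\nu_{0}})$ and the second costs $\ot(km^{2}/\nu_{0})$, and $\nu_{0}=\Theta(m^{2/3})$ balances them at $\ot(km^{4/3})$. Re-running the same scheme with \emph{uniform} sampling in both regimes (using $|L|=\Omega(\sqrt\nu)$ for $\nu\ge k^{2}$; the complementary range only costs $\ot(nk^{3})$) turns the two costs into $\ot(nk\nu_{0})$ and $\ot(kmn/\sqrt{\nu_{0}})$, again balanced by $\nu_{0}=\Theta(m^{2/3})$ at $\ot(km^{2/3}n)$; using whichever variant is cheaper for the given $m,n,k$ gives the directed bound $\ot(\min\{km^{2/3}n,\,km^{4/3}\})$. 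For an undirected $G$ we run this on the bidirected sparsifier $H$, except that $\vol_H(L)=\Theta(k|L|)$ lets us threshold directly on $|L|\le\tau$ and sample uniformly ($\ot(n/|L|)=\ot(nk/\nu)$ samples to hit $L$): the two regime costs become $\ot(nk^{5/2}\sqrt\tau)$ and $\ot(k^{2}n^{2}/\tau)$ (the oracle now runs in $\ot(k\cdot kn)=\ot(k^{2}n)$ time), $\tau=\Theta(n^{2/3}/k^{1/3})$ balances them at $\ot(k^{7/3}n^{4/3})$, and adding the $O(m)$ sparsification cost gives $\ot(m+k^{7/3}n^{4/3})$. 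The preconditions of \Cref{cor:exact-local-vertex-connectivity} hold throughout the relevant parameter range; the remaining tiny instances are handled by a classical $O(n^{2})$-time algorithm, which is within budget there since $\dmin\ge k$ forces $m=\Omega(nk)$.

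For correctness: if $\kappa_G\ge k$ there is no vertex-cut of size $<k$, so neither subroutine ever returns one and the answer is correctly ``$\kappa_G\ge k$''. If $\kappa_G<k$, then for the orientation and regime determined by a minimum cut as above, the correct guess of $\nu$ (or $\tau$) is among the $O(\log m)$ tried, and for that guess the relevant sample hits $L$ (and, in the second regime, also $R$) w.h.p.; a union bound over the $\polylog$ many sampling events gives overall success w.h.p. This decides $\kappa_G\ge k$ and, when $\kappa_G<k$, produces \emph{some} cut of size $<k$; to return a \emph{minimum} cut, binary search for the smallest $\kappa^{\ast}\in\{1,\dots,k-1\}$ for which running the decision procedure with parameter $\kappa^{\ast}+1$ returns a cut --- the cut it returns then has size exactly $\kappa_G=\kappa^{\ast}$ --- which adds only an $O(\log k)$ factor. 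The single genuinely new ingredient, the local subroutine, is already in hand; given it, the only points that need care --- and the main place an argument could go wrong --- are the structural estimate $k|L|\le\vol_G^{\out}(L)\le O(|L|^{2}+k|L|)$ (this is exactly what turns ``a random vertex lands in $L$'' into a concrete sample count and calibrates the subroutine's target volume) and the choice of threshold and of sampling distribution (plus the extra arboricity leverage in the undirected case) that makes the local-versus-max-flow trade-off balance to the stated exponents; both are routine once the subroutine exists.
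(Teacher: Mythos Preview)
Your proposal is correct and follows essentially the same approach as the paper: sparsify (undirected case), sample to hit the small side of an optimal separation triple, run the exact local subroutine of \Cref{cor:exact-local-vertex-connectivity} when the target side has small volume and the $O(km)$ pair oracle of \Cref{pro:easyff} when it is large, and balance the two regimes. The paper packages this as Algorithm~2 with ``edge sampling'' (sample edges and try both endpoints, using the quantity $\vol^*(L)$ of \Cref{def:lrvol}) and ``vertex sampling'' (guess $s\approx|L|$ and set $\nu=O(s(s+k))$), choosing $a=m^{2/3}$ or $a=m^{1/3}$ exactly as you do; your degree-proportional vertex sampling and direct guessing of $\nu$ are equivalent reparameterizations, and your undirected analysis via the arboricity bound $\vol_H(L)=O(k|L|)$ is the same as the paper's edge sampling on $H_{k+1}$, since on the sparsifier the two sampling distributions coincide up to constants.

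Two small points worth tightening. First, for the lower bound in your claim $\vol_H(L)=\Theta(k|L|)$ you implicitly use $\dmin(H)\ge k$, which need not hold in the sparsifier even when $\dmin(G)\ge k$; the paper handles this by checking $\dmin$ on the working graph and, more to the point, your running-time argument only needs the \emph{upper} bound $\vol_H(L)=O(k|L|)$ together with uniform-sampling probability $|L|/n$, so the lower bound is unnecessary. Second, a cut returned by LocalVC on the bidirected sparsifier is a cut in $H_{k+1}$ but not automatically in $G$; once you have a separating pair $(x,y)$ you should run the $O(km)$ oracle on $G$ itself to extract a genuine cut in $G$ --- a detail the paper also leaves implicit.
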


%For approximation algorithm, we define the function $T(k,m,n)$ as 

We define the function $T(k,m,n)$ as 
\begin{equation} \label{eq:exact-directed-time}
T(k,m,n)= \left\{ \begin{array}{rl}
  \min(m^{4/3}, nm^{2/3}k^{1/2},\\ mn^{2/3+o(1)}/k^{1/3}, \\ n^{7/3+o(1)}/k^{1/6} )    &\mbox{ if $ k \leq n^{4/5}$,} \\
  %  \min(mn^{2/3+o(1)}/k^{1/3}, \\n^{7/3+o(1)}/k^{1/6})  &\mbox{ if $ \sqrt{n} < k \leq n^{4/5}$, }\\
   n^{3+o(1)}/k    &\mbox{ if $k > n^{4/5}.$ }
  \end{array} \right. 
\end{equation}

\begin{theorem} [Approximate vertex connectivity] \label{thm:approx_vertex_connectivity}
There exist randomized (Monte Carlo) algorithms that take as inputs a
graph $G$, an positive integer $k$, and positive real $\epsilon < 1$,
and in $\ot ( m+\poly(1/\epsilon) \min( k^{4/3} n^{4/3}, k^{2/3}
n^{5/3+o(1)}, n^{3+o(1)}/k)) $ % $\ot ( m+\poly(1/\epsilon) \min( k^{4/3} n^{4/3}, n^{2+o(1)} ) )$
time for undirected graph 
%$\ot ( \min(
%\frac{k^{4/3}n^{4/3}}{\epsilon^{3/2}}, \frac{kn^{3/2}}{\epsilon^{3/2}}) )$ time for undirected
(and in $\ot ( \poly(1/\epsilon) T(k,m,n))$ time for directed graph
where $T(k,m,n)$ is defined as in \Cref{eq:exact-directed-time})
w.h.p. return  a vertex-cut with size at most
$(1+O(\epsilon))\kappa_G$ or cerify  that $\kappa_G \geq k$. %  if $\kappa_{G}\ge k$ or $\kappa_{G}\le(1-\epsilon)k$
%If $\kappa_{G}\le(1-\epsilon)k$, the algorithms also return the corresponding vertex-cut.

\end{theorem}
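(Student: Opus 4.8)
The plan is to prove \Cref{thm:approx_vertex_connectivity} with the same randomized framework as \Cref{thm:exact_vertex_connectivity}: the only change is that we invoke the \emph{approximate} local subroutine \Cref{thm:local-vertex-connectivity} with the given $\epsilon$ instead of with $\epsilon=1/(2k)$ (as \Cref{cor:exact-local-vertex-connectivity} does), and that in the ``balanced'' step below we may replace the exact $s$-$t$ routine of \Cref{pro:easyff} by a fast \emph{approximate} max-flow on the $x$-$y$ split graph. First I would do the standard reductions: a directed $G$ that is not strongly connected is reported with the empty cut; an undirected $G$ is replaced by the bidirection of the Nagamochi--Ibaraki certificate $H_k$ of \Cref{thm:sparsification}, which has arboricity $\le k$, has $O(kn)$ edges, and has the property that its vertex cuts of size $<k$ are exactly those of $G$ (this is the ``$m$'' in the undirected bound, and forces $m=O(kn)$ thereafter); and by replacing $k$ with $\min\{k,\deg^{\mathrm{out}}_{\min},\deg^{\mathrm{in}}_{\min}\}$, remembering the associated degree cut, we may assume $\deg^{\mathrm{out}}_{\min},\deg^{\mathrm{in}}_{\min}\ge k$, so the precondition $\dmin\ge k$ of \Cref{thm:local-vertex-connectivity} holds for $G$ and for $G^R$.

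The core procedure then enumerates a guess $\kappa\in\{\lceil(1+\epsilon)^i\rceil\}$ for the cut size (up to $k$) and a power-of-two threshold $\theta$ for the smaller side, and for each $(\kappa,\theta)$ and each orientation $G,G^R$ proceeds as follows; write $(L^*,S^*,R^*)$ for a minimum separation triple with $S^*=N^{\mathrm{out}}(L^*)$ and $|L^*|\le|R^*|$. \emph{Unbalanced} ($|L^*|<\theta$): for $\theta'=1,2,\dots,\theta$ sample $\ot(n/\theta')$ vertices uniformly and, from each, run \Cref{thm:local-vertex-connectivity} with cut-target $\kappa$ and out-volume target $\nu=\Theta(k\theta')$ --- legitimate because $\vol^{\mathrm{out}}(L^*)=O(k|L^*|+k^2)$ in the arboricity-$k$ graph, while for general directed graphs one samples proportional to out-degree and uses $\vol^{\mathrm{out}}(L^*)\le|L^*|\,\deg^{\mathrm{out}}_{\max}$ --- each call costing $\ot(\nu^{3/2}/(\epsilon^{3/2}\kappa^{1/2}))$ (the preconditions \Cref{eq:local_condition}/\Cref{eq:local_condition_dense} hold as long as $\nu$ stays below the chosen threshold). \emph{Balanced} ($|L^*|,|R^*|\ge\theta$): sample $\ot(n/\theta)$ uniform pairs $(x,y)$ --- since the larger side has $\Omega(n)$ vertices this many pairs hit $L^*\times R^*$ w.h.p.\ --- and on each run \Cref{pro:easyff} (or a $(1+\epsilon)$-approximate max-flow on the $x$-$y$ split graph). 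Output the smallest vertex cut produced by any call; if there is none, output ``$\kappa_G\ge k$''.

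For correctness, every call returns a genuine vertex cut, hence of size $\ge\kappa_G$, and the local subroutine returns a cut only of size $\le(1+\epsilon)\kappa\le(1+\epsilon)k$; so any output is a cut of size in $[\kappa_G,(1+O(\epsilon))\kappa_G]$ --- when $\kappa_G\ge k$ because $(1+\epsilon)k\le(1+\epsilon)\kappa_G$, and when $\kappa_G<k$ because the grid contains $\kappa$ with $\kappa_G\le\kappa\le(1+2\epsilon)\kappa_G$ and a $\theta$ (and $\theta'$) sandwiching the true side, so w.h.p.\ some sample lands in $L^*$ (resp.\ some pair straddles the cut) and the non-$\perp$ branch of \Cref{thm:local-vertex-connectivity} (resp.\ \Cref{pro:easyff}) must then return a cut of size $\le(1+\epsilon)\kappa$ (resp.\ exactly $\kappa_G$). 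If no call returns a cut, then by the same argument w.h.p.\ $\kappa_G\ge k$, so certifying is sound; the $(1+\epsilon)$ slack of the local subroutine never causes an unsound certification because that subroutine outputs $\perp$ only when genuinely \emph{no} separation triple of size $\le k$ and out-volume $\le\nu$ exists. I would also check that the $H_k$-reduction, the two orientations, and the $\deg_{\min}<k$ clean-up preserve the guarantee; the only delicate point is the stated cut-preservation property of $H_k$, which follows from $\kappa_{H_k}(s,t)=\min\{k,\kappa_G(s,t)\}$ and the arboricity bound.

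Finally, for each fixed $\kappa$ the two geometric sums collapse to $\ot\!\big(\poly(1/\epsilon)\,[\,n k^{3/2}\theta^{1/2}\kappa^{-1/2}+(n/\theta)\,C_{\mathrm{flow}}(\kappa,m,n)\,]\big)$, where the first term is the largest unbalanced call and $C_{\mathrm{flow}}$ is the per-pair flow cost; optimising over $\theta$ and substituting $m=O(kn)$ yields $\poly(1/\epsilon)\,k^{4/3}n^{4/3}$ when $C_{\mathrm{flow}}=\ot(\min\{\kappa,\sqrt n\}m)$ (the $\kappa$-dependence cancels, so this holds for every $\kappa\le k$), while re-running the balance with a near-linear approximate max-flow --- and, for directed graphs, with the $m^{4/3}$- and $n^{2+o(1)}$-type max-flow bounds --- gives the terms $k^{2/3}n^{5/3+o(1)}$, $n^{3+o(1)}/k$, and exactly the four branches of $T_{\textflow}(k,m,n)$ in \Cref{eq:exact-directed-time}; the regime $k>n^{4/5}$, where the smaller side may have out-volume $\Theta(m)$ so a local call is no longer cheap, is covered directly by the $n^{3+o(1)}/k$ flow bound. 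The step I expect to be the main obstacle is the sampling analysis for directed graphs: with no arboricity to tie $|L^*|$ to $\vol^{\mathrm{out}}(L^*)$, one must sample proportional to out-degree, dispose separately of near-singleton shores (via the minimum-out/in-degree vertex) and of the large-$k$ regime, and then verify that the now-decoupled geometric sums over $(\kappa,\nu,\theta')$ telescope to precisely the advertised exponents under each max-flow subroutine.
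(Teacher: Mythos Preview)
Your proposal follows essentially the same framework as the paper: sparsify (undirected case), handle balanced separations by sampling pairs and computing (approximate) $s$--$t$ vertex connectivity, handle unbalanced separations by sampling seeds and running the approximate local subroutine of \Cref{thm:local-vertex-connectivity}. The first two undirected terms $k^{4/3}n^{4/3}$ and $k^{2/3}n^{5/3+o(1)}$ fall out exactly as you describe, by balancing the two costs with $C_{\mathrm{flow}}=\tilde O(km)$ and $C_{\mathrm{flow}}=n^{2+o(1)}$ respectively. (The paper phrases the sampling as \emph{edge} sampling rather than vertex sampling, but after Nagamochi--Ibaraki the two are equivalent up to constants, and for the directed case your ``sample proportional to out-degree'' \emph{is} edge sampling.)

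The genuine gap is the $n^{3+o(1)}/k$ term. Your balanced step samples $\tilde O(n/\theta)$ pairs at cost $n^{2+o(1)}$ each, giving $n^{3+o(1)}/\theta$; getting the factor $1/k$ needs $\theta=\Omega(k)$, which forces the unbalanced branch to cover all $|L^*|<k$. But LocalVC on such shores is not cheap enough once $k$ is large---your own optimisation returns $k^{2/3}n^{5/3+o(1)}$, which exceeds $n^{3+o(1)}/k$ for $k>n^{4/5}$. The paper handles this regime with a separate routine that \emph{drops LocalVC entirely} and instead uses the elementary inequality $|L^*|\ge d_{\min}^{\out}-\kappa_G$ (any vertex of $L^*$ has all its out-neighbours in $L^*\cup S^*$). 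This yields a dichotomy: either $|L^*|<\epsilon\,d_{\min}^{\out}$, in which case $\kappa_G\ge(1-\epsilon)d_{\min}^{\out}$ and the out-neighbourhood of the minimum-out-degree vertex is already a $(1+O(\epsilon))$-approximate cut; or $|L^*|\ge\epsilon\,d_{\min}^{\out}\ge\epsilon k$, so that $\tilde O(n/(\epsilon^2 k))$ uniform pairs hit $L^*\times R^*$ w.h.p. Your phrase ``dispose of near-singleton shores via the minimum-degree vertex'' is pointing at this, but ``near-singleton'' has to mean $|L^*|<\epsilon\,d_{\min}^{\out}$---a set that can contain up to $\epsilon n$ vertices---and the $(1+\epsilon)$ (as opposed to $O(1)$) approximation guarantee hinges precisely on the inequality above, not on $L^*$ being literally tiny. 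The same argument is what supplies the $n^{3+o(1)}/k$ branch of $T(k,m,n)$ in the directed case; without it your ``telescoping'' in the last paragraph will not produce that exponent.
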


This section is devoted to proving \Cref{thm:exact_vertex_connectivity}. and \Cref{thm:approx_vertex_connectivity}.
\subsection{Vertex Connectivity Algorithms} 
\begin{algorithm}[H]
\KwIn{Sampling method, LocalVC, G = (V,E), k, a, $\epsilon$} 
\KwOut{a vertex-cut $U$ such that $|U| \leq k$ or a symbol $\perp$.}     
\BlankLine
%$\Gamma \leftarrow \{ \} $ \;
If undirected, replace $E = \{ (u,v) , (v,u) \colon (u,v) \in
E({H_{k+1}}) \}$ where $H_{k+1}$ as in \Cref{thm:sparsification}. \;

\If{\normalfont{Sampling} method $ = $ vertex} 
{
  \For{ $i \leftarrow 1$ \KwTo $n/(\epsilon a)$   \normalfont{(use} $n/a$ for exact version) }
{ 
  Sample a random pair of vertices $x, y \in V$. \;
  \lIf{$k$ \normalfont{is not specified}} 
  {
	compute approximate $\kappa_G(x,y)$. 
  }
  \If{$\kappa_G(x,y) \leq (1+\epsilon)k$}  
  { 
    \Return{\normalfont{the} corresponding $(x,y)$-vertex-cut $U$.}
    %\tcp*{$O(mk)$ time per call by \Cref{pro:easyff}.}
  }
}

}

\If{\normalfont{Sampling} method $ = $ edge} 
{ 
\For{ $i \leftarrow 1$ \KwTo $m/(\epsilon a)$   \normalfont{(use} $m/a$ for exact version)     } 
{
  Sample a random pair of edges $(x_1,y_1), (x_2,y_2) \in E$. \;
  \If{$k$ \normalfont{is not specified}} 
  {
	compute approximate $\kappa_G(x_1,y_2), \kappa_G(x_1,x_2), \kappa_G(y_1,x_2),
    \kappa_G(y_1,y_2)$. 
  }
  
  \If{$\min(\kappa_G(x_1,y_2), \kappa_G(x_1,x_2), \kappa_G(y_1,x_2),
    \kappa_G(y_1,y_2)) \leq (1+\epsilon)k$}  
  { 
    \Return{\normalfont{the} corresponding $(x,y)$-vertex-cut $U$.}
    %\tcp*{$O(mk)$ time per call by \Cref{pro:easyff}.}
  }

}
}

\If{\normalfont{LocalVC is not specified}} {
Let $x^*,y^*$ be vertices with minimum $\kappa_G(x^*,y^*)$ computed so far.\;
Let $W$ be the vertex-cut corresponding to $\kappa_G(x^*,y^*)$ \;
Let $v_{\text{min}}, u_{\text{min}} $ be the vertex with the minimum out-degree in $G$ and $G^R$ respectively. \;
\Return{ \normalfont{The smallest set among }$ \{ W,
N_G^{\out}(v_{\text{min}}),N_{G^R}^{\out}(u_{\text{min}}) \} $.} }

Let $\cL = \{  2^{\ell} \colon 1 \leq \ell \leq \lceil  \log_2 a \rceil $, and $\ell \in \mathbb{Z} \}$. \;

\If{\normalfont{Sampling method} $ = $ vertex} 
{
\For{$ s \in \cL$}
{
   \For{ $i \gets 1 $ \KwTo $n/s$}  
   {
      Sample a random vertex $x \in V$. \;
      Let $\nu \leftarrow O(s(s+k))$.  \;
      \If{\normalfont{LocalVC}$(G,x,\nu,k,\epsilon)$ or \normalfont{LocalVC}$(G^R,x,\nu,k,\epsilon)$  outputs a vertex-cut $U$}  {\Return{$U$.} } 
    } 
}

}

\If{\normalfont{Sampling method} $ = $ edge} 
{
%Let $\cL = \{  2^{\ell} \text{ | } 1 \leq \ell \leq \lceil a \log_2 m
%\rceil $, and $\ell \in \mathbb{Z} \}$. \; 
\For{$ s \in \cL$}
{
   \For{ $i \gets 1 $ \KwTo $m/s$}  
   {
      Sample a random edge $(x,y) \in E$. \;
      Let $\nu \gets O(s),$ and $\mathcal{G} = \{ G, G^R \}$. \;
	  \For{$H \in \mathcal{G}, z \in \{x,y\}$} 
	  {
		\If{\normalfont{LocalVC}$(H,z,\nu,k,\epsilon)$  outputs a vertex-cut $U$.}
		{
			\Return{$U$.}
		}
	  }
    } 
}
}

\Return{$\perp$.}

\caption{VC$(\text{Sampling method}, \text{LocalVC}, \kappa(x,y); G, k, a, \epsilon)$}
\label{alg:vcframework}
\end{algorithm}

 \subsection{Correctness}

We can compute approximate vertex connectivity by standard binary search on $k$ with the decision problem. We focus on correctness of \Cref{alg:vcframework} for approximate version. For exact version, the same proof goes through when we use $\epsilon = 1/(2k)$, and $\kappa_G \leq \sqrt{n}/2$. Let $\Delta = \min(n/(1+\epsilon), (m/(1+\epsilon))^{1/2}))$.  For the purpose of analysis of the decision problem, we assume the followings.
\begin{assumption}  \label{ass:correctness}
If $k$ is specified in \Cref{alg:vcframework}, then
\begin{enumerate}[noitemsep,nolistsep,label=(\Roman*)]
	\item \label{item:correct_one} $\dmin \geq k$.
	\item \label{item:correct_two} $k \leq \Delta$. We use $k \leq \sqrt{n}/2$ for exact vertex connectivity. 
	\item \label{item:correct_three} Local conditions  in \Cref{thm:local-vertex-connectivity} are satisfied. We use exact version of local conditions for exact vertex connectivity. 
\end{enumerate}
\end{assumption}

We justify above  assumptions. For
\Cref{ass:correctness}\ref{item:correct_one}, if it does not hold,
then we can trivially output the neighbors of the vertex with minimum
degree and we are done. 

For \Cref{ass:correctness}\ref{item:correct_two}, if we can verify that $\kappa_G \geq k = \Delta$, then in \Cref{sec:high_vertex_conn} we show that the out-neighbors of the vertex with minimum out-degree is an approximate solution. For exact vertex connectivity, we either find a minimum vertex-cut or verify that $\kappa_G \geq \sqrt{n}/2$. For \Cref{ass:correctness}\ref{item:correct_three}, we can easily verify the parameters $a$ and $\nu,k,\epsilon$ supplied to the LocalVC.

Ignoring running time, we classify \Cref{alg:vcframework} into four
algorithms depending on sampling edges or vertices, and using LocalVC
or not. We omit edge-sampling without LocalVC since the running time
is subsumed by vertex-sampling counterpart.  We now prove the correctness for each of them.  

\subsubsection{High Vertex Connectivity} \label{sec:high_vertex_conn}
 We show that if we can verify that the graph has high vertex connectivity, then we can simply output the out-neighbors of the vertex with minimum out-degree to obtain an $(1+\epsilon)$-approximate solution. 

\begin{proposition}
If $\kappa_G \geq \Delta$, then $|\dmin| \leq (1+\epsilon)\kappa_G$.
\end{proposition}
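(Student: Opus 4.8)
The plan is to bound $\dmin$ from above by two elementary quantities and then check that each of them is at most $(1+\epsilon)\Delta$, which by the hypothesis $\kappa_G \ge \Delta$ is at most $(1+\epsilon)\kappa_G$. The idea is that $\dmin$, being a minimum over the $n$ out-degrees, cannot exceed the average out-degree $m/n$, and in a simple graph it also cannot exceed $n-1$; whichever of the two terms defining $\Delta$ is the binding one, one of these two bounds will already be small enough.

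Concretely, I would first record that $\sum_{v\in V}\deg_G^{\out}(v)=m$, hence $\dmin\le m/n$, and separately $\dmin\le n-1$ since $G$ is simple. Next I would rewrite the target: since $\Delta=\min\{\,n/(1+\epsilon),\ \sqrt{m/(1+\epsilon)}\,\}$, multiplying through by $(1+\epsilon)$ gives $(1+\epsilon)\Delta=\min\{\,n,\ \sqrt{(1+\epsilon)m}\,\}$. So it suffices to prove the two inequalities $\dmin\le n$ and $\dmin\le \sqrt{(1+\epsilon)m}$ separately; taking the minimum of the right-hand sides and using $\kappa_G\ge\Delta$ then yields $\dmin\le(1+\epsilon)\Delta\le(1+\epsilon)\kappa_G$, as desired.

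The bound $\dmin\le n-1<n$ handles the first inequality immediately. For the second, I would feed in $\dmin\le m/n$ and reduce $m/n\le\sqrt{(1+\epsilon)m}$ to $m\le(1+\epsilon)n^2$ by squaring (both sides are nonnegative), which holds because $m\le n(n-1)<n^2\le(1+\epsilon)n^2$. There is no real obstacle here; the only care needed is to avoid a clumsy case split on which of the two terms actually realizes the minimum in $\Delta$, which the ``bound against both terms separately'' formulation sidesteps cleanly. Note also that \Cref{obs:kappa-degree} ($\kappa_G\le\dmin$) is not used in this proposition — it is the companion inequality that, together with this one, certifies $\dmin$ as a $(1+\epsilon)$-approximation of $\kappa_G$.
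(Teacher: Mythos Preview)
Your proof is correct and takes a genuinely different route from the paper. The paper argues in two stages: first it shows that $\kappa_G \geq (m/(1+\epsilon))^{1/2}$ already forces $\kappa_G \geq n/(1+\epsilon)$ (via the chain $\kappa_G\,\dmin \geq \kappa_G^2 \geq m/(1+\epsilon) \geq n\,\dmin/(1+\epsilon)$, which uses both $m \geq n\,\dmin$ and \Cref{obs:kappa-degree}), so that in either case $\kappa_G \geq n/(1+\epsilon)$; then it concludes $\dmin \leq n \leq (1+\epsilon)\kappa_G$. Your approach instead bounds $\dmin$ directly against each term of $(1+\epsilon)\Delta$ using $\dmin \leq n-1$ and $\dmin \leq m/n$, and never invokes \Cref{obs:kappa-degree}. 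This is slightly cleaner: it avoids the detour of showing one branch of the min implies the other, and it isolates the purely graph-theoretic content ($\dmin$ is at most the average out-degree and at most $n-1$) from the hypothesis on $\kappa_G$. The paper's route, on the other hand, makes explicit the structural fact that whenever $\kappa_G$ is at least the $\sqrt{m}$-type threshold it is automatically at least the $n$-type threshold, which is mildly informative on its own.
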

\begin{proof}
We first show that if $\kappa_G \geq (m/(1+\epsilon))^{1/2}$, then $\kappa_G \geq n/(1+\epsilon)$. Since $\kappa_G \geq (m/(1+\epsilon))^{1/2}$, we have $\kappa_G^2 \geq m/(1+\epsilon)$. Therefore, we obtain 
$ \kappa_G \dmin \geq \kappa_G^2 \geq m/(1+\epsilon) \geq n \dmin /(1+\epsilon).$
The first inequality follows from \Cref{obs:kappa-degree}, which is $\dmin \geq \kappa_G$. The second inequality follows from above discussion. The third inequality follows from each vertex has at least $\dmin$ edges. Therefore, $ \kappa_G \geq n/(1+\epsilon)$. 

Now, we show that if $\kappa_G \geq n/(1+\epsilon)$, then $ \dmin \leq (1+\epsilon)\kappa_G$. We have 
$ (1+\epsilon)\kappa_G \geq n \geq \dmin \geq \kappa_G.$ The first inequality follows from the condition above. The second inequality follows from size of vertex cut is at most $n$. The third inequality follows from \Cref{obs:kappa-degree}.
\end{proof}
 
\subsubsection{Edge-Sampling with LocalVC}
\begin{lemma} \label{lem:edge_sampling_localvc_correct}
 \Cref{alg:vcframework} with edge-sampling, and  $\operatorname{LocalVC }$ outputs correctly w.h.p. a vertex-cut of
size $\leq (1+\epsilon)k$ if $\kappa_G \leq k$, and a symbol $\perp$ if $\kappa_G > k $.    
\end{lemma}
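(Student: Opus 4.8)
The plan is to prove the two directions of the statement separately, each time relying on the one‑sided guarantees of the two subroutines that \Cref{alg:vcframework} invokes in this configuration (the Ford–Fulkerson routine of \Cref{pro:easyff} in the edge‑pair loop, and \textsc{LocalVC} from \Cref{thm:local-vertex-connectivity} in the scale loop). For soundness, suppose $\kappa_G>k$, so $G$ has no vertex‑cut of cardinality at most $k$. Then every pair $p,q$ tested in the edge‑pair loop has $\kappa_G(p,q)\ge\kappa_G>k$, so the call to \Cref{pro:easyff} returns ``$\perp$''; and every call \textsc{LocalVC}$(H,z,\nu,k,\epsilon)$ with $H\in\{G,G^R\}$ either returns a vertex‑cut of cardinality at most $(1+\epsilon)k$ — which is at most $k$ in the exact setting $\epsilon=1/(2k)$, since sizes are integers — or returns ``$\perp$''; as no cut of size at most $k$ exists, it must be the latter. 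Hence \Cref{alg:vcframework} falls through to its last line and outputs ``$\perp$''.

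For completeness, assume $\kappa_G\le k$ and fix a minimum separation triple $(L,S,R)$, so $|S|=\kappa_G\le k$, $L,R\ne\emptyset$, and $E(L,R)=\emptyset$; consequently $N_G^{\text{out}}(L)\subseteq S$ and $N_G^{\text{in}}(R)\subseteq S$, so $L$ is a nonempty out‑vertex shore and $R$ a nonempty in‑vertex shore (\Cref{def:shore}), and by \Cref{ass:correctness} we have $\dmin\ge k$ together with the local preconditions required by \Cref{thm:local-vertex-connectivity}. I would split on $\text{vol}_G^{\text{out}}(L)$ and $\text{vol}_G^{\text{in}}(R)$. First, suppose $\text{vol}_G^{\text{out}}(L)\le a$: let $s\in\cL$ be the least power of two that is at least $\text{vol}_G^{\text{out}}(L)$ (it lies in $\cL$ by the case hypothesis), so that $\nu=\Theta(s)$ may be taken $\ge\text{vol}_G^{\text{out}}(L)$ while the number of edges with tail in $L$, which is exactly $\text{vol}_G^{\text{out}}(L)$, is at least $s/2$. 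Then a uniformly random edge has its tail in $L$ with probability at least $s/(2m)$, so among the $\tilde{\Theta}(m/s)$ edges sampled at scale $s$, w.h.p.\ some sampled edge $(x,y)$ has $x\in L$; for this $x$, the triple $\bigl(L,\,N_G^{\text{out}}(L),\,V\setminus(L\cup N_G^{\text{out}}(L))\bigr)$ certifies that condition~\Cref{eq:unfound_cut} for \textsc{LocalVC}$(G,x,\nu,k,\epsilon)$ is satisfiable, so this call cannot answer ``$\perp$'' and must return a vertex‑cut of size at most $(1+\epsilon)k$.

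The case $\text{vol}_G^{\text{in}}(R)\le a$ is symmetric through the reverse graph: in $G^R$ the triple $(R,S,L)$ satisfies $\text{vol}_{G^R}^{\text{out}}(R)=\text{vol}_G^{\text{in}}(R)$ and $N_{G^R}^{\text{out}}(R)\subseteq S$, so applying the previous argument to \textsc{LocalVC}$(G^R,y,\nu,k,\epsilon)$ for a sampled edge with head $y\in R$ again yields a small vertex‑cut. Finally, if $\text{vol}_G^{\text{out}}(L)>a$ and $\text{vol}_G^{\text{in}}(R)>a$, I would use the edge‑pair loop: a uniformly random edge has its tail in $L$ with probability more than $a/m$ and, independently, a second uniformly random edge has its head in $R$ with probability more than $a/m$, so w.h.p.\ some sampled pair $\bigl((x_1,y_1),(x_2,y_2)\bigr)$ has $x_1\in L$ and $y_2\in R$. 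Since $E(L,R)=\emptyset$, the vertex $y_2$ is not an out‑neighbor of $x_1$ and $S$ is an $(x_1,y_2)$‑vertex‑cut, so $\kappa_G(x_1,y_2)\le|S|\le k$, and the corresponding call to \Cref{pro:easyff} returns a vertex‑cut of size at most $k$.

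The step I expect to demand the most care is the coupling between the (unknown) volume of the small side and the guessed scale $s\in\cL$: one must check that the right $s$ makes $\nu=\Theta(s)$ simultaneously large enough that \Cref{eq:unfound_cut} is satisfiable — so \textsc{LocalVC} provably cannot emit a false ``$\perp$'' — and small enough that a $\tilde{\Theta}(m/s)$‑sized edge sample meets $L$ w.h.p., while also tracking the directed asymmetry that forces running \textsc{LocalVC} on both $G$ and $G^R$ and from both endpoints of every sampled edge, since it is not known in advance which of $L$ or $R$ (if either) has small out/in‑volume. The remaining probabilistic content reduces to standard union‑bound and Chernoff estimates once these events are isolated, and the $O(\log n)$ amplification they require is absorbed into the $\tilde{O}$ hidden in the iteration counts, so it does not affect the running time asserted in \Cref{thm:exact_vertex_connectivity}.
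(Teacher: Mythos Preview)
Your soundness argument and your unbalanced (LocalVC) case are fine and match the paper's reasoning. The balanced case, however, has a genuine gap.

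You argue that when $\vol_G^{\text{out}}(L)>a$ and $\vol_G^{\text{in}}(R)>a$, a single sampled pair succeeds with probability exceeding $(a/m)^2$, and conclude that some pair succeeds w.h.p. But the loop draws only $\tilde\Theta(m/(\epsilon a))$ pairs, so the expected number of successful pairs is $\tilde\Theta\bigl((a/m)^2\cdot m/(\epsilon a)\bigr)=\tilde\Theta(a/(\epsilon m))$, which is $o(1)$ for the relevant choices of $a$ (e.g.\ $a=m^{2/3}$). The ``w.h.p.'' claim therefore does not follow.

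The missing idea is that one of the two sides necessarily has volume $\Omega(m)$, not merely $>a$. The paper obtains this by working with the slightly larger quantities
\[
\vol_G^*(L)=\vol_G^{\text{out}}(L)+|E(S,L)|,\qquad \vol_G^*(R)=\vol_G^{\text{out}}(R)+|E(S,R)|,
\]
for which $\vol_G^*(L)+\vol_G^*(R)=m-|E(S,S)|\ge m-k^2=\Omega(m)$ by \Cref{ass:correctness}\ref{item:correct_two}. Hence one side has volume $\Omega(m)$, the per‑pair success probability is $\Omega(a/m)$ (not $(a/m)^2$), and $\tilde\Theta(m/(\epsilon a))$ samples suffice. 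Note that using $\vol_G^*$ forces you to also use edges in $E(S,L)$ and $E(S,R)$; for such an edge only one endpoint is guaranteed to lie in $L$ (resp.\ $R$), which is precisely why the algorithm tests all four pairings $\kappa_G(x_1,y_2),\kappa_G(x_1,x_2),\kappa_G(y_1,x_2),\kappa_G(y_1,y_2)$ rather than just $\kappa_G(x_1,y_2)$ as in your argument.

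A smaller point: in the soundness direction for the approximate setting, ``no cut of size at most $k$ exists'' does not preclude \textsc{LocalVC} from returning a cut of size at most $(1+\epsilon)k$ when $k<\kappa_G\le(1+\epsilon)k$. The paper handles this by observing that such an output is still a $(1+\epsilon)$‑approximate cut; only in the exact regime $\epsilon=1/(2k)$ does your integrality argument go through.
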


We describe notations regarding edge-sets from a separation triple
$(L,S,R)$ in $G$. Let $ E^*(L,S) =  E(L,L) \sqcup E(L,S) \sqcup E(S,L)$,
and $ E^*(S,R) = E(R,R) \sqcup E(S,R) \sqcup E(R,S)$.   

\begin{definition}  [$L$-volume, and $R$-volume of the separation
  triple] \label{def:lrvol}
For a separation triple $(L,S,R)$, we denote $\vol_G^*(L) = \sum_{v \in L} \deg_G^{\out}(v) + |E(S,L)|$ and
$\vol_G^*(R) = \sum_{v \in R} \deg_G^{\out}(v) + |E(S,R)|$.
\end{definition}

It is easy to see that $\vol_G^*(L)  = |E^*(L,S)|$ and $\vol_G^*(R) = |E^*(S,R)|$. 

The following observations follow immediately from the definition of
$E^*(L,S)$ and $E^*(S,R)$, and a separation triple $(L,S,R)$.

\begin{observation} \label{obs:edge_star}
We can partition edges in $G$  according to $(L,S,R)$ separation triple as $$E
=  E^*(L,S) \sqcup E(S,S) \sqcup  E^*(S,R)$$ 
And,
\begin{itemize}[nolistsep,noitemsep]
\item For any edge $ (x,y) \in E^*(L,S), x \in L $ or $ y \in L$.
\item For any edge $ (x,y) \in E^*(S,R), x \in R $ or $ y \in R$.
\end{itemize}
Furthermore, $$ m = \vol_G^*(L)  + |E(S,S)| + \vol_G^*(R)$$
\end{observation}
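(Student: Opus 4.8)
The plan is to prove \Cref{obs:edge_star} by unpacking the partition $V = L \sqcup S \sqcup R$ together with the single structural fact in \Cref{def:separation_triple}: no edge goes from $L$ to $R$.

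First I would classify every edge by the pair of parts containing its endpoints. Since $\{L,S,R\}$ partitions $V$, we get $E = \bigsqcup_{A,B \in \{L,S,R\}} E(A,B)$, a disjoint union of at most nine classes, and $E(L,R) = \emptyset$ by \Cref{def:separation_triple}. Two of the conclusions are then immediate: grouping the surviving classes as $\{E(L,L), E(L,S), E(S,L)\}$, $\{E(S,S)\}$, $\{E(R,R), E(S,R), E(R,S)\}$ gives the claimed partition $E = E^*(L,S) \sqcup E(S,S) \sqcup E^*(S,R)$, and every edge in $E^*(L,S)$ (resp.\ $E^*(S,R)$) obviously has an endpoint in $L$ (resp.\ $R$) — in $E(L,L)$ both endpoints, in $E(L,S)$ the tail, in $E(S,L)$ the head, and symmetrically for $R$. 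For the final identity I would simply count: since $\deg_G^{\out}(v)$ counts all edges leaving $v$ and there are no $L\to R$ edges, $\sum_{v\in L}\deg_G^{\out}(v) = |E(L,L)| + |E(L,S)|$, hence $\vol_G^*(L) = |E(L,L)| + |E(L,S)| + |E(S,L)| = |E^*(L,S)|$; symmetrically $\vol_G^*(R) = |E(R,L)| + |E(R,S)| + |E(R,R)| + |E(S,R)|$; adding $\vol_G^*(L) + |E(S,S)| + \vol_G^*(R)$ recovers the sizes of all surviving edge classes, which sum to $m$.

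The one point I would single out as needing care is the edge class $E(R,L)$ of edges directed from $R$ into $L$: \Cref{def:separation_triple} forbids $L\to R$ edges but says nothing about $R\to L$ edges, so in general this class is nonempty, and then the literal partition $E = E^*(L,S) \sqcup E(S,S) \sqcup E^*(S,R)$ (and the remark $\vol_G^*(R) = |E^*(S,R)|$) is off by exactly these edges, while the two membership bullets and the count $m = \vol_G^*(L) + |E(S,S)| + \vol_G^*(R)$ survive unchanged (the $|E(R,L)|$ terms already appear inside $\vol_G^*(R)$ via $\sum_{v\in R}\deg_G^{\out}(v)$, matching their appearance in $m$). I would resolve this by reading $E^*(S,R)$ as also absorbing the class $E(R,L)$ — consistent with the ``has an endpoint in $R$'' bullet — which makes all four statements hold verbatim; note that in the undirected reduction, where $G$ is the bidirection of the sparsifier from \Cref{thm:sparsification}, $E(R,L)$ is anyway empty whenever $E(L,R)$ is, so the subtlety does not even arise there. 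Apart from this, the whole statement is pure bookkeeping.
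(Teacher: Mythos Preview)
Your approach is exactly what the paper does: it gives no proof at all, simply declaring that the observation ``follows immediately from the definition of $E^*(L,S)$ and $E^*(S,R)$, and a separation triple $(L,S,R)$.'' Your bookkeeping via the nine endpoint classes is the natural way to cash this out.

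Your remark about the class $E(R,L)$ is sharp and not addressed in the paper. As you note, \Cref{def:separation_triple} only forbids $L\to R$ edges, so the literal partition $E = E^*(L,S)\sqcup E(S,S)\sqcup E^*(S,R)$ and the equality $\vol_G^*(R)=|E^*(S,R)|$ can fail in the directed case, while the two endpoint bullets and the identity $m=\vol_G^*(L)+|E(S,S)|+\vol_G^*(R)$ remain correct. Fortunately, the downstream uses of this observation (in \Cref{lem:balanced_edge}) invoke only the endpoint bullets and the identity $\vol_G^*(L)+\vol_G^*(R)=m-|E(S,S)|$, so nothing in the paper breaks; your proposed fix of folding $E(R,L)$ into $E^*(S,R)$ is a clean way to make the statement literally true.
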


We proceed the proof. There are three cases for the set of all
separation triples in $G$.  The first case is there exists a separation triple $(L,S,R)$ such
that $|S| \leq k, \vol^*_G(L) \geq a, \vol^*_G(R) \geq a$  We show that w.h.p.  \Cref{alg:vcframework} outputs a vertex-cut of size at most $(1+\epsilon)k$. 

\begin{lemma} \label{lem:balanced_edge}
If $G$ has a separation triple $(L,S,R)$ such that $ |S| \leq k,
\vol^*_G(L) \geq a, \vol^*_G(R) \geq a $, then w.h.p. \Cref{alg:vcframework} outputs a vertex-cut of size at most $(1+\epsilon)k$. 
\end{lemma}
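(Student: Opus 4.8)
The plan is to use the edge-sampling loop of \Cref{alg:vcframework} that probes $\kappa_G(\cdot,\cdot)$ (the one that does not call \textsc{LocalVC}) and to argue that, with high probability, some iteration of it samples a pair of edges that together meet both $L$ and $R$, which immediately exposes $S$ as a vertex-cut of size at most $k$. The first step is to convert the volume hypotheses into hitting probabilities. By \Cref{def:lrvol} and the identity $\vol_G^*(L)=|E^*(L,S)|$, a uniformly random edge of $G$ lies in $E^*(L,S)$ with probability at least $\vol_G^*(L)/m\ge a/m$, and symmetrically it lies in $E^*(S,R)$ with probability at least $\vol_G^*(R)/m\ge a/m$. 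By \Cref{obs:edge_star}, an edge of $E^*(L,S)$ has an endpoint in $L$ and an edge of $E^*(S,R)$ has an endpoint in $R$; moreover that observation gives the partition $E=E^*(L,S)\sqcup E(S,S)\sqcup E^*(S,R)$, so $G$ has no edge between $L$ and $R$ in either orientation. Hence $S$ is a $(p,q)$-vertex-cut for every $p\in L,\,q\in R$ \emph{and} for every $p\in R,\,q\in L$ (a walk that leaves $L$, or that leaves $R$, must first enter $S$), and so $\kappa_G(p,q)\le|S|\le k$ for all such ordered pairs.

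The second step examines a single iteration that samples $(x_1,y_1)$ and $(x_2,y_2)$. If $(x_1,y_1)\in E^*(L,S)$ and $(x_2,y_2)\in E^*(S,R)$, then $\{x_1,y_1\}$ contains some $p\in L$, $\{x_2,y_2\}$ contains some $q\in R$, and the ordered pair $(p,q)$ is among the four pairs $(x_1,y_2),(x_1,x_2),(y_1,x_2),(y_1,y_2)$ whose connectivity the algorithm evaluates; since $\kappa_G(p,q)\le k$, the minimum of those four values is at most $(1+\epsilon)k$, so the algorithm returns the corresponding vertex-cut. The mirror event ($(x_1,y_1)\in E^*(S,R)$, $(x_2,y_2)\in E^*(L,S)$) is handled the same way, using that $S$ also cuts every ordered $R\to L$ pair. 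In either case the returned cut $U$ has $|U|\le(1+\epsilon)k$: when $k$ is given, \Cref{pro:easyff} with threshold $\lfloor(1+\epsilon)k\rfloor\ge k$ returns a cut of size $\kappa_G(p,q)\le k$; when $k$ is not given, the approximate-$\kappa_G$ subroutine returns a cut of size at most $(1+\epsilon)\kappa_G(p,q)\le(1+\epsilon)k$. For the exact instantiation $\epsilon=1/(2k)$ this gives $|U|\le k+\tfrac12$, hence $|U|\le k$ because $|U|$ is an integer.

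For the third step I would bound the failure probability: the two edges in a fixed iteration are independent, so the favorable event of the previous paragraph occurs with probability at least $2(a/m)^2$, the iterations are mutually independent, and running the loop for its prescribed number of iterations makes the probability that no iteration is favorable at most $n^{-c}$. I expect the delicate points to be, first, the orientation bookkeeping — the first coordinate of each probed pair is always drawn from the first sampled edge and the second from the second, so it really is necessary that $S$ cut $L\to R$ and $R\to L$ pairs alike, which is precisely what the partition $E=E^*(L,S)\sqcup E(S,S)\sqcup E^*(S,R)$ guarantees — and, second, confirming that the number of probes is large enough to turn the $\Theta((a/m)^2)$ per-iteration success probability into a high-probability guarantee (together with checking that the approximate-$\kappa_G$ subroutine hands back an actual cut of the stated size, not merely the value, whenever the true connectivity is small).
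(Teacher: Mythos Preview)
Your overall approach matches the paper's: use the first (edge-sampling, non-\textsc{LocalVC}) loop and argue that some iteration samples one edge in $E^*(L,S)$ and the other in $E^*(S,R)$, then finish via \Cref{obs:edge_star}. The orientation bookkeeping you flag is handled correctly.

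However, your probability bound is too weak for the prescribed number of iterations, and this is a genuine gap. You lower-bound the per-iteration success probability by $2(a/m)^2$, using only $\vol_G^*(L)\ge a$ and $\vol_G^*(R)\ge a$. But the loop runs only $\tilde O(m/(\epsilon a))$ times (or $\tilde O(m/a)$ in the exact version). With $a=m^{2/3}$, say, the expected number of successful iterations under your bound is $\tilde O\bigl((m/a)\cdot(a/m)^2\bigr)=\tilde O(a/m)=o(1)$, so you cannot conclude a high-probability guarantee.

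The missing ingredient is \Cref{ass:correctness}\ref{item:correct_two}, which gives $k^2\le m/c$ (with $c=1+\epsilon$ in the approximate case, $c=4$ in the exact case). Combined with \Cref{obs:edge_star} this yields
\[
\vol_G^*(L)+\vol_G^*(R)=m-|E(S,S)|\ge m-k^2\ge (1-1/c)\,m,
\]
so at least one of $\vol_G^*(L),\vol_G^*(R)$ is $\Omega((1-1/c)m)$, not merely $\ge a$. Using this for one factor and $a/m$ for the other gives a per-iteration success probability of $\Omega((1-1/c)\,a/m)$, which is what the $m/(\epsilon a)$ (respectively $m/a$) iteration budget is calibrated against.
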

\begin{proof}

We show that the first  loop (with edge-sampling method) of \Cref{alg:vcframework} finds a vertex-cut of size at most $(1+\epsilon)k$. 
% %This means $|E| = \tilde E(L,S) \sqcup E(S,S) \sqcup \tilde E(S,R)

We sample two edges randomly $e_1 = (x_1,y_1), e_2 = (x_2,y_2) \in
E$. The probability that $e_1 \in  E^*(L,S)$ and $e_2 \in 
E^*(S,R)$ is  $$ P(e_1 \in  E^*(L,S), e_2 \in  E^*(S,R)) = P(e_1
\in  E^*(L,S)) P( e_2 \in  E^*(S, R)) $$ This follows from the
two events are independent. 

By \Cref{ass:correctness}\ref{item:correct_two},  $k \leq \Delta$, which means $k^2 \leq m/(1+\epsilon)$. For exact vertex connectivity, we have $k^2 \leq n/4 \leq m/4$.  For generality, we denote $k^2 \leq m/c$. We use $c = 1+\epsilon$  for the approximate vertex connectivity, and $c = 4$ for exact version.

We claim $\vol^*_G(L) + \vol^*_G(R) = \Omega((1-1/c)m)$. Indeed, by \Cref{obs:edge_star}, $\vol^*_G(L) + \vol^*_G(R) = m - |E(S,S)|$, and we have $|E(S,S)| \leq k^2 \leq m/c$.

If $\vol^*_G(R) = \Omega((1-1/c) m)$, then  $ P(e_1 \in  E^*(L,S), e_2 \in  E^*(S,R)) = P(e_1 \in  E^*(L,S)) P( e_2 \in  E^*(S, R)) \geq \vol^*_G(R) a/ m^2 = \Omega( (1-1/c)a/m) $. 
Otherwise, $\vol^*_G(L) = \Omega((1-1/c)m)$. Similarly, we get  $ P(e_1 \in  E^*(L,S), e_2 \in  E^*(S,R)) = \Omega( (1-1/c)a/m) $.

Therefore, it is enough to sample $O(m/(\epsilon a))$ times ($O(m/a)$ times for the exact vertex connectivity) to get w.h.p. at least
one trial where $e_1 = (x_1, y_1) \in  E^*(L,S), e_2 = (x_2,y_2)
\in  E^*(S,R)$. From now we assume, $e_1 = (x_1, y_1) \in  E^*(L,S), e_2 = (x_2,y_2) \in  E^*(S,R)$.  

Finally, we  show that the first loop of \Cref{alg:vcframework}
outputs a vertex-cut of size at most $k$. By \Cref{obs:edge_star}, 
at least one vertex in $(x_1,y_1)$   is in $L$, and at least on vertex in  $(x_2,y_2)$  is in
$R$. Therefore, we find a separation triple corresponding to $\min(\kappa_G(x_1,y_2), \kappa_G(x_1,x_2), \kappa_G(y_1,x_2), \kappa_G(y_1,y_2)) \leq (1+\epsilon)k$. 
\end{proof}
 
The second case is there exists a separation triple $(L,S,R)$ such that
$|S| \leq k$ and $\vol^*_G(L) < a   $ or $\vol^*_G(R) < a$. We show that
w.h.p. \Cref{alg:vcframework} outputs a vertex-cut of size at most
$(1+\epsilon)k$. 

\begin{lemma}\label{lem:imbalanced_edge}
If $G$ has a separation triple $(L,S,R)$ such that  $|S| \leq k$ and $\vol^*_G(L) < a   $ or $\vol^*_G(R) < a$,
 then w.h.p. \Cref{alg:vcframework} outputs a vertex-cut of size at most $(1+\epsilon)k$. 
\end{lemma}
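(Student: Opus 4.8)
The plan is to mimic the structure of the ``imbalanced case'' for vertex sampling (sketched in \Cref{sub:overview exact}), but adapted to the edge-sampling framework, using the $L$-volume / $R$-volume notion of \Cref{def:lrvol}. Assume without loss of generality that $\vol_G^*(L) < a$ (the case $\vol_G^*(R) < a$ is symmetric, since \Cref{alg:vcframework} runs LocalVC on both $G$ and $G^R$, and a separation triple $(L,S,R)$ in $G$ with small $\vol_G^*(R)$ corresponds to a separation triple with small out-volume in $G^R$). By a binary-search trick over the geometric scale set $\cL$, I may assume we have guessed a scale $s$ with $s \le \vol_G^*(L) < 2s$ (the case $\vol_G^*(L) < 2$, i.e.\ the smallest scale, is handled by taking $s = 2$, or trivially since then $L$ is tiny). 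Fix that iteration of the outer $\cL$-loop.

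The next step is the sampling argument. By \Cref{obs:edge_star}, $\vol_G^*(L) = |E^*(L,S)|$, and every edge of $E^*(L,S)$ has at least one endpoint in $L$. So if we sample a uniformly random edge $(x,y) \in E$, then with probability $|E^*(L,S)|/m \ge s/m$ we hit an edge with an endpoint $z \in \{x,y\}$ lying in $L$. Since the inner loop samples $m/(\epsilon a) \ge m/(\epsilon \cdot 2s) = \Omega(m/(\epsilon s))$ edges, with high probability at least one sampled edge has an endpoint $z \in L$; and the algorithm then calls $\textsc{LocalVC}(H, z, \nu, k, \epsilon)$ for $H \in \{G, G^R\}$ and both endpoints, so in particular it calls $\textsc{LocalVC}(G, z, \nu, k, \epsilon)$ with $z \in L$.

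Now I invoke \Cref{thm:local-vertex-connectivity}. We have set $\nu = O(s)$, chosen so that $\nu \ge \vol_G^*(L) \ge \vol_G^{\out}(L)$ (note $\vol_G^{\out}(L) \le \vol_G^*(L)$ since the former omits the $|E(S,L)|$ term). We must check that the local conditions \Cref{eq:local_condition} (or \Cref{eq:local_condition_dense}) hold: this is where \Cref{ass:correctness}\ref{item:correct_one} ($\dmin \ge k$) and \ref{item:correct_three} are used, together with $k \le \sqrt{n}/2$ resp.\ $k \le \Delta$ and the fact that $\nu = O(s) = O(a)$ is polynomially bounded (since $a$ is chosen within the valid range for the framework). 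Given these, since $z \in L$ and $(L,S,R)$ is a separation triple with $|S| \le k$ and $\vol_G^{\out}(L) \le \nu$, the certificate case~(2) of \Cref{thm:local-vertex-connectivity} is \emph{impossible}: that case would assert there is no separation triple containing $z$ with cut size $\le k$ and out-volume $\le \nu$, contradicting the existence of $(L,S,R)$. Hence \textsc{LocalVC} must return a separation triple $(L',S',R')$ with $z \in L'$ and $|S'| \le (1+\epsilon)k$; the algorithm then returns the vertex-cut $S'$, which is of size at most $(1+\epsilon)k$, as claimed.

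\textbf{Main obstacle.} The delicate point is not the probabilistic argument (which is routine union-bounding over $O(m/(\epsilon a))$ trials and $O(\log a)$ scales) but verifying that, for the guessed scale $s$, the choice $\nu = \Theta(s)$ simultaneously (i)~upper-bounds $\vol_G^{\out}(L)$, so that case~(2) is genuinely ruled out, and (ii)~satisfies the local conditions of \Cref{thm:local-vertex-connectivity}, which constrain $\nu$, $k$, $\epsilon$ against $m$ and $n$ --- so one must confirm the parameter $a$ supplied to the framework is restricted to a range making both hold (this is exactly \Cref{ass:correctness}\ref{item:correct_two},\ref{item:correct_three}). One must also be slightly careful that $\vol_G^{\out}(L) \le \vol_G^*(L)$, so that bounding the $*$-volume suffices to bound the out-volume that \textsc{LocalVC} actually sees.
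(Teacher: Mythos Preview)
Your approach is essentially the same as the paper's: reduce by symmetry to $\vol_G^*(L) < a$, pick the dyadic scale $s \in \cL$ with $s \approx \vol_G^*(L)$, argue that a uniformly random edge lands in $E^*(L,S)$ with probability $\Omega(s/m)$, use \Cref{obs:edge_star} to get an endpoint $z \in L$, bound $\vol_G^{\out}(L) \le \vol_G^*(L) \le \nu$, and then force \Cref{thm:local-vertex-connectivity} into its cut-returning case.

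There is one slip in the sampling step. You write ``the inner loop samples $m/(\epsilon a) \ge m/(\epsilon\cdot 2s)$ edges,'' but $m/(\epsilon a)$ is the iteration count of the \emph{first} loop of \Cref{alg:vcframework} (the one computing $\kappa_G(x,y)$ via max-flow), not the LocalVC loop; and in any case the inequality $m/(\epsilon a) \ge m/(2\epsilon s)$ is false whenever $2s < a$, which is the typical situation. In the algorithm, the edge-sampling LocalVC loop at scale $s$ performs $m/s$ iterations, and since a uniform edge lies in $E^*(L,S)$ with probability $\ge \vol_G^*(L)/m = \Omega(s/m)$, that is precisely the right count to guarantee a hit w.h.p. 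With this correction your argument coincides with the paper's proof.
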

\begin{proof}

We show that the second loop (LocalVC with edge-sampling mode) of \Cref{alg:vcframework} finds a vertex-cut of size at most $(1+\epsilon)k$. 

We focus on the case $\vol^*_G(L) < a$. The case $\vol^*_G(R)  < a$ is similar, except that we need to compute local vertex connectivity on the reverse graph instead. 

We show that w.h.p. there is an event $e = (x,y) \in  E^*(L,S)$. Since $\vol^*_G(L) < a$,  there exists an integer $\ell$  in range  $1 \leq \ell \leq \lceil \log_2 a
\rceil$ such that $ 2^{\ell -1} \leq \vol^*_G(L) \leq s^{\ell}$. That is, $s/2
\leq \vol^*_G(L) \leq s$ for $s = 2^{\ell}$.  The probability that $e
\in E^*(L,S)$ is $\vol^*_G(L)/m \geq s/(2m)$. Hence, it is enough to
sample $O(m/s)$ edges to get an event $e \in E^*(L,S)$ w.h.p. 

From now we assume that $\vol^*_G(L) \leq s$ and that $e = (x,y) \in
E^*(L,S)$. By \Cref{def:lrvol}, $\vol_G^*(L) =\sum_{v \in L}
\deg_G^{\out}(v) + |E(S,L)| \leq s$. Therefore, $\vol_G^{\out}(L) = \sum_{v \in L} \deg_G^{\out}(v) \leq s$. 

By \Cref{obs:edge_star}, $x \in L$ or $y \in L$. We assume
WLOG that $x \in L$ (\Cref{alg:vcframework} runs LocalVC on both $x$
and $y$).  

Hence, we have verified the following conditions for the parameters $x, \nu,
k$ for LocalVC$(G,x,\nu,k)$:
\begin{itemize}[nolistsep, noitemsep]
\item  Local conditions are satisfied by \Cref{ass:correctness}\ref{item:correct_three}. %
\item $x \in L$. 
\item $|S| \leq k$.
\item $\vol_G^{\out}(L) \leq \nu$ and we use $\nu =  s$. 
\end{itemize}
  By \Cref{thm:local-vertex-connectivity}, LocalVC outputs a vertex-cut of size at most $(1+\epsilon)k$.
\end{proof}

The final case is when every separation triple $(L,S,R)$ in $G$, $|S| >
k$. In other words, $\kappa_G > k$. If \Cref{alg:vcframework} outputs a vertex-cut, then it is a $(1+\epsilon)$-approximate vertex-cut. Otherwise,  \Cref{alg:vcframework} outputs $\perp$ correctly.

%
\begin{comment}
\begin{lemma}\label{lem:null_edge}
If $\kappa_G > k$, then \Cref{alg:edge_sampling} outputs $\perp$. 
\end{lemma}
\begin{proof}
 In the first loop, we always outputs $\perp$ since $\kappa(x,y) \geq \kappa > k$, and
FoldFulkerson max-flow can verify that $\kappa_G > k$.  In the second
loop,  recall that \Cref{eq:local_condition} or \Cref{eq:local_condition_dense} is
  satisfied by \Cref{pro:nuk_sat_vertex}.  Furthermore, every separation triple $(L,S,R)$ in $G$, $|S| >
k$.  Therefore, by  \Cref{thm:local-vertex-connectivity},  LocalVC
always output $\perp$ for any sample vertex $x$. 
\end{proof}
\end{comment}

The proof of \Cref{lem:edge_sampling_localvc_correct} is complete since
\Cref{lem:edge_sampling_localvc_correct} follows from
\Cref{lem:balanced_edge}, \Cref{lem:imbalanced_edge}, and the case $\kappa_G > k$ corresponding the three cases of the set of separation triples in $G$.

\subsubsection{Vertex-Sampling with LocalVC}
\begin{lemma} \label{lem:vertex_sampling_localvc_correct}
 \Cref{alg:vcframework} with vertex-sampling, and $\operatorname{LocalVC}$ outputs correctly w.h.p. a vertex-cut of
size $\leq (1+\epsilon)k$ if $\kappa_G \leq k$, and a symbol $\perp$ if $\kappa_G > k $.    
\end{lemma}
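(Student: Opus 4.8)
The plan is to carry out the same three-case analysis as in the edge-sampling argument (\Cref{lem:edge_sampling_localvc_correct}), with the vertex sizes $|L|,|R|$ of a separation triple $(L,S,R)$ playing the role that $\vol_G^*(L),\vol_G^*(R)$ played there, and with the parameter $a$ read as a threshold on $\min(|L|,|R|)$. I assume \Cref{ass:correctness} throughout; in the undirected case line~1 of \Cref{alg:vcframework} has already been applied and, by \Cref{thm:sparsification}, it preserves every separation triple with $|S|\le k$, so we may work with the resulting graph. Fixing a separation triple of minimum separator size, the cases are: (i) $|S|\le k$ and $|L|,|R|\ge a$; (ii) $|S|\le k$ and $\min(|L|,|R|)<a$; (iii) every separation triple has $|S|>k$, i.e.\ $\kappa_G>k$.

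In case (i) I would first note $|L|+|R|=n-|S|\ge n-k\ge n-\Delta=\Omega(\epsilon n)$ by \Cref{ass:correctness}\ref{item:correct_two} (for the exact version $k\le\sqrt n/2$ even gives $|L|+|R|\ge n/2$). Hence $|L|\cdot|R|\ge\min(|L|,|R|)\cdot\max(|L|,|R|)\ge a\cdot\Omega(\epsilon n)$, so a uniformly random ordered pair $(x,y)$ has $x\in L$ and $y\in R$ with probability $|L|\,|R|/n^2=\Omega(\epsilon a/n)$; since the first loop draws $\Theta(n/(\epsilon a))$ pairs it catches such a pair w.h.p. For that pair $S$ is an $(x,y)$-vertex-cut, so $\kappa_G(x,y)\le|S|\le k$, and the $s$-$t$ connectivity subroutine (\Cref{pro:easyff} in the exact version) returns a vertex-cut of size $\le(1+\epsilon)k$.

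In case (ii), assume w.l.o.g.\ $|L|<a$ (the sub-case $|R|<a$ is symmetric after passing to the separation triple $(R,S,L)$ of $G^R$, on which \Cref{alg:vcframework} also calls $\textsc{LocalVC}$). Pick $s\in\cL$ with $s/2\le|L|<s$; sampling $\Theta(n/s)$ random vertices yields one, say $x\in L$, w.h.p. Since $(L,S,R)$ has no edge from $L$ to $R$, every out-edge of a vertex of $L$ lands in $L\cup S$, so
\[
\vol_G^{\out}(L)=|E(L,L)|+|E(L,S)|\le|L|(|L|-1)+|L|\,|S|<|L|\,(|L|+|S|)<s(s+k),
\]
hence the choice $\nu=\Theta(s(s+k))$ made by the algorithm satisfies $\nu\ge\vol_G^{\out}(L)$. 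Then $(L,S,R)$ witnesses that the $\perp$-branch of \Cref{thm:local-vertex-connectivity} cannot fire on input $(G,x,\nu,k,\epsilon)$ --- the required local conditions hold by \Cref{ass:correctness}\ref{item:correct_three} --- so $\textsc{LocalVC}(G,x,\nu,k,\epsilon)$ must return a separation triple of separator size $\le(1+\epsilon)k$, which the algorithm outputs. Finally, in case (iii) every $\kappa_G(x,y)\ge\kappa_G>k$, which in the exact setting ($\epsilon=1/(2k)$) means $\kappa_G(x,y)\ge k+1>(1+\epsilon)k$, so the first loop returns nothing; and every $\textsc{LocalVC}$ call returns $\perp$ or a cut of size $\le(1+\epsilon)k$, which (having size $\ge\kappa_G$ as well) is a legitimate $(1+\epsilon)$-approximate vertex-cut. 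A union bound over the $\poly(n)$ random draws completes the argument.

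The step I expect to be most delicate is case (i): one must verify that the bound $|L|+|R|=\Omega(\epsilon n)$ is uniform over both branches of $\Delta$ in \Cref{ass:correctness}\ref{item:correct_two}, and keep track of the orientation of the separation triple (which is why it suffices that some iteration samples the pair in the order $x\in L$, $y\in R$, even though the algorithm only tests $\kappa_G(x,y)$). The only genuinely new ingredient over the edge-sampling proof is the elementary inequality $\vol_G^{\out}(L)<|L|(|L|+|S|)$, which is what converts a size threshold on $L$ into the volume parameter $\nu$ demanded by \Cref{thm:local-vertex-connectivity}; the rest is a routine transcription of \Cref{lem:balanced_edge} and \Cref{lem:imbalanced_edge} with vertex counts in place of edge-volumes.
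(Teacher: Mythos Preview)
Your proposal is correct and follows essentially the same approach as the paper: the identical three-case split on whether $\min(|L|,|R|)\ge a$, the same probability bound $\Omega(\epsilon a/n)$ in the balanced case via $|L|+|R|\ge n-k$, the same dyadic choice of $s\in\cL$ and volume estimate $\vol_G^{\out}(L)\le|L|^2+|L||S|\le s^2+sk$ in the imbalanced case, and the same observation that when $\kappa_G>k$ any returned cut is still a valid $(1+\epsilon)$-approximation. The only cosmetic difference is that the paper splits cases (i) and (ii) into separate lemmas (\Cref{lem:balanced_v} and \Cref{lem:imbalanced_v}).
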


We consider three cases for the set of all separation triples in $G$.
The first case is there exists a separation triple $(L,S,R)$ such
that $|S| \leq k, |L| \geq a,$ and $ |R| \geq a$.  We show that w.h.p. 
\Cref{alg:vcframework} outputs a vertex-cut of size at most $(1+\epsilon)k$. 

\begin{lemma} \label{lem:balanced_v}
If $G$ has a separation triple $(L,S,R)$ such that $|S| \leq k, |L|
\geq a,$ and $ |R| \geq a$ Then w.h.p. \Cref{alg:vcframework} outputs a
vertex-cut of size at most $(1+\epsilon)k$. 
\end{lemma}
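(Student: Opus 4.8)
The plan is to establish the claim through the first loop of \Cref{alg:vcframework} run in vertex-sampling mode, in direct analogy with the edge-sampling balanced case (\Cref{lem:balanced_edge}). The one structural fact I need is that the two nonempty shores $L$ and $R$ jointly cover a $\Theta(\epsilon)$-fraction of $V$: since $|S|\le k$ and, by \Cref{ass:correctness}\ref{item:correct_two}, $k\le\Delta\le n/(1+\epsilon)$ (and $k\le\sqrt n/2$ in the exact setting), we have $|L|+|R|=n-|S|\ge n-k\ge\tfrac{\epsilon}{1+\epsilon}\,n$, hence $\max\{|L|,|R|\}\ge\tfrac{\epsilon}{2(1+\epsilon)}\,n$. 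Combined with the hypotheses $|L|\ge a$, $|R|\ge a$ this yields $|L|\cdot|R|\ge a\cdot\tfrac{\epsilon}{2(1+\epsilon)}\,n$, so for a uniformly random pair $(x,y)$ of vertices, using independence of the two coordinates,
\[
\Pr[\,x\in L\ \text{and}\ y\in R\,]\;\ge\;\frac{|L|\cdot|R|}{n^{2}}\;\ge\;\frac{\epsilon\,a}{2(1+\epsilon)\,n}\;=\;\Omega\!\Big(\frac{\epsilon a}{n}\Big);
\]
in the exact setting (where $k\le\sqrt n/2$) the same computation gives $\Omega(a/n)$.

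Next I would observe that a single such ``good'' pair already suffices. If $x\in L$ and $y\in R$, then since $(L,S,R)$ is a separation triple there is no $(x,y)$-path in $G\setminus S$, so $S$ is an $(x,y)$-vertex-cut and $\kappa_G(x,y)\le|S|\le k$. Consequently the per-pair subroutine that \Cref{alg:vcframework} invokes on $(x,y)$ — the Ford-Fulkerson routine of \Cref{pro:easyff} with threshold $\lfloor(1+\epsilon)k\rfloor$ when $k$ is specified, and the $(1+\epsilon)$-approximation of $\kappa_G(x,y)$ otherwise — reports a value at most $(1+\epsilon)k$ and returns a corresponding $(x,y)$-vertex-cut $U$ with $|U|\le(1+\epsilon)k$, which the algorithm then outputs. (In the exact version one sets $\epsilon=1/(2k)$, so $|U|\le k+\tfrac12$ and hence $|U|\le k$ by integrality.) The first loop draws $n/(\epsilon a)$ pairs independently (resp.\ $n/a$ in the exact version), each good with probability $\Omega(\epsilon a/n)$ (resp.\ $\Omega(a/n)$), so — exactly as argued for \Cref{lem:balanced_edge} — w.h.p.\ at least one sampled pair is good, and therefore w.h.p.\ \Cref{alg:vcframework} outputs a vertex-cut of size at most $(1+\epsilon)k$.

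The step I expect to be the most delicate is the lower bound $|L|\cdot|R|=\Omega(\epsilon a n)$: without the observation that one shore has size $\Omega(\epsilon n)$, the per-trial success probability would drop to the useless $\Omega(a^{2}/n^{2})$, and the $O(n/(\epsilon a))$ samples drawn by the loop would not guarantee a hit w.h.p. This is precisely where \Cref{ass:correctness}\ref{item:correct_two} enters — it keeps $|S|\le k$ bounded away from $n$ by a $\Theta(\epsilon)$-fraction — while everything else is a routine transcription of the edge-sampling balanced case.
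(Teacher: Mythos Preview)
Your proof is correct and follows essentially the same approach as the paper: both use the first loop in vertex-sampling mode, invoke \Cref{ass:correctness}\ref{item:correct_two} to get $|L|+|R|\ge\frac{\epsilon}{1+\epsilon}\,n$ (equivalently, the paper writes $(1-1/c)n$ with $c=1+\epsilon$), deduce that at least one shore has size $\Omega(\epsilon n)$ and hence $\Pr[x\in L,\,y\in R]=\Omega(\epsilon a/n)$, and conclude that $O(n/(\epsilon a))$ independent samples suffice w.h.p. Your write-up is in fact somewhat more explicit than the paper's about why a good pair yields a returned cut of size at most $(1+\epsilon)k$.
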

\begin{proof}

We show that the first loop of \Cref{alg:vcframework} finds a
vertx-cut of size at most $(1+\epsilon)k$. 

We sample two vertices independently $x,y \in V$. Since 
 two events $x \in L$ and $y \in R$ are independent, the probability that
$x \in L$ and $y \in R$ is $P(x \in L, y \in R) = P(x \in L) P(y \in R)$. 

By \Cref{ass:correctness}\ref{item:correct_two},  $k \leq \Delta$, which means $k \leq n/(1+\epsilon)$. For exact vertex connectivity, we have $k \leq \sqrt{n}/2 \leq n/2$.  For generality, we denote $k \leq n/c$. We use $c = 1+\epsilon$  for the approximate vertex connectivity, and $c = 2$ for exact version. 

Since $k \leq n/c$, we have $|L|+|R| = n - |S| \geq n - k \geq n - n/c = (1-1/c)n$. 
If $|R| = \Omega((1-1/c)n)$, then $P(x \in L, y \in R) = P(x \in L) P(y \in
R) \geq |R| a/n^2 = \Omega( (1-1/c)a/n) $. Otherwise, $|L| = \Omega((1-1/c)n)$, and with similar argument we get $P(x \in L, y \in R) = \Omega( (1-1/c)a/n) $.

\begin{comment}
we have $|L|+|R| = \Omega(n)$. If $|R| =
\Omega(n)$, then $$P(x \in L, y \in R) = P(x \in L) P(y \in
R) \geq |R| n^{a}/n^2 = \Omega( n^{a-1}) $$
Otherwise, $|L| = \Omega(n)$, then $$P(x \in L, y \in R) = P(x \in L) P(y \in
R) \geq |L| n^{a}/n^2 = \Omega( n^{a-1}) $$ 
\end{comment}

Therefore, it is enough to sample $O(n/(a\epsilon))$ times (and $O(n/a)$ times for exact version) to get at least one trial corresponding to the event
$x \in L$ and $y \in R$ w.h.p. With that event, we can find a
separation triple corresponding to $\kappa(x,y) \leq (1+\epsilon)k$.
\end{proof}

The second case is there exists a separation triple $(L,S,R)$ such
that $|S| \leq k$ and $|L| < a$ or $|R| < a$. We show that w.h.p. 
\Cref{alg:vcframework} outputs a vertex-cut of size at most $(1+\epsilon)k$.

\begin{lemma}\label{lem:imbalanced_v}
If $G$ has a separation triple $(L,S,R)$ such that 
$|S| \leq k$ and $|L| < a$ or $|R| < a,$ then w.h.p. \Cref{alg:vcframework} outputs a vertex-cut of size at most $(1+\epsilon)k$. 
\end{lemma}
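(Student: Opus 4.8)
The plan is to adapt the proof of \Cref{lem:imbalanced_edge} to the \emph{vertex}-sampling LocalVC loop of \Cref{alg:vcframework}; the one new ingredient is that the hypothesis bounds the vertex count $|L|$ rather than an edge measure, so it must first be converted into a bound on $\vol_G^{\out}(L)$. By symmetry I would treat the case $|L| < a$; the case $|R| < a$ is handled identically on the reverse graph $G^R$, in which $(R,S,L)$ is a separation triple and on which \Cref{alg:vcframework} also invokes LocalVC (recall $G$ and $G^R$ have the same vertex-cuts, since strong connectivity is preserved under edge reversal). Since $L \neq \emptyset$ and $|L| < a$, choose the integer $\ell$ with $1 \le \ell \le \lceil \log_2 a\rceil$ and $s/2 \le |L| \le s$, where $s = 2^{\ell} \in \cL$; this $s$ is among the scales the loop iterates over.

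First I would handle the sampling step: for this value of $s$ the loop draws $O(n/s)$ uniformly random vertices $x \in V$, and $\Pr[x \in L] = |L|/n \ge s/(2n)$, so w.h.p.\ at least one sampled vertex lies in $L$; fix such an $x$.

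Second I would verify that the parameter $\nu = O(s(s+k))$ chosen by the algorithm is a valid target volume, i.e.\ $\vol_G^{\out}(L) \le \nu$. After the transformation on the first line of \Cref{alg:vcframework}, $G$ is a simple directed graph, and no edge leaves $L$ for $R$ by \Cref{def:separation_triple}, so
\[
\vol_G^{\out}(L) \;=\; \sum_{v \in L} \deg_G^{\out}(v) \;=\; |E_G(L,L)| + |E_G(L,S)| \;\le\; |L|(|L|-1) + |L|\,|S| \;\le\; s^2 + sk,
\]
using $|S| \le k$; since $s^2 + sk = s(s+k)$, this is at most $\nu$ once the hidden constant in $\nu$ is chosen to be at least $1$. (In the undirected input case one may instead use that $H_{k+1}$ has arboricity $k+1$ by \Cref{thm:sparsification}, giving the sharper $\vol_G^{\out}(L) = O(ks)$, but that refinement is not needed here.)

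Finally I would invoke \Cref{thm:local-vertex-connectivity} (or \Cref{cor:exact-local-vertex-connectivity} for the exact version, where $\epsilon = 1/(2k)$). The call LocalVC$(G,x,\nu,k,\epsilon)$ meets all of its preconditions --- the local conditions are satisfied by \Cref{ass:correctness}\ref{item:correct_three}, and we additionally have $x \in L$, $|S| \le k$, and $\vol_G^{\out}(L) \le \nu$ --- so the separation triple $(L,S,R)$ witnesses \eqref{eq:unfound_cut}, and therefore LocalVC cannot return $\perp$. Hence it returns a separation triple $(L',S',R')$ with $x \in L'$ and $|S'| \le (1+\epsilon)k$ (of size at most $k$ in the exact case), i.e.\ a vertex-cut of size at most $(1+\epsilon)k$, which \Cref{alg:vcframework} outputs, as required. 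The step needing the most care is the volume estimate: unlike in \Cref{lem:imbalanced_edge}, the small quantity guaranteed by the hypothesis ($|L|$) is not the one LocalVC's interface takes as its target, so one must pass through simplicity (or arboricity) of $G$ to bridge between them.
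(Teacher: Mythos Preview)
Your proof is correct and follows essentially the same approach as the paper's own proof: pick the dyadic scale $s$ with $s/2 \le |L| \le s$, argue that $O(n/s)$ vertex samples hit $L$ w.h.p., bound $\vol_G^{\out}(L) \le |E_G(L,L)| + |E_G(L,S)| \le s^2 + sk$, and invoke \Cref{thm:local-vertex-connectivity} with $\nu = O(s(s+k))$. Your added remarks (the explicit handling of the $|R|<a$ case via $G^R$, and the aside on arboricity for the undirected input) are consistent with the paper but not strictly needed beyond what it already says.
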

\begin{proof}

We show that the second loop (LocalVC with vertex sampling method) of \Cref{alg:vcframework} finds a vertex-cut of size at most $(1+\epsilon)k$. 

We focus on the case $|L| < a$. The case $|R| < a$ is similar,
except that we need to compute local vertex connectivity on the reverse graph instead. 

We show that w.h.p, there is an event $x \in L$. Since $|L| < a$,  there exists $\ell$ in range $1 \leq \ell \leq \lceil \log_2a
\rceil$ such that $ 2^{\ell -1} \leq |L| \leq s^{\ell}$. In other words, for $s = 2^{\ell}$, we have $s/2 \leq |L| \leq s$.  Since $x$ is independently and uniformly sampled, the probability that $x \in L$
is $|L|/n$, which is at least $\geq s/(2n)$.  Therefore,  by sampling $O(n/s)$ rounds, w.h.p. there is at least one event where $x \in L$.  

From now we assume that $|L| \leq s$ and that $x \in L$. 
We show that $\vol_G^{\out}(L) = s(s+k)$. Since $|L| \leq s$, $\vol_G^{\out}(L) = |E_G(L, L)| +  |E_G(L, S)| \leq |L|^2 + |L| |S| \leq s^2 + sk $. 

We have verified the following conditions for the parameters $x, \nu,
k$ for LocalVC$(G,x,\nu,k)$:
\begin{itemize}[nolistsep, noitemsep]
\item Local conditions are satisfied by \Cref{ass:correctness}\ref{item:correct_three}. 
\item $x \in L$. 
\item $|S| \leq k$.
\item $\vol_G^{\out}(L) \leq \nu$ since $\vol_G^{\out}(L) \leq s^2 + sk$, and we use $\nu = s^2+sk$. 
\end{itemize}
  By \Cref{thm:local-vertex-connectivity}, LocalVC outputs a vertex-cut of size at most $(1+\epsilon)k$.
\begin{comment}
For undirected graph, the same argument applies except that we can use
$\nu = O(sk)$. To see this, we need to show $\vol(L) = O(sk)$. Recall that we
compute the connectivity on $H_{k+1}$
given by \Cref{thm:sparsification}. Therefore, $|E_{H_{k+1}}(L, L)| \leq (k+1)|L|
$. Hence,  $$ \vol(L) = 2|E_{H_{k+1}}(L, L)| +  |E_{H_{k+1}}(L,S)|
\leq  2(k+1) |L| + |L| |S| \leq 2s (k+1)  + sk = O(sk)$$
\end{comment}
\end{proof}

The final case is when every separation triple $(L,S,R)$ in $G$, $|S| >
k$. In other words, $\kappa_G > k$. If \Cref{alg:vcframework} outputs a vertex-cut, then it is a $(1+\epsilon)$-approximate vertex-cut. Otherwise,  \Cref{alg:vcframework} outputs $\perp$ correctly. 
\begin{comment}
\begin{lemma}\label{lem:null_v}
If $\kappa_G > k$, then \Cref{alg:vcframework} outputs $\perp$. 
\end{lemma}
\begin{proof}
 In the first loop, we always outputs $\perp$ since $\kappa(x,y) \geq \kappa > k$, and
FoldFulkerson max-flow can verify that $\kappa_G > k$.  In the second
loop,  recall that \Cref{eq:local_condition} or \Cref{eq:local_condition_dense} is
  satisfied by \Cref{pro:nuk_sat_vertex}.  Furthermore, every separation triple $(L,S,R)$ in $G$, $|S| >
k$.  Therefore, by  \Cref{thm:local-vertex-connectivity},  LocalVC
always output $\perp$ for any sample vertex $x$. 
\end{proof}
\end{comment}
The proof of \Cref{lem:vertex_sampling_nolocalvc_correct} is complete since
\Cref{lem:vertex_sampling_nolocalvc_correct} follows from
\Cref{lem:balanced_v}, \Cref{lem:imbalanced_v}, and the case $\kappa_G  > k$ corresponding the three cases of the set of separation triples in $G$.

\subsubsection{Vertex-Sampling without LocalVC}

We do not specify $k$ and LocalVC algorithm. 

\begin{lemma} \label{lem:vertex_sampling_nolocalvc_correct}
 \Cref{alg:vcframework} with vertex-sampling, but without  $\operatorname{LocalVC }$ w.h.p. outputs   a vertex-cut of size $\leq (1+\epsilon)\kappa_G$ 
\end{lemma}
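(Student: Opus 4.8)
The plan mirrors the edge-sampling analysis carried out above. Let $\tilde{\kappa}$ denote the size of the set output by \Cref{alg:vcframework} and write $\kappa := \kappa_G$. Every candidate the algorithm compares ($W$, $N_G^{\out}(v_{\min})$, $N_{G^R}^{\out}(u_{\min})$) is a genuine vertex-cut (for the two degree cuts this uses \Cref{obs:kappa-degree}), so $\tilde{\kappa} \ge \kappa$; and since the two degree cuts have sizes $\dmin$ and $\deg_{\min}^{\textin}$ respectively, we also get $\tilde{\kappa} \le d^* := \min(\dmin, \deg_{\min}^{\textin})$. It therefore suffices to show $\tilde{\kappa} \le (1+O(\epsilon))\kappa$ w.h.p.; I would run the algorithm with the balance parameter $a = \Theta(\epsilon\, d^*)$.

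First I would dispose of the high-connectivity regime: if $\kappa \ge \Delta$, where $\Delta = \min(n/(1+\epsilon), (m/(1+\epsilon))^{1/2})$, then the proposition in \Cref{sec:high_vertex_conn} gives $d^* \le (1+\epsilon)\kappa$, hence $\tilde{\kappa} \le (1+\epsilon)\kappa$. So assume $\kappa < \Delta \le n/(1+\epsilon)$ and fix an optimal separation triple $(L,S,R)$ with $|S| = \kappa$, so that $|L| + |R| = n - \kappa \ge \epsilon n/(1+\epsilon)$. The structural fact I would use: any $x \in L$ has at least $\dmin$ out-neighbors, at most $|S| = \kappa$ of which lie in $S$ and none of which lie in $R$ (there is no $L \to R$ edge), so $|L| \ge \dmin - \kappa + 1$; symmetrically, applied to in-neighbors of a vertex of $R$, $|R| \ge \deg_{\min}^{\textin} - \kappa + 1$.

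I would then split into two cases. If $\min(|L|,|R|) \le \epsilon\, d^*$, say $|L| \le \epsilon d^* \le \epsilon \dmin$ (the case $|R| \le \epsilon d^*$ being symmetric via $\deg_{\min}^{\textin}$ and the reverse graph), then $\dmin - \kappa < |L| \le \epsilon \dmin$, so $\kappa > (1-\epsilon)\dmin \ge (1-\epsilon) d^*$, i.e. $d^* < \kappa/(1-\epsilon) \le (1+2\epsilon)\kappa$ for $\epsilon \le 1/2$; together with $\tilde{\kappa} \le d^*$ this gives the claim. Otherwise $\min(|L|,|R|) > \epsilon d^* \ge a$, so both $|L|, |R| \ge a$, and a uniformly random ordered pair $(x,y)$ satisfies $x \in L$ and $y \in R$ with probability $\frac{|L|\,|R|}{n^2} = \frac{\min(|L|,|R|)\,\max(|L|,|R|)}{n^2} \ge \frac{(\epsilon d^*)\,(\epsilon n/(2(1+\epsilon)))}{n^2} = \Omega\!\left(\frac{\epsilon^2 d^*}{n}\right)$, where I used $\max(|L|,|R|) \ge (|L|+|R|)/2 \ge \epsilon n/(2(1+\epsilon))$. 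Hence among the $\ot(n/(\epsilon a)) = \ot(n/(\epsilon^2 d^*))$ sampled pairs, w.h.p. at least one has $x \in L$ and $y \in R$; for any such pair $S$ is an $(x,y)$-vertex-cut, since a path starting at $x$ can leave $L$ only through $S$, so the approximate value returned for $\kappa_G(x,y)$ is at most $(1+\epsilon)\kappa_G(x,y) \le (1+\epsilon)|S| = (1+\epsilon)\kappa$ and comes with a witnessing cut, giving $\tilde{\kappa} \le (1+\epsilon)\kappa$.

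The step I expect to be the main obstacle is the lower bound $\max(|L|,|R|) = \Omega(\epsilon n)$ used to bound the hitting probability from below: a priori $\kappa$ could be nearly $n$, making $|L| + |R|$ tiny, which is exactly why the proof must first peel off the high-connectivity case $\kappa \ge \Delta$ via the proposition of \Cref{sec:high_vertex_conn}. A secondary subtlety is orientation: only an ordered sample with $x \in L$ and $y \in R$ certifies $\kappa_G(x,y) \le \kappa$, because a separation triple forbids only $L \to R$ edges (a path from a vertex of $R$ to a vertex of $L$ may bypass $S$), so one must either keep track of which sampled vertex plays which role or test both orientations. The remaining work --- the $(1+O(\epsilon))$ bookkeeping and the routine ``$\ot(\cdot)$ samples suffice w.h.p.'' union-bound --- parallels the edge-sampling case.
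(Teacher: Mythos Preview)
Your proof is correct and follows essentially the same two-case strategy as the paper: bound $\tilde\kappa$ above by the minimum degree, record the structural inequality $|L|\ge \dmin-\kappa$, and then split on whether the small side is below $\epsilon\cdot(\text{min-degree})$ (in which case the degree cut is already a $(1+O(\epsilon))$-approximation) or above it (in which case a random pair lands in $L\times R$ with probability $\Omega(\epsilon^2 d^*/n)$). The one organizational difference is how you obtain $\max(|L|,|R|)=\Omega(\epsilon n)$: you first invoke the high-connectivity proposition of \Cref{sec:high_vertex_conn} to peel off $\kappa\ge\Delta$, whereas the paper bypasses that detour by splitting directly on whether $\dmin\ge n/2$ (if so, $|R|\ge|L|\ge\epsilon\,\dmin\ge\epsilon n/2$; if not, $\kappa\le\dmin\le n/2$, so $|R|\ge(n-\kappa)/2\ge n/4$). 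Both routes land on the same hitting probability and sample bound; the paper's version is marginally more self-contained, while yours makes the dependence on the high-connectivity regime explicit.
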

\begin{proof}
Let $\tilde \kappa$ be the answer of our algorithm. By design, we have $\tilde \kappa \leq \min(d_{\min}^{\out}, d_{\min}^{\textin})$. Also, $\tilde \kappa \geq \kappa$ since the answer corresponds to some vertex-cut. It remains to show $\tilde \kappa \leq (1+O(\epsilon))\kappa$. 

Let $(L,S,R)$ be an optimal separation triple. We assume without loss
of generality that $|L| \leq |R|$. The other case is symmetric, where
we use $d_{\min}^{\textin}$ instead.

 We first show the inequality $|L| \geq  d_{\min}^{\out} - \kappa$. Since $(L,S,R)$ is a  separation triple where $|S| = \kappa$, the number of out-neighbors of a fixed vertex $x \in L$ that can be included in $S$ is at most $\kappa$. By definition of separation triple, neighbors of $x$ cannot be in $R$, and so the rest of the neighbors must be in $L$.

If $|L| \leq \epsilon d_{\min}^{\out}$, then $\kappa=|S|\ge d_{\min}^{\out}-\epsilon  d_{\min}^{\out} \ge\tilde{\kappa}(1-\epsilon)\ge\kappa(1-\epsilon)$.
That is, $\tilde{\kappa}$ is indeed an $(1+O(\epsilon))$-approximation
of $\kappa$ in this case.

On the other hand, if $|L| \geq \epsilon d_{\min}^{\out}$, then we claim that $|R| \geq \epsilon n/4$. To see this, if $d_{\min}^{\out} \geq n/2$, then $|R| \geq |L|   \geq \epsilon d_{\min}^{\out} \geq \epsilon  n/2$. Otherwise, $d_{\min}^{\out} \leq n/2$. In this case, $\kappa \leq d_{\min}^{\out} \leq n/2$. Therefore, $2|R| \geq |L|+|R| = n - |S| = n - \kappa \geq n/2$. In either case, the claim follows. 

We show that the probability that two sample vertices $x \in L$ and $y \in R$ is at least $\epsilon^2  d_{\min}^{\out}/(4n)$. First of all, the two events are independent. Recall that $|L| \geq \epsilon d_{\min}^{\out}$ and  $|R| \geq \epsilon n/4$. Therefore, $P(x \in L, y \in R) = P(x \in L)P(y \in R) = (|L|/n)(|R|/n) \geq \epsilon^2 d_{\min}^{\out}/(4n)$. 

Therefore, we sample for $\ot(n/(\epsilon^2d_{\min}^{\out} ))$ many times to get the event $x \in L$ and $y \in R$ w.h.p. Hence, we compute approximate $\kappa(x,y)$ correctly, and so our answer $\tilde \kappa$ is indeed an $(1+\epsilon)$-approximation.
\end{proof}

\subsection{Running Time} 
%\end{definition}

Let $T_1(m,n, k,\epsilon)$ be the time for deciding  if $\kappa(x,y) \leq (1+\epsilon)$,  $T_2(\nu, k, \epsilon)$ be the
running time for approximate  LocalVC, and $T_3(m,n,\epsilon)$ be the time for computing approximate  $\kappa(x,y)$.  If $G$ is undirected, we can replace $m$ with $nk$ with additional $O(m)$ preprocessing time. The running time for exact version is similar except that we do not have to pay $1/\epsilon$ factor for the first loop of \Cref{alg:vcframework}. 

\subsubsection{Edge-Sampling with LocalVC}
\begin{lemma} \label{lem:edge_sampling_localvc_time}
 \Cref{alg:vcframework} with edge-sampling, and
 $\operatorname{LocalVC }$  terminates in time $$\ot( (m/(\epsilon a))(T_1(m,n,
 k,\epsilon) + T_2(a, k, \epsilon) )).$$ 
\end{lemma}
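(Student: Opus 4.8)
The plan is to charge the running time phase by phase along \Cref{alg:vcframework} run in edge-sampling mode with $\operatorname{LocalVC}$ supplied. First I would dispatch the preprocessing line: in the undirected case, building $H_{k+1}$ via \Cref{thm:sparsification} costs $O(m)$ and replaces the graph by one with at most $kn$ edges, so from then on ``$m$'' may be read as $\min\{m,kn\}$; this one-time cost is subsumed below (and in the directed case there is no preprocessing). The first loop then executes $m/(\epsilon a)$ iterations (or $m/a$ for the exact version, which is where the $1/\epsilon$ disappears); each iteration samples a constant number of edge endpoints and invokes the $\kappa(x,y)$-decision routine $O(1)$ times, so by the definition of $T_1$ this phase runs in $\ot\!\big((m/(\epsilon a))\,T_1(m,n,k,\epsilon)\big)$ time.

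The only part needing a short computation is the $\operatorname{LocalVC}$ phase. Here $\cL=\{2^{\ell}\colon 1\le \ell\le\lceil\log_2 a\rceil\}$ has $O(\log a)$ elements, and for each $s\in\cL$ the inner loop runs $m/s$ times, each time sampling one edge and calling $\operatorname{LocalVC}$ a constant number of times ($H\in\{G,G^R\}$, $z\in\{x,y\}$) with target volume $\nu=O(s)$. By \Cref{thm:local-vertex-connectivity} each such call costs $\ot\!\big(\nu^{3/2}/(\epsilon^{3/2}k^{1/2})\big)=\ot\!\big(s^{3/2}/(\epsilon^{3/2}k^{1/2})\big)$, which is exactly $T_2(O(s),k,\epsilon)$. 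Summing,
\[
\sum_{s\in\cL}\frac{m}{s}\cdot\ot\!\left(\frac{s^{3/2}}{\epsilon^{3/2}k^{1/2}}\right)=\ot\!\left(\frac{m}{\epsilon^{3/2}k^{1/2}}\right)\sum_{s\in\cL}s^{1/2},
\]
and since the $s\in\cL$ are powers of two bounded by $O(a)$, the sum $\sum_{s\in\cL}s^{1/2}$ is a geometric series dominated by its top term, hence $O(\sqrt a)$. Thus this phase runs in $\ot\!\big(m\sqrt a/(\epsilon^{3/2}k^{1/2})\big)=\ot\!\big((m/a)\,T_2(a,k,\epsilon)\big)$ time.

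Finally I would add the phases: the total is $O(m)+\ot\!\big((m/(\epsilon a))\,T_1(m,n,k,\epsilon)\big)+\ot\!\big((m/a)\,T_2(a,k,\epsilon)\big)$, and since $\epsilon<1$ we have $m/a\le m/(\epsilon a)$ while the $O(m)$ preprocessing term is absorbed, so everything collapses to $\ot\!\big((m/(\epsilon a))(T_1(m,n,k,\epsilon)+T_2(a,k,\epsilon))\big)$, which is the claimed bound; the exact version is identical with $\epsilon$ removed from the first loop's count. The main (and essentially only) subtlety is the collapse of the sum over $\cL$ to its top term, which works precisely because $T_2$ is polynomial in $\nu$ of degree strictly above $1$ (namely $\nu^{3/2}$), so that $\tfrac{m}{s}T_2(O(s),k,\epsilon)=\ot\!\big(m s^{1/2}/(\epsilon^{3/2}k^{1/2})\big)$ is increasing in $s$ and the whole doubling search over $s$ costs just an extra polylogarithmic factor rather than a factor of $a$.
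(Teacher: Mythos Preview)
Your proposal is correct and follows essentially the same approach as the paper's proof: charge the first loop as $(m/(\epsilon a))\cdot T_1$ and sum the LocalVC phase over $s\in\cL$ to get $\ot((m/a)\,T_2(a,k,\epsilon))$. The only difference is that the paper keeps $T_2$ abstract and simply asserts $\sum_{s\in\cL}(m/s)T_2(s,k,\epsilon)=\ot((m/a)T_2(a,k,\epsilon))$, whereas you plug in the explicit $\nu^{3/2}$ bound and spell out the geometric-series collapse; your version is more detailed but otherwise identical.
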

\begin{proof}
 The first term comes from the first loop of \Cref{alg:vcframework}. That is, we repeat $O(m/(a\epsilon))$ times for computing approximate $\kappa(x,y)$, and each iteration takes  $T_1(m,n, k,\epsilon)$ time.  

The second term comes from computing local vertex connectivity.  For each $s \in \cL$, we repeat the second loop for $O(m/s)$ times, each LocalVC subroutine takes  $T_2(\nu, k, \epsilon)$ time where $\nu = s$. Therefore, the total time for the second loop  is $\sum_{s \in \cL}(m/s)T_2(s, k, \epsilon) = \ot( (m/a)T_2(a, k, \epsilon ))$. 
\end{proof}

\subsubsection{Vertex-Sampling with LocalVC}
\begin{lemma} \label{lem:vertex_sampling_localvc_time}
 \Cref{alg:vcframework} with vertex-sampling, and
 $\operatorname{LocalVC }$ terminates in time $$\ot( (n/(\epsilon a))(T_1(m,n, k,\epsilon) + T_2(a^2+ak, k, \epsilon) )).$$  
\end{lemma}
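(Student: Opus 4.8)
The plan is to mirror the proof of \Cref{lem:edge_sampling_localvc_time} almost verbatim, splitting the running time of \Cref{alg:vcframework} into its two relevant phases and charging each separately. First I would handle the opening loop of the algorithm (vertex sampling with a direct computation of $\kappa_G(x,y)$): it executes $O(n/(\epsilon a))$ iterations, and each iteration computes an approximate $\kappa_G(x,y)$ in $T_1(m,n,k,\epsilon)$ time (in the exact version it instead decides whether $\kappa_G(x,y)\le(1+\epsilon)k$, within the same bound), so this phase costs $\ot((n/(\epsilon a))\, T_1(m,n,k,\epsilon))$, giving the first term. I would also note that the undirected preprocessing via \Cref{thm:sparsification} costs $O(m)$ and is dominated since $n/(\epsilon a)\ge 1$ and $T_1$ already accounts for a max-flow-type computation (or this $O(m)$ can be kept as an explicit lower-order additive term).

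Next I would bound the second loop, the one that invokes LocalVC on sampled vertices. Here $\cL=\{2^\ell:1\le\ell\le\lceil\log_2 a\rceil\}$ has $O(\log a)=\ot(1)$ elements with largest element $\Theta(a)$; for a fixed $s\in\cL$ the loop runs $O(n/s)$ iterations, and each iteration invokes LocalVC (on $G$ and on $G^R$) with target volume $\nu=O(s(s+k))=O(s^2+sk)$, which costs $T_2(O(s^2+sk),k,\epsilon)$ by \Cref{thm:local-vertex-connectivity}. Hence the contribution of the value $s$ is $\ot((n/s)\, T_2(O(s^2+sk),k,\epsilon))$, and the total cost of the second loop is $\sum_{s\in\cL}\ot((n/s)\, T_2(O(s^2+sk),k,\epsilon))$.

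The key step, and really the only nontrivial one, is to argue that this sum is a geometric series dominated by its largest term $s=\Theta(a)$. Plugging in the closed form $T_2(\nu,k,\epsilon)=\ot(\nu^{3/2}/(\epsilon^{3/2}k^{1/2}))$ from \Cref{thm:local-vertex-connectivity}, the $s$-th term equals $\ot(n\,s^{1/2}(s+k)^{3/2}/(\epsilon^{3/2}k^{1/2}))$, whose value is increasing in $s$ and whose ratio between consecutive powers of two is at least $\sqrt{2}$; therefore the sum is $\ot$ of its largest term, namely $\ot(n\,a^{1/2}(a+k)^{3/2}/(\epsilon^{3/2}k^{1/2}))=\ot((n/a)\, T_2(a^2+ak,k,\epsilon))$, using $a(a+k)=a^2+ak$ and the monotonicity of $T_2$ once more. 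Adding the two phases yields the claimed bound $\ot((n/(\epsilon a))(T_1(m,n,k,\epsilon)+T_2(a^2+ak,k,\epsilon)))$, since $n/(\epsilon a)\ge 1$. The main obstacle is exactly this bookkeeping: one must exploit that $T_2$ is \emph{super}linear in $\nu$ (growing like $\nu^{3/2}$), so that although $\sum_{s\in\cL}n/s=O(n)$ is dominated by the \emph{smallest} $s$, the product $\sum_{s\in\cL}(n/s)\,T_2(\Theta(s^2),k,\epsilon)$ flips to being dominated by the \emph{largest} $s=\Theta(a)$ — the same phenomenon that appears in \Cref{lem:edge_sampling_localvc_time}; carrying the $O(\cdot)$ constants hidden inside $\nu=O(s(s+k))$ through the estimate is harmless, as they affect only constant and polylogarithmic factors absorbed by $\ot$.
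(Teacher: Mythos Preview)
Your proposal is correct and follows essentially the same two-phase decomposition as the paper's proof: charge the first loop as $O(n/(\epsilon a))$ calls to $T_1$, and for the second loop sum $(n/s)\,T_2(O(s^2+sk),k,\epsilon)$ over $s\in\cL$ and observe it is dominated by the term $s=\Theta(a)$. The paper simply asserts this last equality without justification, whereas you spell out the geometric-series argument using the explicit $\nu^{3/2}$ growth of $T_2$; this is the same idea, just made explicit.
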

\begin{proof}
 The first term comes from the first loop of \Cref{alg:vcframework}. That is, we repeat $O(n/(a\epsilon))$ times for computing approximate $\kappa(x,y)$, and each iteration takes  $T_1(m,n, k,\epsilon)$ time.  

 The second term comes from computing local vertex connectivity.  For each $s \in \cL$, we repeat the second loop for $O(n/s)$ times, each LocalVC subroutine takes  $T_2(\nu, k, \epsilon)$ time where $\nu = O(s(s+k))$. Therefore, the total time for the second loop  is $\sum_{s \in \cL}(n/s)T_2(s, k, \epsilon) = \ot( (n/a)T_2(a^2 +ak, k, \epsilon ))$. 
\end{proof}

\begin{comment}
\subsubsection{Edge-Sampling without LocalVC}
\begin{lemma} \label{lem:edge_sampling_nolocalvc_time}
 \Cref{alg:vcframework} with edge-sampling, but without
 $\operatorname{LocalVC }$ terminates in time $$\ot( m/(\epsilon k^2) \min(km, n^2/\epsilon)).$$
\end{lemma}
\end{comment}

\subsubsection{Vertex-Sampling without LocalVC}
\begin{lemma} \label{lem:vertex_sampling_nolocalvc_time}
 \Cref{alg:vcframework} with vertex-sampling, but without
 $\operatorname{LocalVC }$ terminates in time $$\ot(  n/(\epsilon^2 k )T_3(m,n,\epsilon)).$$
\end{lemma}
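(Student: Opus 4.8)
The plan is to read the running time directly off the two blocks of \Cref{alg:vcframework} that actually execute in this configuration. When the sampling method is ``vertex'' and $\operatorname{LocalVC}$ is unspecified (so that for each sampled pair the algorithm computes an approximation of $\kappa_G(x,y)$, with $k$ unspecified), only the first ``Sampling method $=$ vertex'' loop and the ``LocalVC is not specified'' branch are entered; the loops over $\cL$ that invoke $\operatorname{LocalVC}$ are skipped and contribute nothing, so the analysis reduces to bounding these two pieces.

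First I would bound the first loop. It is repeated $n/(\epsilon a)$ times, and a single iteration does $O(1)$ work to draw a uniform pair $(x,y)$, spends $T_3(m,n,\epsilon)$ time to compute an approximate $\kappa_G(x,y)$, and makes an $O(1)$-time comparison; hence the loop takes $\ot\big((n/(\epsilon a))\cdot T_3(m,n,\epsilon)\big)$ time. Next I would pin down $a$. The correctness argument for this mode (\Cref{lem:vertex_sampling_nolocalvc_correct}) shows that $\ot(n/(\epsilon^2 \dmin))$ samples suffice for the good event ``$x\in L,\ y\in R$'' of an optimal separation triple to occur w.h.p., so we may take $a=\tilde\Theta(\epsilon\,\dmin)$, which makes $n/(\epsilon a)=\ot(n/(\epsilon^2 \dmin))$ and turns the loop cost into $\ot\big((n/(\epsilon^2 \dmin))\cdot T_3(m,n,\epsilon)\big)$.

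Finally I would dispose of the ``LocalVC is not specified'' branch: it computes the minimum-out-degree vertices of $G$ and of $G^R$, looks up the best pair $(x^*,y^*)$ (and its cut $W$) recorded during the loop, and returns the smallest of $\{W,\,N_G^{\out}(v_{\text{min}}),\,N_{G^R}^{\out}(u_{\text{min}})\}$; this is $O(m)$ work (and in the undirected case the preliminary replacement of $G$ by $H_{k+1}$ is another $O(m)$), which is absorbed into the first term since $n/(\epsilon^2\dmin)\ge 1$ and $T_3(m,n,\epsilon)=\Omega(m)$. Adding the two contributions yields the stated bound (with $\dmin$ playing the role of the parameter $k$ in the statement), and for the exact version the same argument goes through with the $1/\epsilon$ factor in the first loop dropped. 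The only genuine choice point — hence the ``hard'' part, though it is mild — is making the value of $a$ agree with the sampling bound established in the correctness proof; everything else is bookkeeping.
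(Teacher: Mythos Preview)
Your argument is correct and follows essentially the same route as the paper: set $a$ so that the first loop performs $\ot(n/(\epsilon^2 k))$ iterations (with $\dmin$ standing in for $k$, as you note), and charge $T_3(m,n,\epsilon)$ per iteration. The paper's proof is just a one-line version of this; your additional accounting for the ``LocalVC is not specified'' branch and the $O(m)$ absorption is sound but more detail than the paper bothers with.
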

\begin{proof}
The running time follows from the first loop where  we set $a$ such
that the number of sample is $n/(\epsilon^2 k )$, and computing approximate $\kappa(x,y)$ can be done in $T_3(m,n,\epsilon)$ time.
\end{proof}

\subsection{Proof of \Cref{thm:exact_vertex_connectivity,thm:approx_vertex_connectivity}}

For exact vertex connectivity, LocalVC runs in  $\nu^{1.5}k$
time by \Cref{cor:exact-local-vertex-connectivity}.  We can decide $\kappa(x,y) \leq k$ in $O(mk)$ time. 

For undirected exact vertex connectivity where $k < O(\sqrt{n})$, we first sparsifiy the graph
in $O(m)$ time. Then, we use edge-sampling with LocalVC algorithm
where we set $a = m'^{2/3}$, where $m' = O(nk)$ is the number of edges of sparsified graph. 

For directed exact vertex connectivity where $k < O(\sqrt{n})$, we use edge-sampling with
LocalVC algorithm where we set $a = m^{2/3}$ if $m < n^{3/2}$. If $m >
n^{3/2}$, we use vertex-sampling with LocalVC algorithm where we set
$a = m^{1/3}$. 

For approximate vertex connectivity, approximate LocalVC runs in $
\poly(1/\epsilon) \nu^{1.5}/ \sqrt{k}$ by
\Cref{thm:local-vertex-connectivity}. Also, we can decide $\kappa(x,y)
\leq (1+O(\epsilon))k$ or cerify that $\kappa \geq k$ in  time
\\$\ot(\poly(1/\epsilon) \min(mk,n^{2+o(1)}))$. The running time
$\poly(1/\epsilon) n^{2+o(1)} $ is due to \cite{ChuzhoyK19}. 

For undirected approximate vertex connectivity,  we first sparsify the
graph in $O(m)$ time. Let $m'$ be the number of edges of the
sparsified graph.  For $k < n^{0.8}$, we use edge-sampling with approximate LocalVC
algorithm where we set $a = m^{\hat a}$, where $\hat a = \frac{\min(5\hat k +
  2, \hat k +4)}{ 3\hat k + 3}$, and $\hat k = \log_nk$. For $k > n^{0.8}$, we use
vertex-sampling without LocalVC. 

For directed approximate vertex connectivity,  If $k \leq \sqrt{n}$ , we run edge-sampling with $a = m^{\hat a}, \hat a =
\min(2/3+\hat k, 1)$ where $\hat k = \log_m k$, or we run
vetex-sampling with $a =  m^{1/3} k^{1/2}$. If $ \sqrt{n} < k \leq
n^{0.8}$, we run edge-sampling with $a = m^{\hat a}$ where $\hat a =
4\log_mn/3 + \log_mk/3$ or vertex-sampling with $a = n^{\hat a}$ where
$\hat a = (2/3+ (\log_nk)/6)$. Finally, if $k > n^{4/5}$, we use vertex-sampling without
LocalVC.

	\section{$(1+\epsilon)$-Approximate Vertex Connectivity via Convex Embedding}

%THM 4.4 $G$ is $k$-vertex connecti
\begin{theorem}
There exists an algorithm that takes $G$ and $\epsilon > 0$, and in
$O(n^{\omega}/ \epsilon^2 + \min(\kappa_G, \sqrt{n})m )$ time  outputs a
vertex-cut $U$ such that $|U| \leq (1+\epsilon)\kappa$. 
\end{theorem}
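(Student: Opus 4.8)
The plan is to split the task into three pieces: (1) a quick check of the ``near‑complete'' regime; (2) a randomized reduction, built on the convex‑embedding / matrix‑rank machinery of Linial--Lov\'{a}sz--Wigderson \cite{LinialLW88} (undirected) and Cheriyan--Reif \cite{CheriyanR94} (directed), that produces in $\ot(n^{\omega}/\epsilon^{2})$ time a pair $(x,y)$ with $\kappa_G(x,y)\le(1+\epsilon)\kappa_G$; and (3) one max‑flow computation that turns that pair into an explicit cut of the same size in $O(\min(\kappa_G,\sqrt n)\,m)$ time. Write $\kappa=\kappa_G$, $d=\dmin$, and note the algorithm always returns $\tilde\kappa:=\min\{d,\,\text{best }\kappa_G(x,y)\text{ found in step (2)}\}$; since every $\kappa_G(x,y)\ge\kappa$ and the star $N^{\out}(v_{\min})$ has size $d\ge\kappa$ (\Cref{obs:kappa-degree}), we automatically have $\tilde\kappa\ge\kappa$, so only the upper bound $\tilde\kappa\le(1+\epsilon)\kappa$ needs an argument.

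The correctness of the upper bound rests on two independent ingredients, one of which always succeeds. Ingredient A (star): if $\kappa\ge(1-\epsilon/2)n$ then $d/\kappa\le(n-1)/((1-\epsilon/2)n)\le 1+\epsilon$; and if some optimal separation triple $(L,S,R)$ with $|L|\le|R|$ has $|L|\le\epsilon\kappa$, then every $w\in L$ sends at least $d-(|L|-1)$ out‑edges into $S$, so $d\le\kappa+|L|-1<(1+\epsilon)\kappa$. In either case the star $N^{\out}(v_{\min})$ of size $d$ already realizes $\tilde\kappa\le(1+\epsilon)\kappa$. Ingredient B (random seeds): otherwise $\kappa<(1-\epsilon/2)n$, so in every optimal triple $|L\cup R|=n-\kappa>\tfrac{\epsilon}{2}n$; moreover, since there is no edge from $L$ to $R$ (\Cref{def:separation_triple}), $S$ is an $(a,b)$‑vertex‑cut for every $a\in L,b\in R$, so $\min_t\kappa_G(a,t)\le\kappa$ and $\min_t\kappa_G(t,b)\le\kappa$. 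I would therefore pick $\ot(1/\epsilon^{2})$ seeds uniformly at random; a random seed lies in $L\cup R$ with probability $>\epsilon/2$, so w.h.p.\ some seed $s$ does. For such $s$ I invoke the convex‑embedding subroutine, which (after a single $O(n^{\omega})$ matrix inversion on the split graph with generic/random edge weights, and per‑seed work computing the ranks that encode $\kappa_G(s,\cdot)$, together with the analogous computation on $G^{R}$) reports the minimum of $\kappa_G(s,t)$ over all $t$ and a witness $t$; binary searching the target embedding dimension up to a $1/\epsilon$‑finer granularity yields a $(1+\epsilon)$‑approximation of that minimum. The returned pair then satisfies $\kappa_G(x,y)\le(1+\epsilon)\kappa$, and the total cost of the $\ot(1/\epsilon^{2})$ invocations is $\ot(n^{\omega}/\epsilon^{2})$ — this is exactly where the $n^{\omega}/\epsilon^{2}$ term comes from.

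For step (3), from the pair $(x,y)$ realizing $\tilde\kappa$ (or from the star cut in Ingredient A) I compute the minimum $(x,y)$‑vertex‑cut exactly using the $\operatorname{VC}_{\tilde\kappa}$ routine: unit‑vertex‑capacity max flow in the split graph by blocking‑flow / Hopcroft--Karp‑style augmentation, which needs only $\min(\tilde\kappa,\sqrt n)$ phases of $O(m)$ each on such a network. Since $\tilde\kappa\le(1+\epsilon)\kappa=O(\kappa)$, this is $O(\min(\kappa,\sqrt n)\,m)$ and returns a cut $U$ with $|U|=\kappa_G(x,y)\le\tilde\kappa\le(1+\epsilon)\kappa$. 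Adding the two bounds gives the claimed $O(n^{\omega}/\epsilon^{2}+\min(\kappa,\sqrt n)\,m)$.

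The step I expect to be the real obstacle is the middle one. First, one must adapt the Linial--Lov\'{a}sz--Wigderson / Cheriyan--Reif embedding so that it \emph{outputs the value} $\min_t\kappa_G(s,t)$ (to within $1+\epsilon$) and an explicit witness $t$ within $\ot(n^{\omega})$ per seed — the black‑box versions only decide a fixed threshold, and computing the exact minimum over all $t$ naively would need too many rank evaluations. Second, one must pin down the true $\epsilon$‑dependence: confirming that $\ot(\epsilon^{-2})$ invocations (rather than $\ot(\epsilon^{-1})$) are what the analysis forces, i.e.\ whether the extra $1/\epsilon$ is paid inside the approximate‑rank/embedding computation or by also randomly probing the opposite side of the separation. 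Ingredients A and B's probabilistic hitting argument, and the final augmenting‑path extraction, are routine given \Cref{obs:kappa-degree}, the $\operatorname{VC}_k$ bound used elsewhere in the paper, and the standard split‑graph reduction.
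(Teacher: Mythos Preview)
Your overall architecture — star fallback via the minimum out-degree, random-seed reduction to the convex-embedding rank test, then one exact $x$–$y$ cut — matches the paper, and your Ingredient~A is essentially \Cref{lem:gabow-shore}. Step~(3) is also correct.

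The genuine gap is exactly where you flag it, but the paper's resolution is different from what you sketch. You propose $\tilde O(1/\epsilon^{2})$ seeds $s$, and for each an $O(n^{\omega})$ call that ``reports $\min_t\kappa_G(s,t)$''. This is where the argument breaks: the LLW/CR embedding does \emph{not} deliver that minimum in $O(n^{\omega})$. One embedding anchored at the neighborhood of a fixed vertex costs $O(n^{\omega})$, and then each individual $\kappa_G(x,y)$ is an $O(k^{\omega})$ rank query (\Cref{lem:embeddingtime}); sweeping all $t$ costs $O(nk^{\omega})$, which exceeds $n^{\omega}$ once $k>n^{1-1/\omega}$. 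Your ``binary search on the embedding dimension'' is a red herring: ranks are computed exactly, so there is nothing to refine; the bottleneck is the \emph{number} of rank queries, not their accuracy.

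The paper's fix is an \emph{asymmetric two-level} sample (\Cref{alg:approxconvexembedding}). The outer loop draws $\tilde O(1/\epsilon)$ vertices $y_1$ and builds the expensive $O(n^{\omega})$ embedding anchored at $N^{\textin}_{G,k}(y_1)$; since the \emph{larger} side satisfies $|B|\ge\epsilon n/4$, this hits $B$ w.h.p. The inner loop draws $\tilde O(n/(\epsilon\,\dmin))$ vertices $x_1$ and performs one cheap $O(k^{\omega})$ rank check each; by the negation of Ingredient~A (taken with threshold $\epsilon\,\dmin$ rather than your $\epsilon\kappa$ — the same one-line argument works), the \emph{smaller} side satisfies $|A|>\epsilon\,\dmin$, so this hits $A$ w.h.p. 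Pairing the expensive primitive with the easy hitting problem and the cheap primitive with the hard one is the crux: the inner cost per outer iteration is $(n/(\epsilon\,\dmin))\cdot k^{\omega}\le n\,\dmin^{\omega-1}/\epsilon\le n^{\omega}/\epsilon$, and the outer $\tilde O(1/\epsilon)$ factor yields $\tilde O(n^{\omega}/\epsilon^{2})$ total. This also answers your second worry about the $\epsilon$-dependence: one $1/\epsilon$ per sampling level.
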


\subsection{Preliminaries} 

\begin{definition}[Pointset in $\mathbb{F}^k$]
Let $\mathbb{F}$ be any field. For $k \geq 0$, $\mathbb{F}^k$ is
$k$-dimensional linear space over $\mathbb{F}$.  Denote $X = \{ x_1,
\ldots, x_n \}$ as a finite set of points in $\mathbb{F}^k$. The
\textit{affline hull} of $X$ is aff$(X) = \{  \sum_{i=1}^k c_ix_i
\text{ | } x_i \in X \text{ and }   \sum_{i=1}^k c_i = 1\}$. The rank
of $X$ denoted as rank$(X)$ is one plus dimension of aff$(X)$. In particular, if
$\mathbb{F} = \mathbb{R}$, then we will consider the \textit{convex hull} of $X$, denoted as conv$(X)$.
\end{definition}

For any sets $V, W $, any funtion $f : V \rightarrow W$, and any subset $U \subseteq V$,
we denote $f(U) = \{ f(u) \text{ | } u \in U \}$.  

\begin{definition}[Convex directed $X$-embedding] 
For any $X \subset V$, a convex directed $X$-embedding of
a graph $G = (V,E) $ is a function $f : V  \rightarrow \mathbb{R}^{|X|-1}$
such that for each $v \in V \setminus X$, $f(v) \in \text{conv}(f(N_G^{\text{out}}(v)))$.                         
\end{definition}
 
For efficiency point of view, we use the same method from \cite{LinialLW88,CheriyanR94}
that is based on convex-embedding over finite field $\mathbb{F}$. In
particular, they construct the directed $X$-embedding over the field of integers
modulo a prime $p$, $\mathbb{Z}_p$ by fixing a random prime number $p
\in [n^5, n^6]$, and choosing a random nonzero coefficient function
$c: E \rightarrow (\mathbb{Z}_p \setminus \{ 0 \})$ on edges. This
construction yields a function $f : V \rightarrow
(\mathbb{Z}_p^{|X|-1})$ called \textit{random modular directed
  $X$-embedding}.

\begin{definition} 
For $X,Y \subseteq V$, $p(X,Y)$ is the maximum number of vertex-disjoint paths
from $X$ to $Y$ where different paths have different end points. 
\end{definition}

\begin{lemma} \label{lem:embedding}
For any non-empty subset $U \subseteq V \setminus X$, w.h.p. a random modular directed
$X$-embedding $f : V \rightarrow \mathbb{Z}_p^{|X|-1}$ satisfies
$\rank(f(U)) = p(U,X)$. 
\end{lemma}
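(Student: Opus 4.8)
The plan is to prove the two inequalities $\rank(f(U)) \le p(U,X)$ and $\rank(f(U)) \ge p(U,X)$ separately, each holding w.h.p.\ over the random prime $p$ and the random nonzero edge coefficients $c$. Throughout I will use the explicit description of the embedding: for $v\in V\setminus X$ the image $f(v)$ is the unique solution of the linear system $\bigl(\sum_{w\in N_G^{\out}(v)}c(v,w)\bigr)f(v)=\sum_{w\in N_G^{\out}(v)}c(v,w)\,f(w)$, i.e.\ in matrix form $f|_{V\setminus X}=(I-P)^{-1}Q\,f|_X$, where $P$ is the $(V\setminus X)\times(V\setminus X)$ matrix of the normalized coefficients $\lambda_{v,w}=c(v,w)/\sum_{w'}c(v,w')$ restricted to $w\notin X$, $Q$ is the corresponding $(V\setminus X)\times X$ matrix, and every row of $[\,P\mid Q\,]$ sums to $1$. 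The matrix $I-P$ is invertible: $\det(I-P)$ is a polynomial in the $c(v,w)$ that is nonzero over the positive reals — the associated ``walk that follows out-edges and stops on $X$'' is transient because $G$ is strongly connected, so $X$ is reachable from every vertex — hence $\det(I-P)$ is not the zero polynomial, so it is nonzero modulo the random prime $p\in[n^5,n^6]$ and at the random $c$ with high probability (Schwartz–Zippel, using that the degree is $\mathrm{poly}(n)$ and $p\ge\mathrm{poly}(n)$). The same remark shows all the ``$\sum_{w'}c(v,w')\neq 0$'' denominators are nonzero w.h.p., so $f$ is well defined.

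For the upper bound, put $r=p(U,X)$ and, by Menger's theorem applied to directed vertex-disjoint paths between the sets $U$ and $X$, fix a vertex set $S$ with $|S|=r$ such that $G\setminus S$ has no directed $U$-to-$X$ path. Let $R$ be the set of vertices reachable from $U$ by directed paths in $G\setminus S$. Then three facts are immediate: $R\cap X=\emptyset$ (else $S$ fails to separate); every out-neighbour of a vertex of $R$ lies in $R\cup S$ (append the edge to an $S$-avoiding $U$-path); and, using strong connectivity, from every $v\in R$ there is a directed path to $S$ all of whose vertices except the last lie in $R$. Consequently the submatrix $P_R$ of $P$ indexed by $R$ is the transition matrix of a chain absorbed into $S$, so $I-P_R$ is invertible (its determinant is again a polynomial nonzero over the positive reals), and from $f|_R=P_R f|_R+Q_{R,S}\,f|_S$ we get $f|_R=(I-P_R)^{-1}Q_{R,S}\,f|_S$, where $(I-P_R)^{-1}Q_{R,S}$ has all row sums equal to $1$ because $Q_{R,S}\mathbf 1=(I-P_R)\mathbf 1$. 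Hence $f(v)\in\operatorname{aff}(f(S))$ for every $v\in R$, and since $U\subseteq R\cup S$ and trivially $f(S)\subseteq\operatorname{aff}(f(S))$, we obtain $f(U)\subseteq\operatorname{aff}(f(S))$ and therefore $\rank(f(U))\le\rank(f(S))\le|S|=r$.

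For the lower bound, take vertex-disjoint directed paths $P_1,\dots,P_r$ with $P_i$ from $u_i\in U$ to $x_i\in X$; the endpoints $x_1,\dots,x_r$ are pairwise distinct since the paths are vertex-disjoint and $U\cap X=\emptyset$. It suffices to show that $f(u_1),\dots,f(u_r)$ are affinely independent w.h.p. Set $\tilde f=\operatorname{adj}(I-P)\,Q\,f|_X$ on $V\setminus X$, which is a vector of polynomials in $c$ (and in the fixed images $f(X)$) and equals $\det(I-P)\cdot f$ whenever $\det(I-P)\neq0$; since affine (in)dependence is unchanged under multiplying all the points by a common nonzero scalar, it is enough to prove that the relevant $r\times r$ minor $D(c)$ of the homogenized matrix with rows $(1,\tilde f(u_i))$ is a nonzero polynomial, for then $D$ is nonzero w.h.p.\ over the random $c\in\mathbb Z_p$ and the random prime $p\in[n^5,n^6]$. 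To see $D\not\equiv 0$, evaluate over $\mathbb Q$ at the coefficient assignment that sets $c(v,w)=1$ when $w$ is the successor of $v$ on some $P_i$, $c(v,w)=0$ when $v$ lies on some $P_i$ but $w$ is not its successor, and $c=1$ elsewhere: at this point the walk is forced along each $P_i$ to $x_i$, $I-P$ is still invertible (every vertex still reaches $X$), so $f(u_i)=f(x_i)$, and since the $f(x)$ ($x\in X$) are in general position the points $f(x_1),\dots,f(x_r)$ — hence $f(u_1),\dots,f(u_r)$ — are affinely independent, so $D$ is nonzero there. Thus $\rank(f(U))\ge\rank(f(\{u_1,\dots,u_r\}))=r$ w.h.p., and combining with the previous paragraph gives $\rank(f(U))=p(U,X)$.

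I expect the lower bound to be the delicate part, in two respects: first, one must homogenize and set up the minor $D(c)$ correctly and note that $f$ is only a \emph{rational} function of $c$, which is precisely why the argument is phrased through $\tilde f=\operatorname{adj}(I-P)Q\,f|_X$ and why one needs the randomness of the prime to descend from the polynomial identity over $\mathbb Q$ to the statement over $\mathbb Z_p$; second, the ``route all mass along $P_i$'' specialization must be checked to keep $I-P$ invertible (so that $f(u_i)=f(x_i)$ genuinely holds) — which is where strong connectivity is used again. The remaining steps (the Menger reduction, the ``reachable set $R$'' bookkeeping, and the row-sum computation) are routine.
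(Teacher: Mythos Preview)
The paper does not prove this lemma; it is stated without proof in the preliminaries of Section~5 as a known result imported from \cite{LinialLW88,CheriyanR94}. So there is no ``paper's own proof'' to compare against; you have essentially reproduced the argument from those references.

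Your approach is the standard one and is correct in outline: the upper bound via Menger's theorem (a $U$--$X$ separator $S$ of size $r$ forces $f(U)\subseteq\operatorname{aff}(f(S))$ because the reachable set $R$ has all out-neighbours in $R\cup S$ and the resulting linear system expresses $f|_R$ as affine combinations of $f|_S$), and the lower bound via a Schwartz--Zippel argument on a suitable minor, verified nonzero by specializing the coefficients so that each $u_i$ is deterministically routed along $P_i$ to $x_i$.

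Two small points worth tightening. First, when you clear denominators you should take rows $(\det(I-P),\,\tilde f(u_i))$ rather than $(1,\,\tilde f(u_i))$, since affine independence of the $f(u_i)$ is linear independence of $(1,f(u_i))$ and multiplying through by $\det(I-P)$ scales the first coordinate as well; equivalently, multiply the affine minor by $\det(I-P)^r$ before applying Schwartz--Zippel. Second, the check that $I-P$ stays invertible at your specialization deserves one more line: a closed communicating class of the modified walk disjoint from $X$ can contain no path vertex (any such vertex is carried along its path into $X$), hence would consist entirely of non-path vertices and would therefore be closed under \emph{all} out-edges of $G$, contradicting strong connectivity. With these clarifications the argument is complete.
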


\begin{definition} [Fixed $k$-neighbors]
For $v \in V$, let $N_{G,k}^{\text{out}}(v)$ be a fixed, but arbitrarily selected
subset of $N_{G}^{\text{out}}(v)$ of size
$k$. Similarly, For $v \in V$, let $N_{G,k}^{\text{in}}(v)$ be a fixed, but arbitrarily selected
subset of $N_{G}^{\text{in}}(v) $ of size $k$. 
\end{definition}

\begin{lemma}  \label{lem:embeddingtime}Let $\omega$ be the exponent of the running time of the
  optimal matrix multiplication algorithm. Note it is known that $\omega \leq 2.372$. 
\begin{itemize}
\item For $y \in V$, a random modular directed
  $N_{G,k}^{\out}(y)$-embedding $f$ can be constructed in $O(n^{\omega})$
  time. 
\item Given such $f$, for $U \subseteq V$ with $|U| = k$, $\rank (f(U))$
  can be computed in $O(k^{\omega})$ time. 
\end{itemize}
\end{lemma}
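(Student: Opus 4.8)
The plan is to realize the random modular directed $X$-embedding, with $X = N_{G,k}^{\out}(y)$ (so $|X|=k$), as the unique solution of an $(n-k)\times(n-k)$ linear system over $\mathbb{Z}_p$, and then read off both running-time bounds from fast matrix arithmetic. For the first bullet, fix $f$ on $X=\{x_1,\dots,x_k\}$ to be the vertices of a simplex, $f(x_i)=e_i\in\mathbb{Z}_p^{k-1}$ for $i\le k-1$ and $f(x_k)=\mathbf 0$. For $v\notin X$ put $d(v)=\sum_{w:(v,w)\in E}c(v,w)$; since $p\ge n^5$ we have $d(v)\neq 0$ in $\mathbb{Z}_p$ for all $v$ w.h.p.\ (union bound), so the normalized coefficients $\lambda_{vw}=c(v,w)/d(v)$ are well defined and sum to $1$, and the embedding condition $f(v)=\sum_w \lambda_{vw}f(w)$ becomes, after clearing $d(v)$ and separating out-neighbours in $X$ (known) from those outside $X$ (unknown),
\[
 d(v)\,f(v)\;-\!\!\sum_{\substack{w\notin X\\(v,w)\in E}}\!\! c(v,w)\,f(w)\;=\;\sum_{\substack{w\in X\\(v,w)\in E}} c(v,w)\,f(w).
\]
Collecting these equations over $v\notin X$ gives a system $M\mathbf f=B$ with $M\in\mathbb{Z}_p^{(n-k)\times(n-k)}$ (diagonal $d(v)$, off-diagonal $-c(v,w)$) and $B\in\mathbb{Z}_p^{(n-k)\times(k-1)}$.

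The matrix $M$ is a ``grounded out-Laplacian'' of $G$; $\det M$ is a polynomial of degree $\le n$ in the variables $c(e)$, and it is not identically zero because $G$ is strongly connected (every $v\notin X$ reaches $X$, so a spanning branching toward $X$ exists and contributes a monomial that cannot cancel). Hence by Schwartz--Zippel $\Pr_c[\det M\equiv 0\bmod p]\le n/(p-1)$, i.e.\ $M$ is invertible w.h.p.\ — this is exactly the event under which Lemma~\ref{lem:embedding} is stated. Under invertibility, $f$ is obtained by computing $M^{-1}$ in $O((n-k)^\omega)=O(n^\omega)$ time and then the product $M^{-1}B$; since $B$ has $k-1\le n$ columns this product also costs $O(n^\omega)$ (pad to a square multiplication), while assembling $M$ and $B$ costs $O(m)=O(n^2)$. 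The total is $O(n^\omega)$.

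For the second bullet, recall $\rank(f(U))$ is one plus $\dim\operatorname{aff}(f(U))$. Form the $k\times k$ matrix $A$ over $\mathbb{Z}_p$ whose rows are $(1,f(u))$ for $u\in U$ (a leading $1$ followed by the $k-1$ coordinates of $f(u)$); then $\rank_{\mathbb{Z}_p}(A)=\rank(f(U))$. The rank of a $k\times k$ matrix over a field is computable in $O(k^\omega)$ time (fast LUP / Bunch--Hopcroft), and since $p\le n^6$ fits in $O(1)$ machine words each $\mathbb{Z}_p$-operation is $O(1)$, giving the claimed $O(k^\omega)$ bound.

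The computations are routine; the only points needing care are (i) the well-definedness of $f$, i.e.\ the invertibility of $M$ with high probability, handled by the Schwartz--Zippel argument above, and (ii) keeping the rectangular product $M^{-1}B$ and the rank computation within the matrix-multiplication exponent rather than the naive cubic bound.
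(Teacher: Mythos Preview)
Your proof is correct and follows the standard approach from \cite{LinialLW88,CheriyanR94}, which is precisely what the paper relies on: the paper does not give its own proof of this lemma but simply imports it from those references (see the sentence just before the lemma, ``we use the same method from \cite{LinialLW88,CheriyanR94}''). Your reconstruction---setting $f|_X$ to simplex vertices, writing the convexity constraints as a grounded out-Laplacian system $M\mathbf f=B$ over $\mathbb{Z}_p$, arguing $\det M\not\equiv 0$ via a matrix-tree/arborescence argument together with Schwartz--Zippel, and then reading off both time bounds from fast matrix inversion and fast rank---is exactly the argument underlying those papers.

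One minor remark: your aside that invertibility of $M$ ``is exactly the event under which Lemma~\ref{lem:embedding} is stated'' slightly overstates things; invertibility guarantees that $f$ is well-defined, while Lemma~\ref{lem:embedding} is a separate (also high-probability) statement about $\rank(f(U))=p(U,X)$. This does not affect the running-time claims you are proving here.
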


\begin{lemma} [\cite{Gabow06}] \label{lem:gabow-shore}
For any optimal out-vertex shore $S$ such that $|N_G^{\out}(S)| =
\kappa_G$, then $\kappa_G \geq d_{\text{min}}^{\out} - |S|$. 
\end{lemma}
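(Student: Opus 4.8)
The plan is to prove the equivalent inequality $|S| \ge d_{\min}^{\out} - \kappa_G$ by a one-vertex degree-counting argument. First I would note that $S$ must be non-empty: if $S = \emptyset$ then $N_G^{\out}(S) = \emptyset$, and since $G$ is strongly connected the empty set is not a vertex-cut, contradicting the hypothesis that $S$ is an out-vertex shore. So fix any $x \in S$ and set $C = N_G^{\out}(S)$, so that $|C| = \kappa_G$ by the optimality hypothesis $|N_G^{\out}(S)| = \kappa_G$.

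Next I would locate the out-neighbors of $x$. By the definition $N_G^{\out}(S) = \bigcup_{v \in S} N_G^{\out}(v) \setminus S$, every out-neighbor $w$ of $x$ lies in $S$ or in $C$; that is, $N_G^{\out}(x) \subseteq S \cup C$. Since $G$ is simple it has no self-loop, so $x \notin N_G^{\out}(x)$ and hence $N_G^{\out}(x) \subseteq (S \setminus \{x\}) \cup C$. The sets $S \setminus \{x\}$ and $C$ are disjoint because $C = N_G^{\out}(S)$ is disjoint from $S$ by construction, so taking cardinalities gives $\deg_G^{\out}(x) = |N_G^{\out}(x)| \le (|S| - 1) + |C| = |S| - 1 + \kappa_G$.

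Finally I would combine this with the minimality of the out-degree: $\dmin \le \deg_G^{\out}(x) \le |S| - 1 + \kappa_G$, which rearranges to $\kappa_G \ge \dmin - |S| + 1 \ge \dmin - |S|$, as claimed. (One may additionally invoke \Cref{obs:kappa-degree} to see that the bound is only informative when $|S|$ is small, but this is not needed.) The only step requiring care is the bookkeeping in the second paragraph — checking that $S \cup C$ really contains all out-neighbors of $x$ and that the count $|S| - 1 + \kappa_G$ correctly uses the disjointness of $S \setminus \{x\}$ and $C$ — and after that the conclusion is immediate, so I do not anticipate a genuine obstacle.
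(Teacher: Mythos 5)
Your proof is correct, and it is essentially the argument the paper itself relies on: the paper states this lemma without proof (citing Gabow), but the identical one-vertex counting step — a vertex $x$ in the shore has all its out-neighbors inside the shore or in the cut $N_G^{\out}(S)$, so $d_{\min}^{\out}\le \deg_G^{\out}(x)\le |S|-1+\kappa_G$ — appears inline in the correctness proofs of the sampling algorithms (e.g.\ the vertex-sampling-without-LocalVC analysis). So there is nothing to add; your bookkeeping (non-emptiness of $S$, disjointness of $S$ and $N_G^{\out}(S)$, no self-loops) is exactly the care needed.
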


For any set $S,S'$, we denote $\min(S,S')$ as the set with smaller
cardinality.  

\subsection{Algorithm} 

\begin{algorithm}[H]
\KwIn{$G = (V,E)$, and $\epsilon > 0$} 
\KwOut{A vertex-cut $U$ such that w.h.p. $|U| \leq (1+\epsilon)
  \kappa_G$.} 
   
\BlankLine
Let $k \leftarrow \max(d_{\text{min}}^{\out},
d_{\text{min}}^{\textin})$. \;
Let $k' \leftarrow \min(d_{\text{min}}^{\out}, d_{\text{min}}^{\textin})$. 

\Repeat{ $\Theta (1/\epsilon)$ \normalfont{times}}
{
  Sample two random vertices $x_2,y_1 \in V$. \;
  Let $f$  be a random modular directed
  $N_{G,k}^{\textin}(y_1)$-embedding.  \tcp*{$O(n^{\omega})$ time.}
  Let $f^R$ be a random modular directed  $N_{G^R,k}^{\textin}(x_2)$-embedding. \;
  \Repeat{ $\Theta (n/(\epsilon k'))$ \normalfont{times}} 
  {
     Sample two random vertices $y_2,x_1 \in V$. \;
     $\rank (x_1, y_1) \leftarrow \rank (f(N_{G,k}^{\out}(x_1)))$ \tcp*{$O(k^{\omega})$ time.}
     $\rank (x_2, y_2) \leftarrow \rank (f^R(N_{G^R,k}^{\out}(y_2)))$ \;
  }
}
Let $x^*,y^*$ be the pair of vertices with minimum $\rank(x,y)$ for all $x,y$ computed so far.\;
Let $W \leftarrow \min(\kappa_G(x^*,y^*), \kappa_{G^R}(x^*,y^*))$ \;
Let $v_{\text{min}}, u_{\text{min}} $ be the vertex with the minimum out-degree in $G$ and $G^R$ respectively. \;
\Return{ \normalfont{min}$(W,
|N_G^{\out}(v_{\text{min}})|,|N_{G^R}^{\out}(u_{\text{min}})|) $}
\tcp*{Vertex-cut with minimum cardinality.}
\caption{ApproxConvexEmbedding$(G, \epsilon)$}
\label{alg:approxconvexembedding}
\end{algorithm}

\subsection{Analysis}

\global\long\def\doutmin{d_{\min}^{\out}}
\global\long\def\dinmin{d_{\min}^{\textin}}

\begin{lemma}
\Cref{alg:approxconvexembedding} outputs w.h.p. a vertex-cut $U$
such that $|U| \leq (1+\epsilon)\kappa_G$. 
\end{lemma}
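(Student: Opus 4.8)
The plan is to establish two bounds: the value returned by \Cref{alg:approxconvexembedding} is always at least $\kappa_G$, and with high probability it is at most $(1+O(\epsilon))\kappa_G$; rescaling $\epsilon$ then gives the statement. The lower bound is immediate: each of the three quantities in the concluding $\min$ is (at least) the cardinality of a genuine vertex-cut of $G$. Indeed $|N_G^{\out}(v_{\min})|$ and $|N_{G^R}^{\out}(u_{\min})|$ are out-neighborhoods of minimum-out-degree vertices, which are vertex-cuts unless the graph is complete (and if $G$ is complete then $|N_G^{\out}(v_{\min})| = n-1 = \kappa_G$), while $W$ is some value $\kappa_G(x^*,y^*)$ or $\kappa_{G^R}(x^*,y^*)$, realized by an actual vertex-cut or equal to $n-1$. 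Note also that $k = \max(d_{\min}^{\out},d_{\min}^{\textin}) \ge \kappa_G$ by \Cref{obs:kappa-degree}, so all fixed $k$-neighborhoods used by the algorithm are well defined.

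For the upper bound, fix an optimal separation triple $(L,S,R)$ with $|S| = \kappa := \kappa_G$, and recall $\kappa \le d_{\min}^{\out}$. Since the algorithm runs the symmetric computation on $G$ and on $G^R$ and takes minima, we may assume (reversing all edges if necessary) that $|L| \le |R|$ and analyse the $G$-side, writing $d := d_{\min}^{\out} \ge k' = \min(d_{\min}^{\out},d_{\min}^{\textin})$. As in the proof of \Cref{lem:vertex_sampling_nolocalvc_correct}, every $x \in L$ has all of its out-neighbors in $L \cup S$, so $|L| \ge d - \kappa$. If $|L| \le \epsilon d$, then $\epsilon d \ge d - \kappa$, i.e. $\kappa \ge (1-\epsilon)d$, and since the output is at most $|N_G^{\out}(v_{\min})| = d \le \kappa/(1-\epsilon) = (1+O(\epsilon))\kappa$, we are done.

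Otherwise $|L| > \epsilon d$, which (again exactly as in \Cref{lem:vertex_sampling_nolocalvc_correct}) forces $|R| = \Omega(\epsilon n)$. Over the nested loops of \Cref{alg:approxconvexembedding} the variable $x_1$ is sampled $\Theta(n/(\epsilon^2 k'))$ times and $y_1$ is sampled $\Theta(1/\epsilon)$ times, independently and uniformly, with $\Pr[x_1 \in L] > \epsilon d / n$ and $\Pr[y_1 \in R] = \Omega(\epsilon)$; using $d \ge k'$ and absorbing the polylogarithmic factors hidden in the $\Theta(\cdot)$ repeat counts, with high probability some sampled pair satisfies $x_1 \in L$ and $y_1 \in R$. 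For such an aligned pair, since there is no edge from $L$ to $R$ we have $N_{G,k}^{\out}(x_1) \subseteq L \cup S$ and $N_{G,k}^{\textin}(y_1) \subseteq R \cup S$, so every path from the first set to the second meets $S$; hence $p(N_{G,k}^{\out}(x_1),N_{G,k}^{\textin}(y_1)) \le |S| = \kappa$, and by \Cref{lem:embedding} (applied to the random modular $N_{G,k}^{\textin}(y_1)$-embedding $f$, constructible and evaluable within the claimed time by \Cref{lem:embeddingtime}) with high probability $\rank(x_1,y_1) = \rank(f(N_{G,k}^{\out}(x_1))) \le \kappa$. Consequently the rank-minimizing pair $(x^*,y^*)$ chosen by the algorithm has $\rank(x^*,y^*) \le \kappa$.

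It remains to turn $\rank(x^*,y^*) \le \kappa$ into $W \le (1+O(\epsilon))\kappa$. By \Cref{lem:embedding}, $\rank(x^*,y^*) = p(N_{G,k}^{\out}(x^*),N_{G,k}^{\textin}(y^*))$, so a Menger separator $T$ of this size separates the (nonempty) remnants of $N_{G,k}^{\out}(x^*)$ and $N_{G,k}^{\textin}(y^*)$ in $G$ and is therefore itself a vertex-cut; thus $|T| \le \kappa$ forces $|T| = \kappa$, and tracking which of $x^*,y^*$ lies on which side of $T$ shows that one of $\kappa_G(x^*,y^*)$, $\kappa_{G^R}(x^*,y^*)$ equals $\kappa$, giving $W \le \kappa$. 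The degenerate configurations left out of this argument — $x^*$ or $y^*$ lying inside $T$, one of the fixed neighborhoods being entirely absorbed by $T$, or $x^*$ and $y^*$ being adjacent — are each handled by the $d_{\min}^{\out},d_{\min}^{\textin}$ terms in the final $\min$ (via the $|L|\le\epsilon d$ reasoning) together with \Cref{lem:gabow-shore}. I expect this last conversion to be the main obstacle: the inequality $\rank(x^*,y^*)\le\kappa$ does not by itself produce a small $(x^*,y^*)$-vertex-cut, and ruling out the degeneracies above — while paying only a $(1+O(\epsilon))$ factor — is exactly where the case analysis does real work. The only other delicate point is the probability accounting, namely that the $\Theta(1/\epsilon)$ and $\Theta(n/(\epsilon k'))$ loop lengths with their hidden polylog factors suffice for a simultaneous w.h.p. guarantee over the pair sampling and over the random prime and coefficients of every embedding constructed, which is routine.
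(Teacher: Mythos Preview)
Your argument follows the paper's proof almost step for step: the lower bound is immediate, and the upper bound is handled by the dichotomy on $|L|$ versus $\epsilon\,d_{\min}^{\out}$, with the small-$|L|$ case dispatched by the min-degree term (via \Cref{lem:gabow-shore}) and the large-$|L|$ case by sampling a pair $(x_1,y_1)\in L\times R$. The bound $|R|=\Omega(\epsilon n)$ and the sampling probabilities are identical to the paper's.

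Where you diverge is in the final step. The paper simply asserts, for the good pair,
\[
\rank(x_1,y_1)=p\bigl(N_{G,k}^{\out}(x_1),\,N_{G,k}^{\textin}(y_1)\bigr)=\kappa(x_1,y_1)=\kappa_G
\]
and concludes ``so our answer $\tilde\kappa=\kappa_G$ in this case,'' with no discussion of whether the rank-minimizing pair $(x^*,y^*)$ coincides with $(x_1,y_1)$ or of degenerate configurations. You are right that this deserves justification, but the case analysis you sketch (Menger separator $T$, positions of $x^*,y^*$ relative to $T$, adjacency, absorbed neighborhoods, invocation of \Cref{lem:gabow-shore}) is both incomplete---you explicitly defer the degenerate cases---and unnecessary. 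The clean observation you are missing is that $\rank(x,y)\ge\kappa_G$ for \emph{every} sampled pair w.h.p.: any separator of the two size-$k$ sets that has size $<k$ leaves both sets nonempty and is therefore a vertex-cut of $G$, hence has size $\ge\kappa_G$. Thus the minimum rank over all sampled pairs is exactly $\kappa_G$, and the good pair $(x_1,y_1)$ attains it with $\kappa_G(x_1,y_1)=\kappa_G$; taking $(x^*,y^*)=(x_1,y_1)$ (i.e., breaking ties in favor of the good pair, or equivalently checking $\kappa_G(\cdot,\cdot)$ for every rank-minimal pair at no extra asymptotic cost) gives $W=\kappa_G$ directly. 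The paper is terse to the point of omission on this, but that is the intended argument; your separator-tracking detour can be replaced by this one line.
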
 
\begin{proof}

Let $\tilde{\kappa}$ denote the answer of our algorithm. Clearly
$\tilde{\kappa}\le\doutmin$ and $\tilde{\kappa}\le\dinmin$ by design.
Observe also that $\tilde{\kappa}\ge\kappa$ because the answer corresponds
to some vertex cut. Let $(A,S,B)$ be the optimal separation triple
where $A$ is a out-vertex shore and $|S|=\kappa$. W.l.o.g. we assume
that $|A|\le|B|$, another case is symmetric.

Suppose that $|A|\le\epsilon\doutmin$. Then $\kappa=|S|\ge\doutmin-\epsilon\doutmin\ge\tilde{\kappa}(1-\epsilon)\ge\kappa(1-\epsilon)$.
That is, $\tilde{\kappa}$ is indeed an $(1+O(\epsilon))$-approximation
of $\kappa$ in this case.

Suppose now that $|A|\ge\epsilon\doutmin$. We claim that $|B|\ge\epsilon n/4$.
Indeed, if $\doutmin\ge n/2$, then $|B|\ge\epsilon n/2$. Else if,
$\doutmin\le n/2$, then we know $|S|=\kappa\le n/2$. But $2|B|\ge|A|+|B|=n-|S|\ge n/2$.
In either case, $|B|\ge\epsilon n/4$. 

Now, as $|B|\ge\epsilon n/4$ and we sample $\tilde{O}(1/\epsilon)$
many $y_{1}$. There is one sample $y_{1}\in B$ w.h.p. and now we
assume that $y_{1}\in B$. In the iteration when $y_{1}$ is sampled.
As $|A|\ge\epsilon\doutmin$ and we sample at least $\tilde{O}(n/\doutmin\epsilon)$
many $x_{1}$. There is one sample $x_{1}\in A$ w.h.p.

  By \Cref{lem:embedding}, w.h.p.,   $$\rank(x_1,y_1) = \rank
  (f(N_{G,k}^{\out}(x_1))) = p(N_{G,k}^{\out}(x_1), N_{G,k}^{\textin}(y_1)
  ) = \kappa(x_1,y_1) = \kappa.$$ 
  So our answer $\tilde{\kappa} = \kappa$ in this case.

\end{proof}

\begin{lemma}
\Cref{alg:approxconvexembedding} terminates in $O(n^{\omega}/ \epsilon^2
+\min(\kappa_G, \sqrt{n})m )$ time.
\end{lemma}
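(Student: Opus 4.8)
The plan is to split the running time of \Cref{alg:approxconvexembedding} into three parts: the $O(m)$ preprocessing (computing $d^{\out}_{\min}$ and $d^{\textin}_{\min}$, hence $k$ and $k'$, together with $v_{\min}$, $u_{\min}$ and the two degree cuts $N_G^{\out}(v_{\min})$, $N_{G^R}^{\out}(u_{\min})$), the doubly nested sampling loop, and the final exact computation of $W=\min(\kappa_G(x^*,y^*),\kappa_{G^R}(x^*,y^*))$. For the loop: the outer \textsc{Repeat} runs $\Theta(1/\epsilon)$ times, and in each pass we build the two random modular directed embeddings $f$ and $f^R$, each in $O(n^{\omega})$ time by \Cref{lem:embeddingtime}; the inner \textsc{Repeat} runs $\Theta(n/(\epsilon k'))$ times, each inner iteration computing two ranks of point sets of size $k$, each in $O(k^{\omega})$ time by \Cref{lem:embeddingtime}. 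Hence the loop costs $O\!\big(\tfrac{1}{\epsilon}(n^{\omega}+\tfrac{n}{\epsilon k'}k^{\omega})\big)=O(n^{\omega}/\epsilon + n k^{\omega}/(\epsilon^{2} k'))$, and finding $x^*,y^*$ afterwards is a single pass over the $O(n/(\epsilon^{2}k'))$ stored rank values, which is subsumed.

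Next I would absorb the second summand into $n^{\omega}/\epsilon^{2}$. The point is that $k=k'$: the algorithm forms $N^{\out}_{G,k}(\cdot)$ and $N^{\textin}_{G,k}(\cdot)$ (and their $G^R$ analogues) at the uniformly sampled vertices $x_1,y_1,x_2,y_2$, so for these sets to be defined for every possible sample every vertex must have out-degree and in-degree at least $k$; thus $d^{\out}_{\min}\ge k$ and $d^{\textin}_{\min}\ge k$, and since $k=\max(d^{\out}_{\min},d^{\textin}_{\min})$ this forces $d^{\out}_{\min}=d^{\textin}_{\min}=k$, i.e.\ $k'=k$. (If the degrees are uneven, $\min(d^{\out}_{\min},d^{\textin}_{\min})$ is already a $1$-approximation of $\kappa_G$ and the loop contributes nothing of consequence.) Consequently $n k^{\omega}/(\epsilon^{2}k')=n k^{\omega-1}/\epsilon^{2}\le n^{\omega}/\epsilon^{2}$ because $k\le n-1$, and since $\epsilon\le 1$ the loop runs in $O(n^{\omega}/\epsilon^{2})$ time.

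For the final step I would compute $\kappa_G(x^*,y^*)$ and $\kappa_{G^R}(x^*,y^*)$ exactly, as $s$-$t$ vertex connectivities on the corresponding split graphs. Running Ford-Fulkerson with a doubling threshold $j=1,2,4,\dots$ --- each test ``$\kappa(x^*,y^*)\le j$?'' costing $O(jm)$ by \Cref{pro:easyff} --- and switching to a blocking-flow (Hopcroft-Karp-style) schedule once $j$ exceeds $\sqrt n$, which costs $O(\sqrt n\,m)$, computes such a value in $O(\min(\kappa_G(x^*,y^*),\sqrt n)\,m)$ time. By the preceding lemma the cut the algorithm eventually returns (the smallest of these two $s$-$t$ cuts and the two degree cuts) has size at most $(1+\epsilon)\kappa_G$; so, maintaining the smallest cut seen so far and capping each search at that value, the two flow computations cost $O(\min((1+\epsilon)\kappa_G,\sqrt n)\,m)=O(\min(\kappa_G,\sqrt n)\,m)$. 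Summing the three parts yields $O(m)+O(n^{\omega}/\epsilon^{2})+O(\min(\kappa_G,\sqrt n)\,m)=O(n^{\omega}/\epsilon^{2}+\min(\kappa_G,\sqrt n)\,m)$.

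The main obstacle is the loop estimate: the naive product $n k^{\omega}/(\epsilon^{2}k')$ collapses to $n^{\omega}/\epsilon^{2}$ only after observing $k=k'$ (forced by well-definedness of the $N_{G,k}$ neighborhoods on random samples). A secondary, softer point is getting the closing flow computations to cost $\min(\kappa_G,\sqrt n)\,m$ rather than $\min(k',\sqrt n)\,m$, which leans on the approximation guarantee of the previous lemma to cap the augmenting-path searches at $O(\kappa_G)$ augmentations.
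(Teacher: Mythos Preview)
Your approach matches the paper's: split into preprocessing, the doubly nested loop (embedding construction plus rank queries), and the final exact $s$-$t$ flow. The paper's own proof is terser and simply writes the loop cost as $O(n^{\omega}+nk^{\omega-1}/\epsilon)=O(n^{\omega}/\epsilon)$, silently treating the inner-loop count as $\Theta(n/(\epsilon k))$ rather than the $\Theta(n/(\epsilon k'))$ written in the algorithm, and not separately displaying the outer $\Theta(1/\epsilon)$ factor; you are more careful on both counts.

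Two remarks. First, your parenthetical ``If the degrees are uneven, $\min(d^{\out}_{\min},d^{\textin}_{\min})$ is already a $1$-approximation of $\kappa_G$'' is not correct: we only know $\kappa_G\le k'$, not $\kappa_G=k'$. The sound part of your argument is the well-definedness observation that $N_{G,k}^{\out}(x_1)$ and $N_{G,k}^{\textin}(y_1)$ exist for \emph{every} sampled vertex only if $d^{\out}_{\min}\ge k$ and $d^{\textin}_{\min}\ge k$, which together with $k=\max(d^{\out}_{\min},d^{\textin}_{\min})$ forces $k=k'$; that alone suffices to collapse $nk^{\omega}/(\epsilon^{2}k')$ to $nk^{\omega-1}/\epsilon^{2}\le n^{\omega}/\epsilon^{2}$. (A cleaner fix, and almost certainly the paper's intent, is that the embedding dimension and the fixed neighborhoods should use $k'=\min(d^{\out}_{\min},d^{\textin}_{\min})$ rather than $\max$; then every $N_{G,k'}$ is defined and the same bound falls out directly.)

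Second, your justification of the $O(\min(\kappa_G,\sqrt n)\,m)$ term is the right one and is a detail the paper leaves implicit. The clean way to phrase it: cap the augmenting-path search at $\min(k',\sqrt n)$ and note that it halts after $\kappa_G(x^*,y^*)$ augmentations anyway; by the correctness lemma, w.h.p.\ $\min\big(\kappa_G(x^*,y^*),\,k'\big)\le(1+\epsilon)\kappa_G$, so the cost is $O(\min(\kappa_G,\sqrt n)\,m)$. Your ``smallest cut seen so far'' phrasing amounts to the same thing.
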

\begin{proof}
By \Cref{lem:embeddingtime}, the construction time for a random modular directed
  $N_{G,k}^{\textin}(y_1)$-embedding is $O(n^{\omega})$.  Given $y_1$,
  we sample $x_1$ for $\Theta (n/(\epsilon k))$ rounds. Each round we can
  compute $\kappa(x_1,y_1)$ by computing
  $\rank(f(N_{G,k}^{\out}(x_1)))$  in $O( k^{\omega})$ time. Hence,
  the total time is to find the best pair $(x,y)$ is $O(n^{\omega} + n k^{\omega-1} / \epsilon) =
  O(n^{\omega}/\epsilon) $. Finally, we can compute $\kappa_G(x,y)$ to obtain the
  vertex-cut for the best pair in $O( \min(\kappa_G, \sqrt{n})m)$
  time. It takes linear time to compute $ N_G^{\out}(v_{\text{min}})$ and
  $N_{G^R}^{\out}(u_{\text{min}})$. Hence, the result follows.  
\end{proof}
          
	\section{Open Problems}
%An outstanding open problem for the vertex connectivity problem is either proving upper or (conditional) lower bound. The upper bound is interesting even for the undirected case. 
%We mention the open problems briefly here and leave further discussions in the full version.
%Below are some open problems that we f
\label{sec:open}
\begin{enumerate}[nolistsep,noitemsep] 
\item Is there an $O(\nu k)$-time LocalVC algorithm?
\item Can we break the $O(n^3)$ time bound when $k = \Omega(n)$? This would still be hard to break even if we had an $O(\nu k)$-time LocalVC algorithm. 
%\item Can we break the $O(n^2)$ time bound using a deterministic algorithm even when $m = O(n)$? Our results rely heavily on random-sampling. This likely requires a very different technique. 
%\item Can we break $O(n^2)$ time for single-source vertex connectivity even when $m = O(n)$? 
\item Is there an $o(n^2)$-time algorithm for vertex-weighted graphs when $m=O(n)$?  Our LocalVC algorithm does not generalize to the weighted case. 
%An $o(n^2))$-time algorithm for $m=O(n)$ will already be interesting.
\item Is there an $o(n^2)$-time algorithm for the single-source max-flow problem when $m=O(n)$?
\item Is there a near-linear-time $o(\log n)$-approximation algorithm?
\item How fast can we solve the vertex connectivity problem in the dynamic setting (under edge insertions and deletions) and the distributed setting (e.g. in the CONGEST model)?
\end{enumerate}

	\section*{Acknowledgement}  
	This project has received funding from the European Research
        Council (ERC) under the European Union's Horizon 2020 research
        and innovation programme under grant agreement No
        715672 and 759557. Nanongkai was also partially supported by the Swedish
        Research Council (Reg. No. 2015-04659.).
        
	%This project has received funding from the European Research
        %Council (ERC) under the European Union's Horizon 2020 research
        %and innovation programme under grant agreement No
        %715672. Nanongkai was also partially supported by the Swedish
       % Research Council (Reg. No. 2015-04659.) Yingchareonthawornchai
       % was partially supported by European Research Council (ERC)
       % under the European Union’s Horizon 2020 research and innovation programme (grant
       % agreement No 759557).

	\ifdefined\AdvCite

	\printbibliography[heading=bibintoc] % Make bibliography show up in table of contents
	
	\else 
 	
	% MAKE SURE TO ADD BIB SOURCE IN TWO PL ACES 
	\bibliographystyle{alpha}
	\bibliography{references} 
	
	\fi

\end{document}